\documentclass[a4paper,10pt]{article}

\usepackage[top=1in, bottom=1.25in, left=1.1in, right=1.1in]{geometry}
\setlength{\parskip}{5pt}

\usepackage{graphicx}
\usepackage{authblk}
\usepackage[round]{natbib}
\usepackage{hyperref}
\hypersetup{
     colorlinks   = true,
     citecolor    = blue
}
\usepackage{physics}
\usepackage{lipsum}	
\usepackage{amsthm}
\usepackage{booktabs} % for professional tables
\usepackage{caption}
\captionsetup{font=small}

%Font
%\usepackage{palatino}

% My packages
\usepackage{amssymb,amsfonts,amsthm}
\usepackage{mathtools}
\usepackage{tikz}
\usetikzlibrary{decorations.pathreplacing}
\usetikzlibrary{patterns}
\usepackage{pifont}% http://ctan.org/pkg/pifont
\usepackage{multirow}
\usepackage{algorithm}
\usepackage{algpseudocode}
\usepackage{hhline}
\usepackage{todonotes}

% Theorems stuff
\newtheorem{theorem}{Theorem}[section]
\newtheorem{lemma}[theorem]{Lemma}
\newtheorem{proposition}[theorem]{Proposition}

\newtheorem{definition}[theorem]{Definition}

\begin{document}

\title{Scalable Bayesian Inference for \\ Population Markov Jump Processes}
\author{Iker Perez}
\author{Theodore Kypraios}
\affil{School of Mathematical Sciences, University of Nottingham}
\date{\vspace{-10pt}}

\maketitle

\begin{abstract}
Bayesian inference for Markov jump processes (MJPs) where available observations relate to either system states or jumps typically relies on data-augmentation Markov Chain Monte Carlo. State-of-the-art developments involve representing MJP paths with auxiliary candidate jump times that are later thinned. However, these algorithms are i) unfeasible in situations involving large or infinite capacity systems and ii) not amenable for all observation types. In this paper we establish and present a general data-augmentation framework for population MJPs  based on uniformized representations of the underlying non-stationary jump processes. This leads to multiple novel MCMC samplers which enable \textit{exact} (in the Monte Carlo sense) inference tasks for model parameters. We show that proposed samplers outperform existing popular approaches, and offer substantial efficiency gains in applications to partially observed stochastic epidemics, immigration processes and predator-prey dynamical systems.
\end{abstract}

\section{Introduction} 

Population \textit{Markov jump processes} (MJPs) are stochastic processes whose dynamics underpin many observable phenomena, in diverse fields such as stochastic epidemic \citep{o1999bayesian}, immigration-death systems \citep{cappe2003reversible,zhang2017efficient}, chemical/molecular models \citep{hobolth2009,georgoulas2017unbiased} or queueing systems \citep{sutton2011,Perez2017}, to name only a few. In this paper, we present a novel and general framework for designing scalable auxiliary-variable data-augmentation algorithms, which allow for  \textit{exact} (in Monte Carlo sense) Bayesian inference for MJPs. In contrast to the state-of-the-art method of \citet{rao13a}, our framework provides us with a class of Markov chain Monte Carlo (MCMC) algorithms that are amenable to all relevant application fields, where observations may consist of either population counts or jumps. Furthermore, we present efficient algorithmic designs that address augmentation and inferential tasks in systems where the population is large. We demonstrate this by reporting on substantial efficiency and scalability gains, in application to partially-observed birth-death, (stochastic) epidemic  and predator-prey models. This general framework is built over \textit{uniformized} representations of non-stationary jump processes \cite[cf.][]{jensen1953markoff,van1992uniformization}, and we show that algorithms presented in  \cite{rao2012mcmc,rao13a} are derived as special cases.

\subsection{Jump processes}

An MJP is a pure-jump right-continuous stochastic process $X=(X_t)_{t\geq 0}$, such that time-indexed variables $X_t$ are defined within some measurable space $(\mathcal{S},\Sigma_\mathcal{S})$. Here, $\mathcal{S}$ is a countable set of states, and $\Sigma_\mathcal{S}$ stands for its power set. The process $X$ is governed by a sequence of \textit{intensity matrices} $\boldsymbol{Q}=\{Q(t) \, : \, t\geq 0\}$, indexed over time, so that
$$\mathbb{P}(X_{t+\mathrm{d}t}=x'|X_{t}=x) = \mathbb{I}_{(x=x')} + Q_{x,x'}(t)\mathrm{d}t + o(\mathrm{d}t)$$
for all $x,x'\in\mathcal{S}$ and $t\geq 0$; where $\mathbb{I}_{(\cdot)}$ defines a logical indicator function. Hence, elements of $Q(t)$ describe the rates for \textit{jumps} between states at time $t\geq 0$, and each $Q_{x,x'}(t)$, $x,x'\in\mathcal{S}$, is an \textit{intensity} function over time. Finally, $0\leq Q_{x,x'}(t)< \infty$ for all $x \neq x'$ and $Q_x(t) \coloneqq Q_{x,x}(t) = - \sum_{x'\in\mathcal{S}: x\neq x'} Q_{x,x'}(t)$ so that the various rows sum to $0$.

\noindent \textbf{Piecewise-constant representation.} An MJP is further characterized by a \textit{path} or \textit{trajectory} $(\boldsymbol{t},\boldsymbol{x})$, where $\boldsymbol{t}=\{t_0,\dots,t_{n}\}$ denotes a sequence of transition times, s.t. $t_0=0$, and $\boldsymbol{x}=\{x_0,\dots,x_{n}\}$ are the corresponding states. Over a time interval $[0,T]$, a process $X\equiv (\boldsymbol{t},\boldsymbol{x})$ is a random variable on a measurable space $(\mathcal{X},\Sigma_\mathcal{X})$ supporting finite $\mathcal{S}$-valued trajectories. On a basic level, $\mathcal{X}=\cup_{i=0}^\infty ([0,T]\times\mathcal{S})^i$ and the collection $\Sigma_\mathcal{X}$ stands for the corresponding union $\sigma$-algebra. This allows the assignment of a dominating base measure $\mu_\mathcal{X}$ w.r.t which define a trajectory density
\begin{align}
f_{X}(\boldsymbol{t},\boldsymbol{x}|\boldsymbol{Q})  & =  \pi(x_0) e^{\int_{t_n}^T Q_{x_n}(s)\mathrm{d}s} \prod_{i=1}^n Q_{x_{i-1},x_{i}}(t_i) e^{\int_{t_{i-1}}^{t_i}Q_{x_{i-1}}(s)\mathrm{d}s}, \label{pathProbs}
\end{align}
where $\pi(\cdot)$ is the distribution assigned (over $\mathcal{S}$) to the starting value. Noticeably, the time to departure or \textit{jump} from any state $x\in\mathcal{S}$, regardless of the destination, is driven by a density
\begin{align*}
f_{t_{i+1}}(t|x_i,t_i) = Q_{x_i}(t) e^{-\int_{t_i}^{t}Q_{x_i}(s)\mathrm{d}s}, \quad t>t_i, \, i=1,\dots,n-1.
\end{align*}
Thus, inter-arrival times in $\boldsymbol{t}$ are linked to diagonal elements of $\{Q(t) \, : \, t\geq 0\}$, and $Q_{x}(t+s)$, $s>0$ is often referred to as the \textit{hazard} function to the origin state $(t,x)\in[0,T]\times\mathcal{S}$. Finally, transitions in $\boldsymbol{x}$ are proportional to off-diagonal elements, s.t. $\mathbb{P}(x_{i+1}=x|x_i,t_i,t_{i+1})=Q_{x_i,x}(t_{i+1})/|Q_{x_i}(t_{i+1})|$. For details, we refer the reader to \cite{daley2007introduction}.

\noindent \textbf{Stationary models.} If $X$ is assumed to be a time-homogeneous process, ignoring seasonal effects and thus governed by a \textit{generator matrix} $Q(t)\equiv Q, t\geq 0$; then, inter-arrival times in $\boldsymbol{t}$ are exponentially distributed random variables and $(\boldsymbol{t},\boldsymbol{x})$ is a (Markov) renewal process.

\subsection{Population models and Bayesian inferential tasks}

Throughout this paper, a Markov \textit{population model} is represented by a non-stationary MJP whose support space $\mathcal{S}$ is \textit{countable} and possibly infinite. Matrices $Q(t), t\geq 0$ are assumed to be \textit{sparse} and parametrized by some arbitrary vector of independent rates $\boldsymbol{\lambda}$, which scale along with levels of populations in $X$. An upper-bound over a sequence of matrices $\boldsymbol{Q}\equiv\boldsymbol{Q}(\boldsymbol{\lambda})$ may take extraordinarily large values. 

\noindent \textbf{Bayesian inferential task.} Let $\boldsymbol{O}=\{O_r\}_{r\geq 1}$ denote some observations at arbitrary (ordered) time points $t_r\in[0,T]$, $r\geq 1$, which relate to a population model realization $X$ with unknown matrices $\boldsymbol{Q}(\boldsymbol{\lambda})$. The basis for inference on the (unknown) vector $\boldsymbol{\lambda}$ is a density or mass $\mathcal{L}(\boldsymbol{O}|X)$ for the observation model; and posterior rate densities are proportional to an infinite weighted product of MJP path densities $X$ in \eqref{pathProbs}, i.e.
\begin{align}
f_{\boldsymbol{Q}}(\boldsymbol{\lambda}|\boldsymbol{O}) \propto f_{\boldsymbol{Q}}(\boldsymbol{\lambda})  \cdot \int_{\mathcal{X}} \mathcal{L}(\boldsymbol{O}|\boldsymbol{t},\boldsymbol{x})\, f_{X}(\boldsymbol{t},\boldsymbol{x}|\boldsymbol{Q}(\boldsymbol{\lambda}))\, \mu_{\mathcal{X}}(d\boldsymbol{t},d\boldsymbol{x}), \label{intractLikel}
\end{align}
where $f_{\boldsymbol{Q}}(\boldsymbol{\lambda})$ defines a prior over the rates. This is an analytically, and often computationally, intractable expression. It is hard to design a generic framework to perform {\em exact} Monte Carlo inference, yet remain adaptable to any type of jump process $X$ and observation model $\mathcal{L}(\boldsymbol{O}|X)$. Consequently, many solutions either focus on approximate inferential methods, or are limited to homogeneous systems and address constrained biological models where population measurements must be subject to observation noise. Such approaches can lead to computationally efficient methods by relying on simplifying independence assumptions \cite{opper2008variational}, diffusions with continuous support \cite{golightly2015bayesian} or linear noise approximations \cite{golightly2018efficient}. 

\subsection{Exact inference and Markov chain Monte Carlo}

Exact inference often proceeds by MCMC, and alternates sampling between the latent process $(\boldsymbol{t},\boldsymbol{x})$ and rates $\boldsymbol{\lambda}$. Thus, it is concerned with the joint density $f_{\boldsymbol{Q},X}(\boldsymbol{\lambda},\boldsymbol{t},\boldsymbol{x}|\boldsymbol{O})$, and entails data augmentation procedures from a conditional
\begin{align}
f_{X}(\boldsymbol{t},\boldsymbol{x}|\boldsymbol{\lambda},\boldsymbol{O})  \propto \mathcal{L}(\boldsymbol{O}|\boldsymbol{t},\boldsymbol{x}) f_{X}(\boldsymbol{t},\boldsymbol{x}|\boldsymbol{Q}(\boldsymbol{\lambda})) , \label{targetdens}
\end{align}
which may take multiple forms based on observation model dependencies for $\boldsymbol{O}|\boldsymbol{t},\boldsymbol{x}$. In every instance, sampling a trajectory from \eqref{targetdens} brings about substantial tractability challenges; there can exist infinitely many jumps and we require to explore transitions across large or infinite subsets of $\mathcal{S}$. To allow for a generic adaptable algorithm design, sampling commonly proceeds by means of blocked (Poisson) \textit{thinning} procedures, in summary 
\begin{itemize}
\item a set of \textit{candidate} jump times $\hat{\boldsymbol{t}}|\boldsymbol{O},\boldsymbol{Q}(\boldsymbol{\lambda})$ is first produced, with some \textit{conditional} intensity process $\Omega(t), t\geq 0$, and s.t.  every $\Omega(t)$ \textit{dominates} all diagonal elements of $Q(t)$,
\item an \textit{augmented} sequence $\hat{\boldsymbol{x}}|\boldsymbol{t},\boldsymbol{O},\boldsymbol{Q}(\boldsymbol{\lambda})$ is sampled from an appropriate forward-backward algorithm; this must allow for self transitions and thus \textit{thin} a portion of candidate jump times.
\end{itemize}  
Within time-homogeneous jump systems, such procedures may be supported on matrix exponential representations for transition probabilities \citep[see][]{fearnhead2006exact} or, ideally, built over \textit{uniformization} alternatives and the seminal contributions of \cite{hobolth2009, rao13a}. In broader settings parametrized by \textit{hazard} functions, \textit{dependent thinning} alternatives \cite[see][]{rao2012mcmc,miasojedow2017geometric} offer the only computationally feasible approach. Overall, data augmentation procedures in all the above instances are rigid, designed with small MJP systems in mind and only accommodate restrictive observation models suitable to few applications. Importantly, they often do not work (or do not scale) for the analysis of population models, where transition rates scale quadratically through interactions of marginal counts, observations are often a consequence of system jumps and unbounded populations are the norm. 

\noindent\textbf{Recent developments.} Current alternatives sit on top of the aforementioned benchmark algorithms, and are limited to addressing considerations of state-space explosions for stationary systems. In order to preserve asymptotic exactness, without imposing artificial bounds on population levels, sequential particle procedures may be used to target sequences of states in $\hat{\boldsymbol{x}}$ \citep{miasojedow2015particle} (subject to particle degeneracy), or arbitrary random truncations imposed over explorable spaces of paths \citep{georgoulas2017unbiased} (requiring costly \textit{Metropolis-Hastings} (M-H) acceptance steps to overcome induced bias). Most recent advances towards efficient algorithmic constructions \citep[see][]{zhang2017efficient} involve updating parameters $\boldsymbol{\lambda}$ \textit{within} forward-backward procedures for $(\hat{\boldsymbol{t}},\hat{\boldsymbol{x}})$, which works reportedly well with small MJP systems.

\subsection{Summary of contributions} \label{contributions}

In this work, we present a novel auxiliary-variable framework leading to data-augmentation techniques for conditional population model trajectories $(\boldsymbol{t},\boldsymbol{x})$ in \eqref{targetdens}. This will yield to computationally tractable joint distributions across both target and auxiliary variables, and readily lead to \textit{Gibbs}-like procedures satisfying detailed balance \cite[see][]{HigdonAux}. Hence, we further construct new MCMC samplers adaptable to popular Bayesian inferential tasks; in Figure \ref{fig:contributions} we summarize efficiency results that compare these to existing benchmark methods, in application to birth-death processes (left), stochastic epidemics (centre) and predator-prey (right) dynamics. The lines represent ratios in effective sample sizes across \textit{unknown} model parameters, tested at several population capacities specified by the horizontal axis. In each case, ratios are measured against a suitably chosen benchmark (horizontal line at level 1), and include confidence intervals through repetition over several datasets. Coloured lines correspond to samplers introduced in this paper; dark lines represent existing state of the art alternatives. In all cases, we note significant gains in scalability and efficiency.
\begin{figure}[h!]
  \centering
   \includegraphics[width=\linewidth]{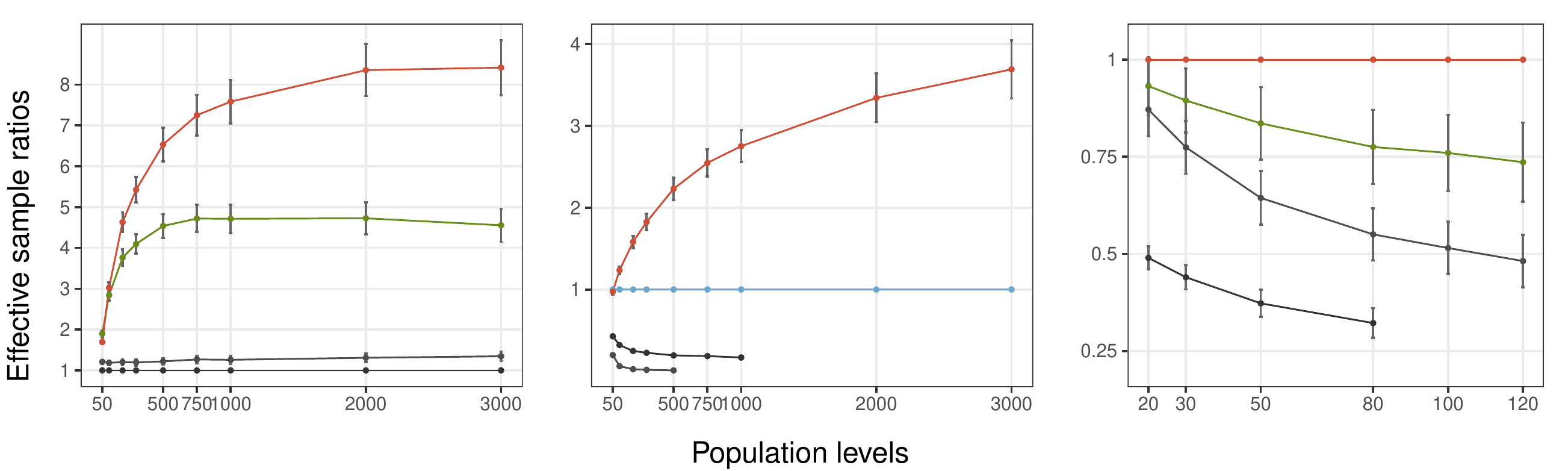}
  \caption{Ratios in effective sample sizes (with confidence intervals) across model parameters, for inferential tasks with birth-death (left), epidemic (centre) and predator-prey (right) systems. The horizontal axes represent population sizes tested. In each case, ratios are measured against a suitably chosen benchmark (horizontal line at level 1). Coloured lines correspond to techniques in this paper; dark lines represent state-of-the-art methods. We notice significant advances in system scalability (existing approaches are often unusable with large populations) and reasonable increments in efficiency in all cases.}
  \label{fig:contributions}
\end{figure}

Within the rest of the paper, Section \ref{effSection} introduces a \textit{two-step} data-augmentation with random \textit{importance weightings}; and further describes (i) performance optimization with stationary MJPs and (ii) limiting properties to population systems with infinite capacities. Also, the section draws comparisons and discusses differences with existing \textit{uniformization}-based methods, and addresses inference with (i) deterministic/random observations of population states, and (ii) observations of population jumps. Section \ref{ODEsection} presents auxiliary-variable results to efficiently sample jump trajectories as deviations from deterministic mean-average population dynamics; thus addressing considerations of state-space explosions \textit{strictly} within Gibbs procedures. Finally, Section \ref{splitting} studies dividing augmentation procedures into smaller computationally tractable counterparts.

\section{Uniformization and auxiliary variables} \label{Preliminaries} 

Let $(\hat{\boldsymbol{t}},\hat{\boldsymbol{x}})$ define an \textit{augmented} jump trajectory over the finite time interval $[0,T]$, so that $\hat{x}_i\in\mathcal{S}$ for $i\geq 0$. Here, inter-arrival times in $\hat{\boldsymbol{t}}$ are exponentially distributed with a fixed rate $$\Omega \geq \max_{x\in\mathcal{S}} \sup_{t\in[0,T]}|Q_x(t)|,$$
and $\hat{\boldsymbol{x}}$ is a realization from a discrete-time non-homogeneous Markov chain, with initial state $x_0\in\mathcal{S}$ drawn from $\pi(\cdot)$, and transition probability matrices $P(\hat{t}_i)=I+Q(\hat{t}_i)/\Omega, \, \hat{t}_i\in\hat{\boldsymbol{t}}.$
\begin{proposition} \label{Uniformization}
The process $(\hat{\boldsymbol{t}},\hat{\boldsymbol{x}})$ describes an \textit{augmented} MJP (allowing for self-transitions) on $\mathcal{X}$, and it is equivalent to $X=(\boldsymbol{t},\boldsymbol{x})$ with intensity matrices $\boldsymbol{Q}=\{Q(t) \, : \, t\geq 0\}$ and density function \eqref{pathProbs}.
\end{proposition}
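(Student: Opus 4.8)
The plan is to show that the law of the augmented process $(\hat{\boldsymbol{t}},\hat{\boldsymbol{x}})$, after marginalizing out the self-transitions, induces exactly the trajectory density \eqref{pathProbs} on $(\mathcal{X},\Sigma_\mathcal{X})$. First I would write down the joint density of $(\hat{\boldsymbol{t}},\hat{\boldsymbol{x}})$ with respect to the same dominating measure $\mu_\mathcal{X}$: since inter-arrival times in $\hat{\boldsymbol{t}}$ are i.i.d.\ exponential with rate $\Omega$ on $[0,T]$ and the truncated waiting time past the last jump contributes a survival factor $e^{-\Omega(T-\hat t_m)}$, and since the embedded chain has transition matrices $P(\hat t_i) = I + Q(\hat t_i)/\Omega$, the density factorizes as
\begin{align*}
f(\hat{\boldsymbol{t}},\hat{\boldsymbol{x}}) = \pi(\hat x_0)\, e^{-\Omega(T-\hat t_m)} \prod_{i=1}^m \Omega\, e^{-\Omega(\hat t_i - \hat t_{i-1})}\, P_{\hat x_{i-1},\hat x_i}(\hat t_i),
\end{align*}
where $m$ is the number of candidate times. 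Note $\Omega\, P_{x,x'}(\hat t_i) = \Omega\,\mathbb{I}_{(x=x')} + Q_{x,x'}(\hat t_i)$, so each factor splits into a "virtual" (self-transition) piece and a "real" (state-change) piece.

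Next I would perform the marginalization over virtual jumps. Fix a genuine path $(\boldsymbol{t},\boldsymbol{x})$ with $n$ real transitions (so consecutive states differ) and sum/integrate over all ways of inserting extra candidate times that carry self-transitions. On any interval between consecutive real states, say $X_s = x$ for $s\in[a,b)$, the contribution of $k$ inserted virtual times at ordered locations $a<s_1<\dots<s_k<b$ is $\prod_{j} \Omega\, P_{x,x}(s_j) = \prod_j (\Omega - |Q_x(s_j)|)$, together with the exponential spacing factors which telescope to $e^{-\Omega(b-a)}$ regardless of $k$ and the $s_j$. Summing over $k\ge 0$ and integrating the $s_j$ over the simplex, the Poisson-superposition identity $\sum_{k\ge0}\frac{1}{k!}\left(\int_a^b (\Omega-|Q_x(s)|)\,\mathrm{d}s\right)^k = \exp\!\left(\int_a^b (\Omega-|Q_x(s)|)\,\mathrm{d}s\right)$ turns the accumulated virtual mass plus the spacing factor $e^{-\Omega(b-a)}$ into exactly $\exp\!\left(-\int_a^b |Q_x(s)|\,\mathrm{d}s\right) = \exp\!\left(\int_a^b Q_x(s)\,\mathrm{d}s\right)$. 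Applying this on each inter-transition interval $[t_{i-1},t_i)$ and on the final interval $[t_n,T]$, and keeping the real-transition factors $Q_{x_{i-1},x_i}(t_i)$, recovers precisely \eqref{pathProbs}. A symmetric argument (thinning rather than superposition) shows the converse direction, so the two laws coincide on $(\mathcal{X},\Sigma_\mathcal{X})$.

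The main obstacle is the bookkeeping of the marginalization: making rigorous that summing over the discrete number $k$ of virtual times and integrating over their continuous locations, against the base measure $\mu_\mathcal{X}$ on $\mathcal{X}=\cup_i([0,T]\times\mathcal{S})^i$, is legitimate — i.e.\ the interchange of sum and integral and the identification of the resulting measure with the MJP law. This requires $\Omega \ge \max_{x}\sup_{t}|Q_x(t)|$ so that $P(\hat t_i)$ is a genuine stochastic matrix (nonnegative entries, rows summing to one) and so that the series converges absolutely; I would note that finiteness of the sup is exactly the standing assumption on $\boldsymbol{Q}$, and for countable $\mathcal{S}$ one checks the argument state-by-state since only finitely many states are visited on $[0,T]$ almost surely. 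The real-transition factors and the initial distribution $\pi(\cdot)$ carry through unchanged, so once the interval-wise identity is established the result follows by multiplying the per-interval contributions.
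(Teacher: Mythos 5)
Your proposal is correct, but it takes a genuinely different route from the one in the paper. The paper treats this as a known result and justifies it at the level of the transition semigroup: it writes $\mathbb{P}(X_{t+s}=x'\,|\,X_t=x,\boldsymbol{Q})$ as a Poisson-weighted mixture $\sum_{k\geq 0}\frac{(s\Omega)^k}{k!}e^{-s\Omega}\int_{\mathcal{H}^k}[P(u_1)\cdots P(u_k)]_{x,x'}\,\mathrm{d}H(u_1,\dots,u_k)$ over the order statistics of the candidate times, and defers the verification of this identity to the uniformization literature (van Dijk et al.). You instead work directly on path space: you write the joint density of $(\hat{\boldsymbol{t}},\hat{\boldsymbol{x}})$, split each factor $\Omega P_{x,x'}(\hat t_i)$ into its virtual piece $\Omega+Q_x(\hat t_i)$ and real piece $Q_{x,x'}(\hat t_i)$, and marginalize the virtual jump locations interval by interval via the superposition identity $\sum_k \frac{1}{k!}\big(\int_a^b(\Omega-|Q_x(s)|)\,\mathrm{d}s\big)^k\cdot e^{-\Omega(b-a)}=e^{\int_a^b Q_x(s)\,\mathrm{d}s}$, which reproduces the survival factors in \eqref{pathProbs} exactly. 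Your computation is sound (the telescoping of the exponential spacings, the identification $\Omega P_{x,x'}=Q_{x,x'}$ for $x\neq x'$, and the role of $\Omega\geq\max_x\sup_t|Q_x(t)|$ in making $P$ stochastic are all handled correctly), and the "converse direction" you mention is not actually needed once the marginal path density is identified. What your approach buys is a self-contained, elementary proof that directly targets the density \eqref{pathProbs} named in the statement; what the paper's approach buys is brevity and a formulation (the randomized transition-kernel expansion) that it reuses conceptually when motivating the forward--backward machinery later on. Either argument suffices.
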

This is a well known result; it follows since
\begin{align*}
\mathbb{P}(X_{t+s}=x'|X_t=x,\boldsymbol{Q}) = \sum_{k=0}^\infty \frac{s^k\Omega^k}{k!}e^{-s\Omega}\int_{\mathcal{H}^k} \big[P(u_1) \times \dots \times P(u_k)\big]_{x,x'} \, \mathrm{d}H(u_1,\dots,u_k)
\end{align*}
for all $x,x'\in\mathcal{S}$ and $0<t<t+s<T$ offers a \textit{randomized} or \textit{uniformized} representation of transition probabilities across times and states in the original MJP. There, $\mathrm{d}H(u_1,\dots,u_k) = k!/s^k \cdot \mathrm{d}u_1\dots\mathrm{d}u_k$ denotes the density of a $k$-dimensional vector of order statistics on $\mathcal{H}^k=\{(u_1,\dots,u_k)\in[t,t+s]^k \, : \, u_1<u_2<\dots<u_k\}$, and a proof of equivalence may be found on e.g. \cite{van1992uniformization,van2018uniformization}. Commonly used for simulation in homogeneous systems, the computational procedure that constructs these augmented sets of times $\hat{\boldsymbol{t}}=\{\hat{t}_0,\dots,\hat{t}_{m}\}$ and states $\hat{\boldsymbol{x}}=\{\hat{x}_0,\dots,\hat{x}_{m}\}$ offers an efficient alternative to \textit{Gillespie's algorithm}\citep{gillespie1977exact}, and is commonly referred to as \textit{uniformization} \cite[cf.][]{jensen1953markoff}. Whenever $\hat{x}_i=\hat{x}_{i-1}$, we refer to a transition $i$ as a \textit{virtual jump}. For example, in Figure \ref{virtualJumpExpl} (left) we observe an augmented \textit{birth-death} trajectory; there, we spot virtual jumps at times $\hat{t}_2,\hat{t}_6,\hat{t}_7$ and $\hat{t}_9$, which are represented by white circles on the time axis. On the right hand side, we observe the equivalent trajectory after virtual times and states have been removed.
\begin{figure*}[ht]
\vskip 0.1in
\begin{center}
\resizebox{\textwidth}{!}{%
\begin{tikzpicture}
\draw[->,line width=0.10mm] (0.1cm,-0.9cm) -- ++ (7.2cm,0cm);
\draw (0.1,-0.95cm) -- ++(0cm,0.1cm);
% Time indicators
\node at (7.55cm,-0.9cm) {\small $\boldsymbol{t}$};
% State indicators
\node at (-0.5cm,-0.4cm) {\footnotesize $1$};
\node at (-0.5cm,0.3cm) {\footnotesize $2$};
\node at (-0.5cm,1cm) {\footnotesize $3$};
\node at (-0.5cm,1.7cm) {\footnotesize $4$};
% Back Lines
\draw[dashed,draw=black!20!white,line width=0.10mm] (0.1cm,-0.4cm) -- (7.2cm,-0.4cm);
\draw[dashed,draw=black!20!white,line width=0.10mm] (0.1cm,0.3cm) -- (7.2cm,0.3cm); 
\draw[dashed,draw=black!20!white,line width=0.10mm] (0.1cm,1cm) -- (7.2cm,1cm);
\draw[dashed,draw=black!20!white,line width=0.10mm] (0.1cm,1.7cm) -- (7.2cm,1.7cm);
% Process
\draw[-,draw=black!80!white,line width=0.25mm] (0.1cm,-0.4cm) -- (0.6cm,-0.4cm); 
\draw[-,draw=black!80!white,line width=0.25mm] (0.6cm,-0.4cm) -- (0.6cm,0.3cm); 
\draw[-,draw=black!80!white,line width=0.25mm] (0.6cm,0.3cm) -- (2.4cm,0.3cm); 
\draw[-,draw=black!80!white,line width=0.25mm] (2.4cm,0.3cm) -- (2.4cm,1cm);
\draw[-,draw=black!80!white,line width=0.25mm] (2.4cm,1cm) -- (3cm,1cm); 
\draw[-,draw=black!80!white,line width=0.25mm] (3cm,1cm) -- (3cm,1.7cm); 
\draw[-,draw=black!80!white,line width=0.25mm] (3cm,1.7cm) -- (3.5cm,1.7cm);
\draw[-,draw=black!80!white,line width=0.25mm] (3.5cm,1.7cm) -- (3.5cm,1cm);
\draw[-,draw=black!80!white,line width=0.25mm] (3.5cm,1cm) -- (5.3cm,1cm);
\draw[-,draw=black!80!white,line width=0.25mm] (5.3cm,1cm) -- (5.3cm,0.3cm);
\draw[-,draw=black!80!white,line width=0.25mm] (5.3cm,0.3cm) -- (7.2cm,0.3cm);
% Jump Circles
\filldraw[black!80!white] (0.1cm,-0.4cm) circle [radius=0.04cm]; % Circle
\filldraw[black!80!white] (0.6cm,0.3cm) circle [radius=0.04cm]; % Circle
\filldraw[fill=black!00!white,draw=black!80!white] (1.6cm,0.3cm) circle [radius=0.05cm]; % Circle
\filldraw[black!80!white] (2.4cm,1cm) circle [radius=0.04cm]; % Circle
\filldraw[black!80!white] (3cm,1.7cm) circle [radius=0.04cm]; % Circle
\filldraw[black!80!white] (3.5cm,1cm) circle [radius=0.04cm]; % Circle
\filldraw[fill=black!00!white,draw=black!80!white] (4.3cm,1cm) circle [radius=0.05cm]; % Circle
\filldraw[fill=black!00!white,draw=black!80!white] (4.8cm,1cm) circle [radius=0.05cm]; % Circle
\filldraw[black!80!white] (5.3cm,0.3cm) circle [radius=0.04cm]; % Circle
\filldraw[fill=black!00!white,draw=black!80!white] (6.5cm,0.3cm) circle [radius=0.05cm]; % Circle
% States
\node at (0.1cm,-0.1cm) {$\hat{x}_0$};
\node at (0.6cm,0.6cm) {$\hat{x}_1$};
\node at (1.6cm,0.6cm) {$\hat{x}_2$};
\node at (2.4cm,1.3cm) {$\hat{x}_3$};
\node at (3cm,2cm) {$\hat{x}_4$};
\node at (3.5cm,0.7cm) {$\hat{x}_5$};
\node at (4.3cm,1.3cm) {$\hat{x}_6$};
\node at (4.8cm,1.3cm) {$\hat{x}_7$};
\node at (5.3cm,0cm) {$\hat{x}_8$};
\node at (6.5cm,0.6cm) {$\hat{x}_9$};
% Jump times
\filldraw[fill=black!20!white,draw=black!80!white] (0.6cm,-0.9cm) circle [radius=0.06cm]; % Circle
\filldraw[fill=black!00!white,draw=black!80!white] (1.6cm,-0.9cm) circle [radius=0.06cm]; % Circle
\filldraw[fill=black!20!white,draw=black!80!white] (2.4cm,-0.9cm) circle [radius=0.06cm]; % Circle
\filldraw[fill=black!20!white,draw=black!80!white] (3cm,-0.9cm) circle [radius=0.06cm]; % Circle
\filldraw[fill=black!20!white,draw=black!80!white] (3.5cm,-0.9cm) circle [radius=0.06cm]; % Circle
\filldraw[fill=black!00!white,draw=black!80!white] (4.3cm,-0.9cm) circle [radius=0.06cm]; % Circle
\filldraw[fill=black!00!white,draw=black!80!white] (4.8cm,-0.9cm) circle [radius=0.06cm]; % Circle
\filldraw[fill=black!20!white,draw=black!80!white] (5.3cm,-0.9cm) circle [radius=0.06cm]; % Circle
\filldraw[fill=black!00!white,draw=black!80!white] (6.5cm,-0.9cm) circle [radius=0.06cm]; % Circle
% Jump letters
\node at (0.6cm,-1.4cm) {\small $\hat{t}_1$};
\node at (1.6cm,-1.4cm) {\small $\hat{t}_2$};
\node at (2.4cm,-1.4cm) {\small $\hat{t}_3$};
\node at (3cm,-1.4cm) {\small $\hat{t}_4$};
\node at (3.5cm,-1.4cm) {\small $\hat{t}_5$};
\node at (4.3cm,-1.4cm) {\small $\hat{t}_6$};
\node at (4.8cm,-1.4cm) {\small $\hat{t}_7$};
\node at (5.3cm,-1.4cm) {\small $\hat{t}_8$};
\node at (6.5cm,-1.4cm) {\small $\hat{t}_9$};

\draw[->,line width=0.10mm] (9cm,-0.9cm) -- ++ (7.2cm,0cm);
\draw (9,-0.95cm) -- ++(0cm,0.1cm);
% Time indicators
\node at (16.55cm,-0.9cm) {\small $\boldsymbol{t}$};
% State indicators
\node at (8.4cm,-0.4cm) {\footnotesize $1$};
\node at (8.4cm,0.3cm) {\footnotesize $2$};
\node at (8.4cm,1cm) {\footnotesize $3$};
\node at (8.4cm,1.7cm) {\footnotesize $4$};
% Back Lines
\draw[dashed,draw=black!20!white,line width=0.10mm] (9cm,-0.4cm) -- ++ (7.1cm,0cm);
\draw[dashed,draw=black!20!white,line width=0.10mm] (9cm,0.3cm) -- ++ (7.1cm,0cm); 
\draw[dashed,draw=black!20!white,line width=0.10mm] (9cm,1cm) -- ++ (7.1cm,0cm);
\draw[dashed,draw=black!20!white,line width=0.10mm] (9cm,1.7cm) -- ++ (7.1cm,0cm);
% Process
\draw[-,draw=black!80!white,line width=0.25mm] (9cm,-0.4cm) -- (9.5cm,-0.4cm); 
\draw[-,draw=black!80!white,line width=0.25mm] (9.5cm,-0.4cm) -- (9.5cm,0.3cm); 
\draw[-,draw=black!80!white,line width=0.25mm] (9.5cm,0.3cm) -- (11.3cm,0.3cm); 
\draw[-,draw=black!80!white,line width=0.25mm] (11.3cm,0.3cm) -- (11.3cm,1cm); 
\draw[-,draw=black!80!white,line width=0.25mm] (11.3cm,1cm) -- (11.9cm,1cm); 
\draw[-,draw=black!80!white,line width=0.25mm] (11.9cm,1cm) -- (11.9cm,1.7cm); 
\draw[-,draw=black!80!white,line width=0.25mm] (11.9cm,1.7cm) -- (12.4cm,1.7cm); 
\draw[-,draw=black!80!white,line width=0.25mm] (12.4cm,1.7cm) -- (12.4cm,1cm); 
\draw[-,draw=black!80!white,line width=0.25mm] (12.4cm,1cm) -- (14.2cm,1cm); 
\draw[-,draw=black!80!white,line width=0.25mm] (14.2cm,1cm) -- (14.2cm,0.3cm); 
\draw[-,draw=black!80!white,line width=0.25mm] (14.2cm,0.3cm) -- (16.1cm,0.3cm);
% Jump Circles
\filldraw[black!80!white] (9cm,-0.4cm) circle [radius=0.04cm]; % Circle
\filldraw[black!80!white] (9.5cm,0.3cm) circle [radius=0.04cm]; % Circle
\filldraw[black!80!white] (11.3cm,1cm) circle [radius=0.04cm]; % Circle
\filldraw[black!80!white] (11.9cm,1.7cm) circle [radius=0.04cm]; % Circle
\filldraw[black!80!white] (12.4cm,1cm) circle [radius=0.04cm]; % Circle
\filldraw[black!80!white] (14.2cm,0.3cm) circle [radius=0.04cm]; % Circle
% States
\node at (9cm,-0.1cm) {$x_0$};
\node at (9.5cm,0.6cm) {$x_1$};
\node at (11.3cm,1.3cm) {$x_2$};
\node at (11.9cm,2cm) {$x_3$};
\node at (12.4cm,0.7cm) {$x_4$};
\node at (14.2cm,0cm) {$x_5$};
% Jump times
\filldraw[fill=black!20!white,draw=black!80!white] (9.5cm,-0.9cm) circle [radius=0.06cm]; % Circle
\filldraw[fill=black!20!white,draw=black!80!white] (11.3cm,-0.9cm) circle [radius=0.06cm]; % Circle
\filldraw[fill=black!20!white,draw=black!80!white] (11.9cm,-0.9cm) circle [radius=0.06cm]; % Circle
\filldraw[fill=black!20!white,draw=black!80!white] (12.4cm,-0.9cm) circle [radius=0.06cm]; % Circle
\filldraw[fill=black!20!white,draw=black!80!white] (14.2cm,-0.9cm) circle [radius=0.06cm]; % Circle
% Jump letters
\node at (9.5cm,-1.4cm) {\small $t_1$};
\node at (11.3cm,-1.4cm) {\small $t_2$};
\node at (11.9cm,-1.4cm) {\small $t_3$};
\node at (12.4cm,-1.4cm) {\small $t_4$};
\node at (14.2cm,-1.4cm) {\small $t_5$};

\end{tikzpicture}
}
\vskip 0in
\caption{Left, an augmented birth-death trajectory. Right, equivalent trajectory without virtual jumps.} 
\label{virtualJumpExpl}
\end{center}
\vskip -0.1in
\end{figure*}
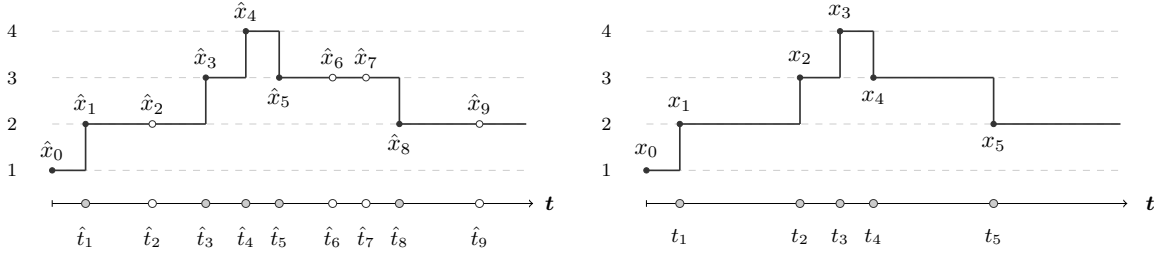

Along with the \textit{uniformized} trajectory $(\hat{\boldsymbol{t}},\hat{\boldsymbol{x}})$, let $\boldsymbol{u}=\{u_i\}_{i=1,\dots,m}$ define an auxiliary family of random variables, s.t.
\begin{align}
\mathbb{P}(u_i^{-1}(A)|\hat{\boldsymbol{x}}) = \int_{A}    g(a|u_{i-1},\hat{x}_{i-1},\hat{x}_i,\hat{t}_i)      \mu_{\mathcal{J}}(da) \label{intAux}
\end{align}
for all $i=2,\dots,m$ and $A\in\Sigma_{\mathcal{J}}$, with $\mathbb{P}(u_1^{-1}(A)|\hat{\boldsymbol{x}}) = \int_{A}    g(a|\hat{x}_0,\hat{x}_1,\hat{t}_1)      \mu_{\mathcal{J}}(da) $. Here, $(\mathcal{J},\Sigma_{\mathcal{J}})$ denotes an arbitrary support space and $\mu_{\mathcal{J}}$ is its corresponding base measure. We impose that a density/mass $g(\cdot)$ in \eqref{intAux} must be defined s.t. for any sequence $\boldsymbol{u}$, along with corresponding holding times in $\hat{\boldsymbol{t}}$, there must exist multiple probabilistically \textit{compatible} choices of $\hat{\boldsymbol{x}}$.

\begin{definition} 
Let $\boldsymbol{u}=\{u_i\}_{i=1,\dots,m}$ with $u_i\in\mathcal{J}$ be a sequence of auxiliary observations at times $\hat{\boldsymbol{t}}$ with $0\leq \hat{t}_1,\dots,\hat{t}_{m}\leq T$. We refer to a uniformized sequence $\hat{\boldsymbol{x}}$ as `compatible' with $\boldsymbol{u}$ given $\hat{\boldsymbol{t}}$ whenever $|\hat{\boldsymbol{x}}|=|\hat{\boldsymbol{t}}|$, $g(u_1|\hat{x}_0,\hat{x}_1,\hat{t}_1)>0$  and $g(u_i|u_{i-1},\hat{x}_{i-1},\hat{x}_i,\hat{t}_i)>0$ for all $i=2,\dots,m$.
\end{definition}

Trivially, a pair $(\hat{\boldsymbol{t}},\hat{\boldsymbol{x}})$ is compatible with $\boldsymbol{u}$ if a strictly positive mass is assigned by means of $g(\cdot|\hat{\boldsymbol{t}},\hat{\boldsymbol{x}})$ to the auxiliary realization. Conditioned on $\boldsymbol{u}$, we may restrict or assign importance weights across \textit{uniformized} trajectories in $\mathcal{X}$ within resampling procedures. Throughout the paper, the reader will be presented with multiple designs of densities $g$ in \eqref{intAux}, targeted both at general-form population models or specific jump systems in common application domains.

\noindent \textbf{Augmenting a trajectory through uniformization.} Assume the existence of a fixed, parametrized sequence of matrices $\boldsymbol{Q}=\boldsymbol{Q}(\boldsymbol{\lambda})$, a dominating rate $\Omega$ and an MJP trajectory $(\boldsymbol{t},\boldsymbol{x})\in\mathcal{X}$, s.t. $\boldsymbol{t}=\{t_0,\dots,t_{n}\}$ and $\boldsymbol{x}=\{x_0,\dots,x_{n}\}$. Then, we may sample an augmented pair $(\hat{\boldsymbol{t}},\hat{\boldsymbol{x}})$ from within the family of \textit{uniformized} representations equivalent to $(\boldsymbol{t},\boldsymbol{x})$. Marginalised over $\boldsymbol{u}$, a conditional density for times $\hat{\boldsymbol{t}}$ is, up to proportionality, given by
\begin{align}
f_{\hat{\boldsymbol{t}}}(\hat{t}_0,\dots,\hat{t}_m|\boldsymbol{t}, \boldsymbol{x},\Omega,\boldsymbol{Q}) 
& \propto \prod_{i=1}^n P_{x_{i-1},x_i}(t_i) \, \cdot \, \prod_{i=0}^n \prod_{j=0}^m P_{x_i,x_i}(\hat{t}_j)^{\mathbb{I}[\hat{t}_j \in (t_i,t_{i+1})]}      \, \cdot \, \Omega^m \, e^{-T\cdot\Omega} \nonumber \\
& \propto \prod_{i=0}^n \prod_{j=0}^m \big(\Omega + Q_{x_i}(\hat{t}_j)\big)^{\mathbb{I}[\hat{t}_j \in (t_i,t_{i+1})]}   \label{virtualJumps}
\end{align}
%&\propto f_{X}(\boldsymbol{t}, \boldsymbol{x} | \hat{t}_0,\dots,\hat{t}_m, \Omega,\boldsymbol{Q}) \, \cdot \, f_{\hat{\boldsymbol{t}}}(\hat{t}_1,\dots,\hat{t}_m|\Omega) \nonumber \\
with $P(t)=I+Q(t)/\Omega$, $t_{n+1}=T$ and whenever $m\geq n$ and $\boldsymbol{t}\in \{\hat{t}_0,\dots,\hat{t}_m\}$. This corresponds to adding \textit{virtual} (self-transition) times to the sequence $\boldsymbol{t}$, by using successive Poisson processes with rates $\Omega + Q_{x_i}(t)$, $t\in(t_i,t_{i+1})$, for every $i=0,\dots,n$ \cite[cf.][]{rao13a}. 

Next, a \textit{uniformized} sequence of states $\hat{\boldsymbol{x}}$ can be deterministically assigned given knowledge of $\hat{\boldsymbol{t}},\boldsymbol{t},\boldsymbol{x}$, and an auxiliary sequence $\boldsymbol{u}|\hat{\boldsymbol{x}}$ sampled from a mass/density $g(\cdot)$ in \eqref{intAux}. These steps correspond to the top left/right diagrams in the birth-death example within Figure \ref{samplingExample}. There, a trajectory $(\boldsymbol{t},\boldsymbol{x})$ is complemented with virtual jumps (white circles on horizontal axis), states (white circles within trajectory) and auxiliary evidence (blue rectangles on some virtual epochs). In this example, $\boldsymbol{u}$ represents randomly \textit{locked} or \textit{clammed} jumps, and will become clearer to the reader soon. 
\begin{figure*}[h!]
\vskip 0.1in
\begin{center}
\resizebox{0.49\textwidth}{!}{%
\begin{tikzpicture}
\draw[->,line width=0.10mm] (0.1cm,-1.6cm) -- ++ (9.6,0cm);
\draw (0.1,-1.65cm) -- ++(0cm,0.1cm);
% State indicators and back lines
\foreach \i in {0,...,5} {
 \node at (-0.5,-1.1 + 0.5*\i) {\small $\i$}; \draw[dashed,draw=black!10!white,line width=0.10mm] (0.1,-1.1 + 0.5*\i) -- (9.6,-1.1 + 0.5*\i);
}

% Path
\draw[-,draw=black!80!white,line width=0.25mm] (0.1,-1.1 + 0.5*1) -- ++ (0.7,0); \filldraw[fill=black!80!white,draw=black!80!white] (0.1,-1.1 + 0.5*1) circle [radius=0.06cm]; 
\draw[-,draw=black!80!white,line width=0.25mm] (0.8,-0.6) -- (0.8,-0.1);

\draw[-,draw=black!80!white,line width=0.25mm] (0.8,-1.1 + 0.5*2) -- ++ (1.3,0); \filldraw[fill=black!80!white,draw=black!80!white] (0.8,-1.1 + 0.5*2) circle [radius=0.06cm]; 

\draw[-,draw=black!80!white,line width=0.25mm] (2.1,-0.1) -- ++ (1.1,0);
\draw[-,draw=black!80!white,line width=0.25mm] (3.2,-0.1) -- (3.2,0.4);

\draw[-,draw=black!80!white,line width=0.25mm] (3.2,0.4) -- ++ (0.8,0); \filldraw[fill=black!80!white,draw=black!80!white] (3.2,0.4) circle [radius=0.06cm]; 
\draw[-,draw=black!80!white,line width=0.25mm] (4,0.4) -- (4,0.9);

\draw[-,draw=black!80!white,line width=0.25mm] (4,-1.1 + 0.5*4) -- ++ (0.65,0); \filldraw[fill=black!80!white,draw=black!80!white] (4,-1.1 + 0.5*4) circle [radius=0.06cm]; 
\draw[-,draw=black!80!white,line width=0.25mm] (4.65cm,0.4cm) -- (4.65cm,0.9cm);

\draw[-,draw=black!80!white,line width=0.25mm] (4.65cm,0.4cm) -- ++ (1.05cm,0cm); \filldraw[fill=black!80!white,draw=black!80!white] (4.65cm,0.4cm) circle [radius=0.06cm]; % Circle

\draw[-,draw=black!80!white,line width=0.25mm] (5.7,-1.1 + 0.5*3) -- ++ (0.7,0); 

\draw[-,draw=black!80!white,line width=0.25mm] (6.4,-1.1 + 0.5*3) -- ++ (0.6,0); 
\draw[-,draw=black!80!white,line width=0.25mm] (7,-0.1cm) -- (7,0.4cm);

\draw[-,draw=black!80!white,line width=0.25mm] (7,-0.1) -- ++ (1.6,0); \filldraw[fill=black!80!white,draw=black!80!white] (7,-0.1) circle [radius=0.06cm]; 

 \draw[-,draw=black!80!white,line width=0.25mm] (8.6,-0.1) -- ++ (1,0); 

% Jump times
\foreach \i in {0.8,3.2,4,4.65,7} {
\filldraw[fill=black!20!white,draw=black!80!white] (\i,-1.6) circle [radius=0.07cm]; % Circle
}
% Jump letters
\node at (0.8,-2.1) {\small $t_1$};;
\node at (3.2,-2.1) {\small $t_2$};
\node at (4,-2.1) {\small $t_3$};
\node at (4.65,-2.1) {\small $t_4$};
\node at (7,-2.1) {\small $t_5$};
% States
\node at (0.1cm,-0.25cm) {$x_0$};
\node at (0.8cm,0.25cm) {$x_1$};
\node at (3.2cm,0.75cm) {$x_2$};
\node at (4cm,1.25cm) {$x_3$};
\node at (4.65cm,0.05cm) {$x_4$};
\node at (7cm,-0.45cm) {$x_5$};
\end{tikzpicture}}\hfill
\resizebox{0.49\textwidth}{!}{%
\begin{tikzpicture}
\draw[->,line width=0.10mm] (0.1cm,-1.6cm) -- ++ (9.6,0cm);
\draw (0.1,-1.65cm) -- ++(0cm,0.1cm);
% State indicators and back lines
\foreach \i in {0,...,5} {
 \node at (-0.5,-1.1 + 0.5*\i) {\small $\i$}; \draw[dashed,draw=black!10!white,line width=0.10mm] (0.1,-1.1 + 0.5*\i) -- (9.6,-1.1 + 0.5*\i);
}
% Auxiliary evidence
\fill[draw=black!80!white,fill=blue!20!white] (1.95,-0.25) rectangle (2.25,0.05);
\fill[draw=black!80!white,fill=blue!20!white] (8.45,-0.25) rectangle (8.75,0.05);

% Path
\draw[-,draw=black!80!white,line width=0.25mm] (0.1,-1.1 + 0.5*1) -- ++ (0.7,0); \filldraw[fill=black!80!white,draw=black!80!white] (0.1,-1.1 + 0.5*1) circle [radius=0.06cm]; 
\draw[-,draw=black!80!white,line width=0.25mm] (0.8,-0.6) -- (0.8,-0.1);

\draw[-,draw=black!80!white,line width=0.25mm] (0.8,-1.1 + 0.5*2) -- ++ (1.3,0); \filldraw[fill=black!80!white,draw=black!80!white] (0.8,-1.1 + 0.5*2) circle [radius=0.06cm]; 

\draw[-,draw=black!80!white,line width=0.3mm] (2.1,-0.1) -- ++ (1.1,0); \filldraw[fill=black!00!white,draw=black!80!white] (2.1,-0.1) circle [radius=0.07cm]; 
\draw[-,draw=black!80!white,line width=0.3mm] (3.2,-0.1) -- (3.2,0.4);

\draw[-,draw=black!80!white,line width=0.25mm] (3.2,0.4) -- ++ (0.8,0); \filldraw[fill=black!80!white,draw=black!80!white] (3.2,0.4) circle [radius=0.06cm]; 
\draw[-,draw=black!80!white,line width=0.25mm] (4,0.4) -- (4,0.9);

\draw[-,draw=black!80!white,line width=0.25mm] (4,-1.1 + 0.5*4) -- ++ (0.65,0); \filldraw[fill=black!80!white,draw=black!80!white] (4,-1.1 + 0.5*4) circle [radius=0.06cm]; 
\draw[-,draw=black!80!white,line width=0.25mm] (4.65cm,0.4cm) -- (4.65cm,0.9cm);

\draw[-,draw=black!80!white,line width=0.25mm] (4.65cm,0.4cm) -- ++ (1.05cm,0cm); \filldraw[fill=black!80!white,draw=black!80!white] (4.65cm,0.4cm) circle [radius=0.06cm]; % Circle

\draw[-,draw=black!80!white,line width=0.25mm] (5.7,-1.1 + 0.5*3) -- ++ (0.7,0); \filldraw[fill=black!00!white,draw=black!80!white] (5.7,-1.1 + 0.5*3) circle [radius=0.07cm]; 

\draw[-,draw=black!80!white,line width=0.25mm] (6.4,-1.1 + 0.5*3) -- ++ (0.6,0); \filldraw[fill=black!00!white,draw=black!80!white] (6.4,-1.1 + 0.5*3) circle [radius=0.07cm]; 
\draw[-,draw=black!80!white,line width=0.25mm] (7,-0.1cm) -- (7,0.4cm);

\draw[-,draw=black!80!white,line width=0.3mm] (7,-0.1) -- ++ (1.6,0); \filldraw[fill=black!80!white,draw=black!80!white] (7,-0.1) circle [radius=0.06cm]; 

 \draw[-,draw=black!80!white,line width=0.25mm] (8.6,-0.1) -- ++ (1,0); \filldraw[fill=black!00!white,,draw=black!80!white] (8.6,-0.1) circle [radius=0.07cm]; 

% Jump times
\foreach \i in {2.1,5.7,6.4,8.6} {
\filldraw[fill=black!00!white,draw=black!80!white] (\i,-1.6) circle [radius=0.07cm]; % Circle
}
\foreach \i in {0.8,3.2,4,4.65,7} {
\filldraw[fill=black!20!white,draw=black!80!white] (\i,-1.6) circle [radius=0.07cm]; % Circle
}
% Jump letters
\node at (0.8,-2.06) {\small $\hat{t}_1$};
\node at (2.1,-2.06) {\small $\hat{t}_2$};
\node at (3.2,-2.06) {\small $\hat{t}_3$};
\node at (4,-2.06) {\small $\hat{t}_4$};
\node at (4.65,-2.06) {\small $\hat{t}_5$};
\node at (5.7,-2.06) {\small $\hat{t}_6$};
\node at (6.4,-2.06) {\small $\hat{t}_7$};
\node at (7,-2.06) {\small $\hat{t}_8$};
\node at (8.6,-2.06) {\small $\hat{t}_9$};
% States
\node at (0.1cm,-0.25cm) {$\hat{x}_0$};
\node at (0.8cm,0.25cm) {$\hat{x}_1$};
\node at (2.1cm,0.25cm) {$\hat{x}_2$};
\node at (3.2cm,0.75cm) {$\hat{x}_3$};
\node at (4cm,1.25cm) {$\hat{x}_4$};
\node at (4.65cm,0.05cm) {$\hat{x}_5$};
\node at (5.7cm,0.75cm) {$\hat{x}_6$};
\node at (6.4cm,0.75cm) {$\hat{x}_7$};
\node at (7cm,-0.45cm) {$\hat{x}_8$};
\node at (8.6cm,0.25cm) {$\hat{x}_9$};
\end{tikzpicture}}
\vspace{5pt}

\resizebox{0.49\textwidth}{!}{%
\begin{tikzpicture}
\draw[->,line width=0.10mm] (0.1cm,-1.6cm) -- ++ (9.6,0cm);
\draw (0.1,-1.65cm) -- ++(0cm,0.1cm);
% State indicators and back lines
\foreach \i in {0,...,5} {
 \node at (-0.5,-1.1 + 0.5*\i) {\small $\i$}; \draw[dashed,draw=black!10!white,line width=0.10mm] (0.1,-1.1 + 0.5*\i) -- (9.6,-1.1 + 0.5*\i);
}

% Auxiliary lines
\draw[dashed,draw=black!20!white,line width=0.10mm] (2.1,-1.6) -- (2.1,1.5);
\draw[dashed,draw=black!20!white,line width=0.10mm] (8.6,-1.6) -- (8.6,1.5);
\node at (2.1,-1.6) {\color{blue!50!gray}\ding{55}};
\node at (8.6,-1.6) {\color{blue!50!gray}\ding{55}};

% Jump times
\foreach \i in {0.1,0.8,3.2,4,4.65,5.7,6.4,7} {
\filldraw[fill=black!00!white,draw=black!80!white] (\i,-1.6) circle [radius=0.07cm]; % Circle
}
% Jump letters
\node at (0.8,-2.06) {\small $\hat{t}_1$};
\node at (3.2,-2.06) {\small $\hat{t}_2$};
\node at (4,-2.06) {\small $\hat{t}_3$};
\node at (4.65,-2.06) {\small $\hat{t}_4$};
\node at (5.7,-2.06) {\small $\hat{t}_5$};
\node at (6.4,-2.06) {\small $\hat{t}_6$};
\node at (7,-2.06) {\small $\hat{t}_7$};
\end{tikzpicture}}\hfill\resizebox{0.49\textwidth}{!}{%
\begin{tikzpicture}
\draw[->,line width=0.10mm] (0.1cm,-1.6cm) -- ++ (9.6,0cm);
\draw (0.1,-1.65cm) -- ++(0cm,0.1cm);
% State indicators and back lines
\foreach \i in {0,...,5} {
 \node at (-0.5,-1.1 + 0.5*\i) {\small $\i$}; \draw[dashed,draw=black!10!white,line width=0.10mm] (0.1,-1.1 + 0.5*\i) -- (9.6,-1.1 + 0.5*\i);
}

% Path
\draw[-,draw=black!80!white,line width=0.25mm] (0.1,-1.1 + 0.5*3) -- ++ (0.7,0); \filldraw[fill=black!80!white,draw=black!80!white] (0.1,-1.1 + 0.5*3) circle [radius=0.06cm]; 
\draw[-,draw=black!80!white,line width=0.25mm] (0.8,0.4) -- (0.8,-0.1);

\draw[-,draw=black!80!white,line width=0.25mm] (0.8,-1.1 + 0.5*2) -- ++ (1.3,0); \filldraw[fill=black!80!white,draw=black!80!white] (0.8,-1.1 + 0.5*2) circle [radius=0.06cm]; 

\draw[-,draw=black!80!white,line width=0.3mm] (2.1,-0.1) -- ++ (1.1,0); 
\draw[-,draw=black!80!white,line width=0.3mm] (3.2,-0.1) -- (3.2,0.4);

\draw[-,draw=black!80!white,line width=0.25mm] (3.2,0.4) -- ++ (2.5,0); \filldraw[fill=black!80!white,draw=black!80!white] (3.2,0.4) circle [radius=0.06cm]; 
\draw[-,draw=black!80!white,line width=0.3mm] (5.7,-0.1) -- (5.7,0.4);

\draw[-,draw=black!80!white,line width=0.25mm] (5.7,-1.1 + 0.5*2) -- ++ (0.7,0); \filldraw[fill=black!80!white,draw=black!80!white] (5.7,-1.1 + 0.5*2) circle [radius=0.06cm]; 
\draw[-,draw=black!80!white,line width=0.3mm] (6.4,-0.1) -- (6.4,-0.6);

\draw[-,draw=black!80!white,line width=0.25mm] (6.4,-1.1 + 0.5*1) -- ++ (0.6,0); \filldraw[fill=black!80!white,draw=black!80!white] (6.4,-1.1 + 0.5*1) circle [radius=0.06cm]; 
\draw[-,draw=black!80!white,line width=0.25mm] (7,-0.1cm) -- (7,-0.6cm);

\draw[-,draw=black!80!white,line width=0.3mm] (7,-0.1) -- ++ (1.6,0); \filldraw[fill=black!80!white,draw=black!80!white] (7,-0.1) circle [radius=0.06cm]; 

 \draw[-,draw=black!80!white,line width=0.25mm] (8.6,-0.1) -- ++ (1,0); 

% Jump times
\foreach \i in {0.8,3.2,5.7,6.4,7} {
\filldraw[fill=black!20!white,draw=black!80!white] (\i,-1.6) circle [radius=0.07cm]; % Circle
}
% Jump letters
\node at (0.8,-2.1) {\small $t_1$};
\node at (3.2,-2.1) {\small $t_2$};
\node at (5.7,-2.1) {\small $t_3$};
\node at (6.4,-2.1) {\small $t_4$};
\node at (7,-2.1) {\small $t_5$};
% States
\node at (0.1cm,0.75cm) {$x_0$};
\node at (0.8cm,-0.45cm) {$x_1$};
\node at (3.2cm,0.75cm) {$x_2$};
\node at (5.7cm,-0.45cm) {$x_3$};
\node at (6.4cm,-0.95cm) {$x_4$};
\node at (7cm,0.25cm) {$x_5$};
\end{tikzpicture}}
\vskip 0in
\caption{Sketch of a \textit{birth-death} sampling iteration with an auxiliary-variable \textit{naive} procedure. Top left, a reference path $(\boldsymbol{t},\boldsymbol{x})$; top right, augmentation with virtual jumps and states (represented by white circles), and auxiliary evidence $\boldsymbol{u}$ (blue rectangles). Bottom left, fixed Poisson holding times $\hat{\boldsymbol{t}}$ for a new trajectory; instances with rectangles are removed in this example. Bottom right, new trajectory sampled from within a compatible space in $\mathcal{X}$, with a forward-backward procedure conditioned on $\boldsymbol{u}$; virtual epochs have been removed.} 
\label{samplingExample}
\end{center}
\vskip -0.1in
\end{figure*}
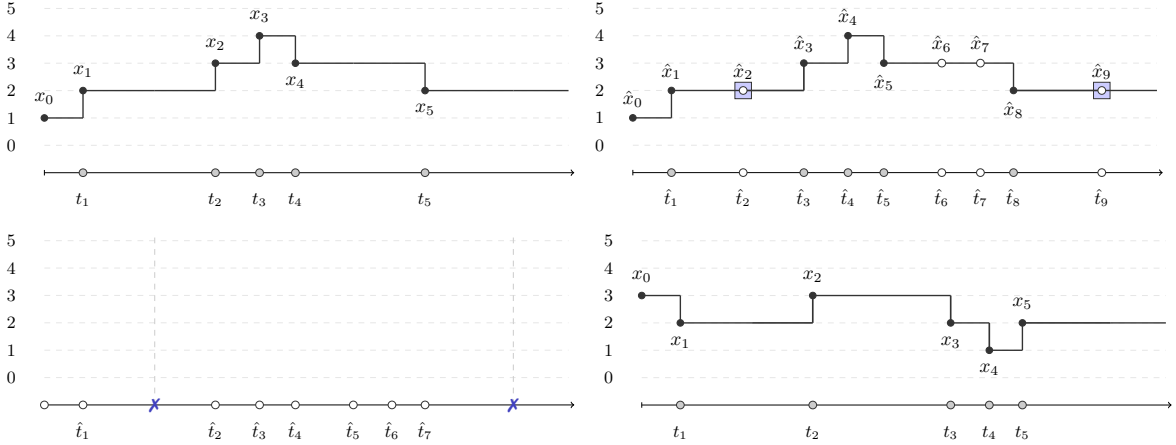

\noindent \textbf{Resampling a new trajectory according to compatibility rules.} A new trajectory within a restricted space of $\mathcal{X}$ may be obtained, by sampling a fresh augmented sequence $\hat{\boldsymbol{x}}|\hat{\boldsymbol{t}},\boldsymbol{u},\boldsymbol{Q}(\boldsymbol{\lambda})$ and removing all virtual entries. To this end, we ought to target the discrete-time representation
\begin{align}
f_{\hat{\boldsymbol{x}}}(\hat{x}_0,\dots,\hat{x}_{m}|\hat{\boldsymbol{t}},\boldsymbol{u},\boldsymbol{\lambda}) & \propto   f_{\hat{\boldsymbol{x}}}(\hat{x}_0,\dots,\hat{x}_{m}|\hat{\boldsymbol{t}},Q(\boldsymbol{\lambda})) \cdot g(u_1|\hat{x}_{0},\hat{x}_1,\hat{t}_1) \cdot \prod_{i=2}^m g(u_i|u_{i-1},\hat{x}_{i-1},\hat{x}_i,\hat{t}_i) \nonumber \\
& \propto \pi(\hat{x}_0) \cdot g(u_1|\hat{x}_{0},\hat{x}_1,\hat{t}_1) \cdot \prod_{i=2}^m  g(u_i|u_{i-1},\hat{x}_{i-1},\hat{x}_i,\hat{t}_i) \cdot   \prod_{i=1}^m  P_{\hat{x}_{i-1},\hat{x}_i}(\hat{t}_i), \label{StSpRep}
\end{align}
%corresponding to the diagram in Figure \ref{plateNotation}. 
which readily simplifies to forward-backward steps with initial distribution $\pi(x)$, no importance updates and (non-stochastic) \textit{transition weight} matrices $\tilde{P}(\hat{t}_i ; \boldsymbol{u})$, defined s.t. 
\begin{align}
\tilde{P}_{\hat{x}_{i-1},x}(\hat{t}_i;  \boldsymbol{u}) = \mathbb{P}(\hat{x}_i = x,u_i|\hat{x}_{i-1}, u_{i-1},\hat{t}_i) = g(u_i|u_{i-1},\hat{x}_{i-1},\hat{x}_i=x,\hat{t}_i) \cdot P_{\hat{x}_{i-1},x} (\hat{t}_i) \label{seqTransMat}
\end{align}
for all states $x\in\mathcal{S}$ and epochs $i=1,\dots,m$. This step corresponds to the bottom left/right diagrams in Figure \ref{samplingExample}. On the left, we see an \textit{empty} frame of Poisson holding times $\hat{\boldsymbol{t}}$, which defines a random time-discretization of the time interval $[0,T]$. In this example $\boldsymbol{u}$ is defined so that, whenever a blue rectangle is shown, $\tilde{P}_{x,x'}(\hat{t}_i ;\boldsymbol{u})=0$ for all $x\neq x'$ within $\mathcal{S}$; hence, these epochs correspond with self-transitions in any newly sampled sequence of states. On the right, we find a new trajectory after a forward-backward pass and discarding all virtual transitions. 

\section{Efficient augmentation over restricted sets of candidate times} \label{effSection}

Next, we present novel designs of auxiliary variables associated with \textit{reference} uniformization-based data-augmentation algorithms, and further highlight the shortcomings of traditional methods for inference with population models \cite[see e.g.][and references therein]{hobolth2009,rao13a}. In later sections, we build over these results in order to scale sampling procedures, leading to efficiency results reported in Subsection \ref{contributions}. 

\subsection{Two-step data augmentation} \label{twoSetpSubsec}

To begin with, let $\boldsymbol{u}=\{u_i\}_{i=1,\dots,m}$ in \eqref{intAux} be defined on some set $\mathcal{J} = \{\phi,\bar{\phi}\}$, where $\phi$ denotes an arbitrary undefined \textit{open} element, and $\bar{\phi}$ is a complementary \textit{locked} element. Throughout this section, elements of $\boldsymbol{u}$ are assumed mutually independent given $(\hat{\boldsymbol{t}},\hat{\boldsymbol{x}})$. We define a probability mass function for the conditional distribution $\phi|\hat{x}_{i-1},\hat{x}_i,\hat{t}_i$ with respect to a suitable count measure, for all $(\hat{t}_i,\hat{x}_{i-1},\hat{x}_i)\in[0,T]\times\mathcal{S}^2$ as follows:
\begin{align}
g(\phi|\hat{x}_{i-1},\hat{x}_i,\hat{t}_i) = \frac{\psi(\hat{t}_i,\hat{x}_i)}{\Omega+Q_{\hat{x}_i}(\hat{t}_i)} \quad \text{and} \quad  g(\bar{\phi}|\hat{x}_{i-1},\hat{x}_i,\hat{t}_i) = 1 - g(\phi|\hat{x}_{i-1},\hat{x}_i,\hat{t}_i),    \quad \text{if} \quad \hat{x}_{i-1}=\hat{x}_i,  \label{densGpoissonsplit}
\end{align}
with $g(\phi|\hat{x}_{i-1},\hat{x}_i,\hat{t}_i) = 1$ otherwise. Here, $\psi:[0,T]\times\mathcal{S}\rightarrow\mathbb{R}_+$ is any operator that assigns real-valued \textit{intensities} across time to the various states in $\mathcal{S}$, and must satisfy $\psi(t,x)\in(0,\Omega + Q_x(t)]$ for all $(t,x)\in[0,T]\times\mathcal{S}$.

\begin{proposition} \label{propDataAug}
Let $X=(\boldsymbol{t},\boldsymbol{x})$ be an MJP realization with \textit{intensity} $\boldsymbol{Q}=\{Q(t) \, : \, t\geq 0\}$, s.t. $\boldsymbol{t}=\{t_0,\dots,t_{n}\}$ and $\boldsymbol{x}=\{x_0,\dots,x_{n}\}$. Consider an augmentation procedure for $X$, to a triplet $(\hat{\boldsymbol{t}},\hat{\boldsymbol{x}},\boldsymbol{u})$, where 
\begin{itemize}
\item A sequence $\hat{\boldsymbol{t}}$ augments $\boldsymbol{t}$ by adding virtual times from two jointly independent Poisson processes, 
\begin{itemize}
\item a `controlled' process with rate $\psi(t,x_i)>0$, $t\in(t_i,t_{i+1})$ within intervals of $\boldsymbol{t}$, and
\item a `compensating' process with rate $\Omega + Q_{x_i}(t) - \psi(t,x_i)$, $t\in (t_i,t_{i+1}), i\geq 0$.
\end{itemize}
\item An augmented sequence of states $\hat{\boldsymbol{x}}$ is deterministically assigned given knowledge of $\hat{\boldsymbol{t}},\boldsymbol{t},\boldsymbol{x}$.
\item Auxiliary variables $\boldsymbol{u}$ are deterministically assigned so that
\begin{itemize}
\item $u_i = \bar{\phi}$ for all $i\geq 1$ where time $\hat{t}_i$ in $\hat{\boldsymbol{t}}$ was sampled from the `compensating' Poisson process,
\item $u_i = \phi$ otherwise; i.e. either $\hat{t}_i$ was sampled from the `controlled' process, or $\hat{t}_i\in\boldsymbol{t}$. 
\end{itemize}
\end{itemize}
Then, this construction yields an statistically equivalent triplet $(\hat{\boldsymbol{t}},\hat{\boldsymbol{x}},\boldsymbol{u})$, when compared to sampling $(\hat{\boldsymbol{t}},\hat{\boldsymbol{x}})$ from \eqref{virtualJumps} followed by auxiliary variables $\boldsymbol{u}$ from \eqref{densGpoissonsplit}.
\end{proposition}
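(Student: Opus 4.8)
The plan is to reduce the statement to the colouring (thinning) theorem for inhomogeneous Poisson processes, applied interval by interval. First I would note that in both constructions the sequence $\hat{\boldsymbol{x}}$ is the \emph{same} deterministic function of $(\hat{\boldsymbol{t}},\boldsymbol{t},\boldsymbol{x})$ — each added time is a self-transition at whichever state the reference path occupies — so the joint law of the triplet factors as the conditional law of $(\hat{\boldsymbol{t}},\boldsymbol{u})$ given $(\boldsymbol{t},\boldsymbol{x})$ times a point mass on that value of $\hat{\boldsymbol{x}}$. Hence it suffices to show that $(\hat{\boldsymbol{t}},\boldsymbol{u}) \mid (\boldsymbol{t},\boldsymbol{x})$ has the same law under the two constructions. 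In both, the original times $t_i\in\boldsymbol{t}$ are retained and receive $u_i=\phi$ with probability one, and, conditional on $(\boldsymbol{t},\boldsymbol{x})$, the added virtual times and their labels are generated independently across the disjoint intervals $(t_i,t_{i+1})$, $i=0,\dots,n$ (with $t_{n+1}=T$); so it is enough to compare the two constructions on a single such interval, on which the reference state equals the fixed value $x_i$.

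On the interval $(t_i,t_{i+1})$ the second construction runs as follows: by \eqref{virtualJumps} (as recorded in the discussion immediately below it) the virtual times form an inhomogeneous Poisson process with intensity $\Omega+Q_{x_i}(t)$, and then by \eqref{densGpoissonsplit} each virtual time at epoch $\hat t$ is independently labelled $\phi$ with probability $p(\hat t)\coloneqq \psi(\hat t,x_i)/(\Omega+Q_{x_i}(\hat t))$ and $\bar\phi$ otherwise; the hypothesis $\psi(t,x)\in(0,\Omega+Q_x(t)]$ guarantees $p(\cdot)\in(0,1]$, so this is a legitimate position-dependent marking. I would then invoke the colouring theorem: splitting a Poisson process of intensity $\nu(t)=\Omega+Q_{x_i}(t)$ by an independent two-colour marking with colour-$\phi$ probability $p(t)$ yields two \emph{independent} Poisson processes, the $\phi$-coloured one with intensity $p(t)\nu(t)=\psi(t,x_i)$ and the $\bar\phi$-coloured one with intensity $(1-p(t))\nu(t)=\Omega+Q_{x_i}(t)-\psi(t,x_i)$. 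These are exactly the ``controlled'' and ``compensating'' Poisson processes of the first construction, and in both constructions the labels are assigned by the same rule ($\bar\phi$ on the compensating points, $\phi$ on the controlled points and on the points of $\boldsymbol{t}$). So on each interval the two constructions induce the same joint law of (added times, labels); taking the product over $i=0,\dots,n$ and re-attaching the deterministic $\hat{\boldsymbol{x}}$ gives the claimed equivalence.

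I do not expect a genuine obstacle: the content is essentially the equivalence between ``superpose then colour'' and ``colour first'' for Poisson processes. The only points needing care are (i) reading off from \eqref{virtualJumps} that the virtual times on $(t_i,t_{i+1})$ really do form an inhomogeneous Poisson process of rate $\Omega+Q_{x_i}(t)$ — which is precisely the interpretation of \eqref{virtualJumps} stated in the text — and (ii) checking that the labelling in \eqref{densGpoissonsplit} acts only on self-transition epochs and is conditionally independent across epochs with mark probabilities depending solely on the epoch's time and state, so that the marking theorem applies; once these are granted, the identity $p(t)\nu(t)=\psi(t,x_i)$ is immediate and the proof closes.
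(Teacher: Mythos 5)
Your argument is correct and matches the paper's (omitted) proof sketch: the paper likewise reduces to the intervals $[t_i,t_{i+1})$ by the Markov property and observes that $\bar\phi$ labels can only arise from the compensating process, leaving the rest as "straightforward." Your explicit appeal to the Poisson colouring theorem — splitting the rate-$(\Omega+Q_{x_i}(t))$ virtual-time process with mark probability $\psi(t,x_i)/(\Omega+Q_{x_i}(t))$ into independent controlled and compensating processes — is exactly the formalization the authors leave implicit, and your preliminary reduction (same deterministic $\hat{\boldsymbol{x}}$, deterministic $\phi$ on the retained times of $\boldsymbol{t}$) is sound.
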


Due to Markovian properties, we only require to test equivalence in density representations for realizations $(\hat{\boldsymbol{t}},\hat{\boldsymbol{x}},\boldsymbol{u})$ restricted to intervals $[t_i,t_{i+1}), i\geq 0$. This is however straightforward, by noting that $\bar{\phi}$ can only be sampled from the \textit{`compensating'} Poisson process in the newly described data augmentation procedure; thus, the proof is omitted.
At a basic level, we note that \textit{locked} elements $\bar{\phi}$ in $\boldsymbol{u}$ may only be a consequence of virtual transitions within any \textit{uniformized} trajectory representation $(\hat{\boldsymbol{t}},\hat{\boldsymbol{x}})\in\mathcal{X}$; in Figure \ref{samplingExample}, these correspond to the blue rectangles on the top-right diagram. 

In view of \eqref{densGpoissonsplit}, note that \textit{forward filtering} steps for a sequence $\hat{\boldsymbol{x}}$ in \eqref{StSpRep}, conditioned on $\boldsymbol{u}$, reduce to
\begin{align*}
\mathbb{P}(\hat{x}_i=x|u_1,\dots,u_{i-1},u_i=\bar{\phi};\hat{\boldsymbol{t}}) & = \sum_{x'\in\mathcal{S}} \mathbb{P}(\hat{x}_i=x,\hat{x}_{i-1}=x'|u_1,\dots,u_{i-1},u_i=\bar{\phi};\hat{\boldsymbol{t}}) \\
& \propto g(\bar{\phi}|x,x,\hat{t}_i) \cdot P_{x,x} (\hat{t}_i) \cdot \mathbb{P}(\hat{x}_{i-1}=x|u_1,\dots,u_{i-1};\hat{\boldsymbol{t}})  \\
& \propto \bigg(1+\frac{Q_x(\hat{t}_i)-\psi(\hat{t}_i,x)}{\Omega}\bigg) \cdot \mathbb{P}(\hat{x}_{i-1}=x|u_1,\dots,u_{i-1};\hat{\boldsymbol{t}}),
\end{align*}
for all $x\in\mathcal{S}$ and whenever $u_i=\bar{\phi}$, $i\geq 1$; i.e. there exists a direct probabilistic correspondence across states over \textit{locked} time epochs. Moreover, if $u_j = \bar{\phi}$ for subsequent $j = i+1,\dots,i+k$, it follows
\begin{align}
\mathbb{P}(\hat{x}_{i+k}=x|u_1,\dots,u_i,u_{i+1}=\bar{\phi},&\dots,u_{i+k}=\bar{\phi};\hat{\boldsymbol{t}}) \propto \nonumber \\ &\prod_{j=1}^k \bigg(1+\frac{Q_x(\hat{t}_{i+j})-\psi(\hat{t}_{i+j},x)}{\Omega}\bigg) \cdot  \mathbb{P}(\hat{x}_i=x|u_1,\dots,u_i;\hat{\boldsymbol{t}}), \label{weightEq}
\end{align}
for all $ x\in\mathcal{S}$. Similarly, probabilities for \textit{backward sampling} steps, conditioned on $\boldsymbol{u}$, are given by
\begin{align*}
\mathbb{P}(\hat{x}_{i} = x|\hat{x}_{i+1},u_{i+1}=\bar{\phi};\hat{\boldsymbol{t}}) =  \mathbb{I}(x=\hat{x}_{i+1}), \quad \text{for all} \, x\in\mathcal{S}, \, \text{and}\, i\geq 0\, \text{s.t.}\, u_{i+1}=\bar{\phi},
\end{align*}
and are thus deterministic at times with auxiliary \textit{locked} instances. Trivially, the output of a \textit{uniformized} augmented sequence $\hat{\boldsymbol{x}}$ must satisfy $\hat{x}_{i} = \hat{x}_{i+1} = \dots = \hat{x}_{i+k-1} = \hat{x}_{i+k}$ whenever $u_j = \bar{\phi}$ for all $j = i+1,\dots,i+k$. Hence, jumps in $\hat{\boldsymbol{x}}$ are restricted to times with auxiliary \textit{open} instances  $\phi$. Additionally, since these correspond to $\boldsymbol{t}$ along with random draws from the `\textit{controlled}' process in Proposition \ref{propDataAug}, a (non-stochastic) \textit{weight} matrix for transitions in $\hat{\boldsymbol{x}}$  follows from \eqref{seqTransMat}-\eqref{densGpoissonsplit}, i.e.
\begin{align}
\tilde{P}_{\hat{x}_{i-1},x}(\hat{t}_i; u_i=\phi) =  g(\phi|\hat{x}_{i-1},\hat{x}_i=x,\hat{t}_i) \cdot P_{\hat{x}_{i-1},x} (\hat{t}_i) =  \frac{Q_{\hat{x}_{i-1},x}(\hat{t}_i)}{\Omega} \label{probjumpOpen}
\end{align}
whenever $x\neq \hat{x}_{i-1}$, and 
\begin{align}
\tilde{P}_{\hat{x}_{i-1},x}(\hat{t}_i; u_i=\phi) =  g(\phi|\hat{x}_{i-1},\hat{x}_i=\hat{x}_{i-1},\hat{t}_i) \cdot P_{x,x} (\hat{t}_i) =  \frac{\psi(\hat{t}_i,\hat{x}_{i-1})}{\Omega} \label{probjumpVirtual}
\end{align}
otherwise. In conclusion, owing to \eqref{weightEq}-\eqref{probjumpVirtual} and Proposition \ref{propDataAug}, a sampler of sequentially correlated MJP trajectories $(\boldsymbol{t},\boldsymbol{x})|\boldsymbol{Q}(\boldsymbol{\lambda})$, where candidate times $\hat{\boldsymbol{t}}$ are governed by an arbitrary intensity $\psi(\cdot)$, is formalized in Algorithm \ref{naiveAlgo}. There, note that the dominating rate $\Omega$ within transition matrices $\tilde{P}(\cdot)$ in \eqref{transPdef} is dropped; this corresponds to equations \eqref{probjumpOpen}-\eqref{probjumpVirtual} and fades up to proportionality.

\begin{algorithm}[t]
  \normalsize
\caption{Naive construction of correlated MJP trajectories on $\mathcal{X}$.}
\label{naiveAlgo}
\vspace{2pt}
\begin{minipage}[t]{0.08\textwidth}
\textit{Input:}
\end{minipage}
\begin{minipage}[t]{0.9\textwidth}
Sequence of intensity matrices  $\boldsymbol{Q}=\boldsymbol{Q}(\boldsymbol{\lambda})$ parametrized by $\boldsymbol{\lambda}$. \\[1pt]
A (strictly) dominating rate $\Omega >\max_{x\in\mathcal{S}} \sup_{t\in[0,T]}|Q_x(t)|,$. \\[1pt]
An MJP trajectory $(\boldsymbol{t},\boldsymbol{x})\in\mathcal{X}$ with $\boldsymbol{t}=\{t_0,\dots,t_{n}\}$ and $\boldsymbol{x}=\{x_0,\dots,x_{n}\}$. \\[1pt]
Intensity operator $\psi:[0,T]\times\mathcal{S}\rightarrow\mathbb{R}_+$ for candidate times, s.t. $\psi(t,x)\leq\Omega + Q_x(t)$.
\end{minipage}
\vspace{4pt}

\begin{minipage}[t]{0.08\textwidth}
\textit{Output:}
\end{minipage}
\begin{minipage}[t]{0.9\textwidth}
A new MJP trajectory $(\boldsymbol{t},\boldsymbol{x})_{new}\in\mathcal{X}$ sampled from the density $f_{X}(\boldsymbol{t},\boldsymbol{x}|\boldsymbol{Q})$ in \eqref{pathProbs}.
\end{minipage}
\vspace{-6pt}

\hrulefill
\begin{algorithmic}[1] 
\State Create an (ordered) set of candidate times $\hat{\boldsymbol{t}}=\{\hat{t}_0,\dots,\hat{t}_{m}\}$, $m\geq n$, attaching to $\boldsymbol{t}$ auxiliary events from a Poisson process; rate $\psi(t,x_i)>0$, within intervals $(t_i,t_{i+1})$, with $t_{n+1}=T$. \vspace{2pt}
\State For $i=0,\dots,m$, draw random amount $k_i$ of \textit{weighting} times $\boldsymbol{s}_i=\{s_1,\dots,s_{k_i} \}$ over the interval $(\hat{t}_i,\hat{t}_{i+1})$; use compensating rate $\Omega + Q_{\hat{x}_i}(t) - \psi(t,\hat{x}_i)$, $t\in (\hat{t}_i,\hat{t}_{i+1})$. \vspace{2pt}
\State Draw a new state sequence $\hat{\boldsymbol{x}}=\{\hat{x}_0,\dots,\hat{x}_{m}\}$ with a forward-backward procedure; given initial distribution $\pi(x)$, transition \textit{weight} matrices
\begin{align}
\tilde{P}(\hat{t}_i) = \textnormal{diag}( \{\psi(\hat{t}_i,x)-Q_{x}(\hat{t}_i) \, : \, x\in\mathcal{S}\}) + Q(\hat{t}_i) , \label{transPdef}
\end{align}
and (random) importance weights
\begin{align}
w_i(x)=\prod_{s\in\boldsymbol{s}_i} \bigg(1+\frac{Q_x(s)-\psi(s,x)}{\Omega}\bigg) , \label{weightsDef}
\end{align} 
imposed over epochs $i\in\{0,\dots,m\}$.  \vspace{2pt}
\State Remove self-transitions on $(\hat{\boldsymbol{t}},\hat{\boldsymbol{x}})$ to produce $(\boldsymbol{t},\boldsymbol{x})_{new}$.
\end{algorithmic}
\end{algorithm}

We refer to Algorithm \ref{naiveAlgo} as a \textit{naive} approach; its purpose is to serve as a starting point. Noticeably, the procedure requires forward-backward steps. It is thus inefficient to sample plain MJP trajectories $X\equiv(\boldsymbol{t},\boldsymbol{x})\in\mathcal{X}$ subject to no observations, in comparison to a generative approach such as \textit{Gillespie's algorithm} for stationary systems \citep{gillespie1977exact}. However, Algorithm \ref{naiveAlgo} is readily amendable for conditioning on observations $\boldsymbol{O}=\{O_r\}_{r\geq 1}$ commonly encountered in applications. This is because, by assumption, observation models are independent of auxiliary jump events within augmented representations $(\hat{\boldsymbol{t}},\hat{\boldsymbol{x}})$, and further independent of auxiliary variables $\boldsymbol{u}$.

\noindent\textbf{Conditioning on system state observations.} In the traditional set-up, $\boldsymbol{O}=\{O_r\}_{r\geq 1}$ is a sequence of population level observations at (ordered) time points $t_r\in[0,T]$, $r\geq 1$, s.t. 
$\mathcal{L}(\boldsymbol{O}|X)=\prod_{r\geq 1}f(O_r|X_{t_r})$
for some mass/density function $f(\cdot)$ over an arbitrary support set. A conditional probability mass function for an augmented sequence of states $\hat{\boldsymbol{x}}$ is given by
\begin{align*}
f_{\hat{\boldsymbol{x}}}(\hat{x}_0,\dots,\hat{x}_{m}|\hat{\boldsymbol{t}},&\hat{\boldsymbol{s}}_1,\dots,\hat{\boldsymbol{s}}_m, \boldsymbol{Q}(\boldsymbol{\lambda}),\boldsymbol{O},\Omega) \propto  \\ 
& \pi(\hat{x}_0) \cdot \prod_{i=1}^m \tilde{P}_{\hat{x}_{i-1},\hat{x}_i}(\hat{t}_i) \cdot \prod_{i=0}^m \bigg[ \prod_{s\in\hat{\boldsymbol{s}}_i} \bigg(1+\frac{Q_{\hat{x}_i}(s)-\psi(s,\hat{x}_i)}{\Omega}\bigg) \cdot  \prod_{r : t_r\in[\hat{t}_i,\hat{t}_{i+1})}  f(O_r|\hat{x}_i) \bigg] , 
\end{align*}
by noting that a population observation $O_r$ at any time $t_r\in[t_i,t_{i+1}]$ is only a consequence of $\hat{x}_i$, for all $i=1,\dots,m$. Thus, an auxiliary variable sampling procedure as introduced in Algorithm \ref{naiveAlgo}, with importance weights in \eqref{weightsDef} replaced by
$$w_i(x)=\prod_{s\in\hat{\boldsymbol{s}}_i} \bigg(1+\frac{Q_{x}(s)-\psi(s,x)}{\Omega}\bigg) \cdot  \prod_{r : t_r\in[\hat{t}_i,\hat{t}_{i+1})}  f(O_r|x) , \quad x\in\mathcal{S}$$
for $i=0,\dots,m$, defines a Markov chain over MJP trajectories in $\mathcal{X}$, with stationary distribution $
f_{X}(\boldsymbol{t},\boldsymbol{x}|\boldsymbol{\lambda},\boldsymbol{O})$ in \eqref{targetdens}. Note that $\mathcal{L}(\boldsymbol{O}|X)$ can accommodate both random and deterministic observations (by means of identity functions); thus, this offers an adaptable exact framework not restricted to jump models subject to measurement error \citep[cf.][]{golightly2015bayesian}. Finally, if the population process is assumed stationary and $\psi(t,x) \equiv \psi(x) = \Omega + Q_{x}$, for all $(t,x)\in[0,T]\times\mathcal{S}$, then the full procedure simplifies to Algorithm 2 in \cite{rao13a}.

\noindent\textbf{Conditioning on system jump observations.} Relevant to epidemics, network queues and genetic chains, let  $\boldsymbol{O}=\{O_r\}_{r\geq 1}$ be a sequence of jump observations at time points $t_r\in[0,T]$, $r\geq 1$, s.t.
\begin{align}
\mathcal{L}(\boldsymbol{O}|X)= \prod_{i=1,\dots,n} \Big[ (1-p_{x_{i-1},x_i})  \cdot \prod_{r\geq 1} \mathbb{I}(t_i\neq t_r) + \sum_{r\geq 1} \mathbb{I}(t_i=t_r) \cdot p_{x_{i-1},x_i} \cdot f(O_r|x_{i-1},x_i) \Big] \label{likJumps}
\end{align} 
for trajectories $X=(\boldsymbol{t},\boldsymbol{x})$, where $p_{x,x'}\in[0,1]$, $x,x'\in\mathcal{S}$ denotes the probability that a process jump $x\rightarrow x'$ triggers an observation with a conditional density $f(\cdot)$; and $p_{x,x}=0$ for all $x\in\mathcal{S}$. Then, 
\begin{align*}
f_{\hat{\boldsymbol{x}}}(\hat{x}_0,\dots,\hat{x}_{m}|\hat{\boldsymbol{t}},&\hat{\boldsymbol{s}}_1,\dots,\hat{\boldsymbol{s}}_m, \boldsymbol{Q}(\boldsymbol{\lambda}),\boldsymbol{O},\Omega)  \propto\pi(\hat{x}_0) \cdot \prod_{i=0}^m  \prod_{s\in\hat{\boldsymbol{s}}_i} \bigg(1+\frac{Q_{\hat{x}_i}(s)-\psi(s,\hat{x}_i)}{\Omega}\bigg) \cdot \\
& \prod_{i=1}^m \tilde{P}_{\hat{x}_{i-1},\hat{x}_i}(\hat{t}_i)  \Big[ (1-p_{\hat{x}_{i-1},\hat{x}_i})  \cdot \prod_{r\geq 1} \mathbb{I}(\hat{t}_i\neq t_r) + \sum_{r\geq 1} \mathbb{I}(\hat{t}_i=t_r) \cdot p_{\hat{x}_{i-1},\hat{x}_i} \cdot f(O_r|\hat{x}_{i-1},\hat{x}_i) \Big] , 
\end{align*}
and a sampling procedure as introduced in Algorithm \ref{naiveAlgo}, where $\hat{P}$ in \eqref{transPdef} is replaced by a sequence of matrices $P_i$, $i=1,\dots,m$, s.t.
\begin{align*}
P_{x,x'}(\hat{t}_i) = Q_{x,x'}(\hat{t}_i)  \cdot  p_{x,x'} \cdot f(O_r|x,x') 
\end{align*}
whenever $\hat{t}_i=t_r$ for some $r\geq 1$, and
\begin{align*}
P_{x,x'}(\hat{t}_i) =  Q_{x,x'}(\hat{t}_i)  \cdot (1- p_{x,x'}) \quad \text{if} \; x\neq x', \quad \text{with} \quad P_{x,x}(\hat{t}_i) = \psi(\hat{t}_i,x),
\end{align*}
otherwise, defines a Markov chain over MJP trajectories in $\mathcal{X}$, with stationary distribution $
f_{X}(\boldsymbol{t},\boldsymbol{x}|\boldsymbol{\lambda},\boldsymbol{O})$ in \eqref{targetdens}. Above, equation \eqref{likJumps} is explained by the fact that, in common application areas, only certain types of jumps are \textit{observable}. For instance, removal times of infective individuals are often the basis for inferential epidemic studies, however, infectious times are never observed.

In all cases, the associated MCMC samplers yield ergodic Markov chains over posterior MJP trajectories. This is because, since matrices in $\boldsymbol{Q}$ are sparse, a conditional sequence $\hat{\boldsymbol{x}}|\hat{\boldsymbol{t}}$ is always supported within a finite product space of $n(\hat{\boldsymbol{t}})$ subsets of $\mathcal{S}$. However, $\psi(t,x)>0$ for all $(t,x)\in[0,T]\times\mathcal{S}$ by definition, and any full sequence in $\mathcal{X}$ is always accessible by sampling an appropriate number of transition times. All trajectories are thus aperiodic and positive recurrent; moreover, auxiliary variables leave the target marginal distribution unaltered and the sampler will reach the desired invariant distribution.

\subsection{Accelerating performance with stationary processes}

In the reduction to a strictly stationary system (so rates are independent of time), forward-filtering steps in \eqref{weightEq} reduce to
\begin{align*}
\mathbb{P}(\hat{x}_{i+k}=x|u_1,\dots,u_i,u_{i+1}=\bar{\phi},\dots,u_{i+k}=\bar{\phi};\hat{\boldsymbol{t}}) \propto \bigg[1+\frac{Q_x-\psi(x)}{\Omega}\bigg]^{k}  \cdot  \mathbb{P}(\hat{x}_i=x|u_1,\dots,u_i;\hat{\boldsymbol{t}}), 
\end{align*}
whenever $u_j=\bar{\phi}$, $j=i+1,\dots,i+k$. Hence, times in $\hat{\boldsymbol{t}}$ generated by a `\textit{compensating}' process in Proposition \ref{propDataAug} are of no relevance; only Poisson counts $k_i, \, i=0,\dots,m$ in Algorithm \ref{naiveAlgo} must be retained. Thus, a sampling scheme for stationary MJPs,  similarly adaptable to observations, reduces to Algorithm \ref{naiveAlgoUnif}.
\begin{algorithm}[t]
  \normalsize
\caption{Reduced two-step construction of correlated stationary MJP trajectories on $\mathcal{X}$.}
\label{naiveAlgoUnif}
\vspace{2pt}
\begin{minipage}[t]{0.08\textwidth}
\textit{Input:}
\end{minipage}
\begin{minipage}[t]{0.9\textwidth}
Infinitesimal generator matrix $Q=Q(\boldsymbol{\lambda})$ parametrized by $\boldsymbol{\lambda}$. \\[1pt]
A (strictly) dominating rate for $\Omega > \max_{x\in\mathcal{S}} |Q_x|$. \\[1pt]
An MJP trajectory $(\boldsymbol{t},\boldsymbol{x})\in\mathcal{X}$ with $\boldsymbol{t}=\{t_0,\dots,t_{n}\}$ and $\boldsymbol{x}=\{x_0,\dots,x_{n}\}$. \\[1pt]
\textit{Intensity} operator $\psi:\mathcal{S}\rightarrow\mathbb{R}_+$ for candidate times, s.t. $\psi(x)\leq\Omega + Q_x$ for all $x\in\mathcal{S}$.
\end{minipage}
\vspace{4pt}

\begin{minipage}[t]{0.08\textwidth}
\textit{Output:}
\end{minipage}
\begin{minipage}[t]{0.9\textwidth}
A new MJP trajectory $(\boldsymbol{t},\boldsymbol{x})_{new}\in\mathcal{X}$ sampled from the density $f_{X}(\boldsymbol{t},\boldsymbol{x}|Q)$ in \eqref{pathProbs}.
\end{minipage}
\vspace{-6pt}

\hrulefill
\begin{algorithmic}[1] 
\State Create an (ordered) set of candidate times $\hat{\boldsymbol{t}}=\{\hat{t}_0,\dots,\hat{t}_{m}\}$, $m\geq n$, attaching to $\boldsymbol{t}$ auxiliary events from a Poisson process; rate $\psi(x_i)>0$, within intervals $(t_i,t_{i+1})$, with $t_{n+1}=T$. \vspace{2pt}
\State Sample a sequence $\boldsymbol{k}=\{k_0,\dots,k_{m}\}$ of Poisson count variables with rates
\begin{align}
\big[\Omega + Q_{\hat{x}_{i}} - \psi({\hat{x}_{i}}) \big] \cdot (\hat{t}_{i+1}-\hat{t}_i), \quad i=0,\dots,m, \quad \text{s.t.} \quad \hat{t}_{m+1} = T. \label{countsK}
\end{align} 
\State Draw a new sequence $\hat{\boldsymbol{x}}=\{\hat{x}_0,\dots,\hat{x}_{m}\}$ with a forward-backward procedure; given initial distribution $\pi(x)$, transition \textit{weight} matrix
\begin{align*}
\tilde{P} = \textnormal{diag}( \{\psi(x)-Q_{x} \, : \, x\in\mathcal{S}\}) + Q , 
\end{align*}
and (random) importance weights
\begin{align*}
w_i(x)=\bigg(1+\frac{Q_x-\psi(x)}{\Omega}\bigg)^{k_i} , \quad i=0,\dots,m.
\end{align*} 
\State Remove self-transitions on $(\hat{\boldsymbol{t}},\hat{\boldsymbol{x}})$ to produce $(\boldsymbol{t},\boldsymbol{x})_{new}$.
\end{algorithmic}
\end{algorithm}
To aid the understanding of these results, Figure \ref{samplingExample2} shows an example with a graphical overview of a two-step data augmentation leading to count variables \eqref{countsK}. 
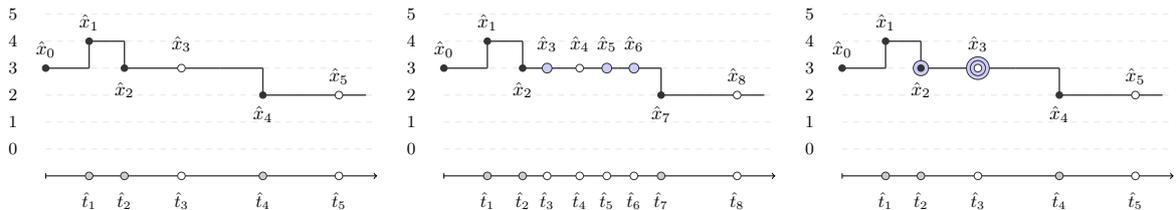
\begin{figure*}[b]
\begin{center}
\resizebox{0.32\textwidth}{!}{%
\begin{tikzpicture}
\draw[->,line width=0.10mm] (3.2cm,-1.6cm) -- ++ (6.1,0cm);
\draw (3.2,-1.65cm) -- ++(0cm,0.1cm);
% State indicators and back lines
\foreach \i in {0,...,5} {
 \node at (2.6,-1.1 + 0.5*\i) {\small $\i$}; \draw[dashed,draw=black!10!white,line width=0.10mm] (3.2,-1.1 + 0.5*\i) -- ++ (6.1,0);
}

\draw[-,draw=black!80!white,line width=0.25mm] (3.2,0.4) -- ++ (0.8,0); \filldraw[fill=black!80!white,draw=black!80!white] (3.2,0.4) circle [radius=0.06cm]; 
\draw[-,draw=black!80!white,line width=0.25mm] (4,0.4) -- (4,0.9);

\draw[-,draw=black!80!white,line width=0.25mm] (4,-1.1 + 0.5*4) -- ++ (0.65,0); \filldraw[fill=black!80!white,draw=black!80!white] (4,-1.1 + 0.5*4) circle [radius=0.06cm]; 
\draw[-,draw=black!80!white,line width=0.25mm] (4.65cm,0.4cm) -- (4.65cm,0.9cm);

\draw[-,draw=black!80!white,line width=0.25mm] (4.65cm,0.4cm) -- ++ (1.05cm,0cm); \filldraw[fill=black!80!white,draw=black!80!white] (4.65cm,0.4cm) circle [radius=0.06cm];

\draw[-,draw=black!80!white,line width=0.25mm] (5.7,-1.1 + 0.5*3) -- ++ (0.7,0); \filldraw[fill=black!00!white,draw=black!80!white] (5.7,-1.1 + 0.5*3) circle [radius=0.07cm];

\draw[-,draw=black!80!white,line width=0.25mm] (6.4,-1.1 + 0.5*3) -- ++ (0.8,0); 
\draw[-,draw=black!80!white,line width=0.25mm] (7.2,-0.1cm) -- (7.2,0.4cm);

\draw[-,draw=black!80!white,line width=0.25mm] (7.2,-0.1) -- ++ (1.4,0); \filldraw[fill=black!80!white,draw=black!80!white] (7.2,-0.1) circle [radius=0.06cm]; 

\draw[-,draw=black!80!white,line width=0.25mm] (8.6,-0.1) -- ++ (0.5,0); \filldraw[fill=black!00!white,,draw=black!80!white] (8.6,-0.1) circle [radius=0.07cm]; 

% Jump times
\foreach \i in {5.7,8.6} {
\filldraw[fill=black!00!white,draw=black!80!white] (\i,-1.6) circle [radius=0.07cm]; % Circle
}
\foreach \i in {4,4.65,7.2} {
\filldraw[fill=black!20!white,draw=black!80!white] (\i,-1.6) circle [radius=0.07cm]; % Circle
}
% Jump letters
\node at (4,-2.06) {\small $\hat{t}_1$};
\node at (4.65,-2.06) {\small $\hat{t}_2$};
\node at (5.7,-2.06) {\small $\hat{t}_3$};
\node at (7.2,-2.06) {\small $\hat{t}_4$};
\node at (8.6,-2.06) {\small $\hat{t}_5$};
% States
\node at (3.2cm,0.75cm) {$\hat{x}_0$};
\node at (4cm,1.25cm) {$\hat{x}_1$};
\node at (4.65cm,0.0cm) {$\hat{x}_2$};
\node at (5.7cm,0.85cm) {$\hat{x}_3$};
\node at (7.2cm,-0.45cm) {$\hat{x}_4$};
\node at (8.6cm,0.25cm) {$\hat{x}_5$};
\end{tikzpicture}}\hfill
\resizebox{0.32\textwidth}{!}{%
\begin{tikzpicture}
\draw[->,line width=0.10mm] (3.2cm,-1.6cm) -- ++ (6.1,0cm);
\draw (3.2,-1.65cm) -- ++(0cm,0.1cm);
% State indicators and back lines
\foreach \i in {0,...,5} {
 \node at (2.6,-1.1 + 0.5*\i) {\small $\i$}; \draw[dashed,draw=black!10!white,line width=0.10mm] (3.2,-1.1 + 0.5*\i) -- ++ (6.1,0);
}

\draw[-,draw=black!80!white,line width=0.25mm] (3.2,0.4) -- ++ (0.8,0); \filldraw[fill=black!80!white,draw=black!80!white] (3.2,0.4) circle [radius=0.06cm]; 
\draw[-,draw=black!80!white,line width=0.25mm] (4,0.4) -- (4,0.9);

\draw[-,draw=black!80!white,line width=0.25mm] (4,-1.1 + 0.5*4) -- ++ (0.65,0); \filldraw[fill=black!80!white,draw=black!80!white] (4,-1.1 + 0.5*4) circle [radius=0.06cm]; 
\draw[-,draw=black!80!white,line width=0.25mm] (4.65cm,0.4cm) -- (4.65cm,0.9cm);

\draw[-,draw=black!80!white,line width=0.25mm] (4.65cm,0.4cm) -- ++ (1.05cm,0cm); \filldraw[fill=black!80!white,draw=black!80!white] (4.65cm,0.4cm) circle [radius=0.06cm]; \filldraw[fill=blue!20!white,draw=black!80!white] (5.1cm,0.4cm) circle [radius=0.09cm]; 

\draw[-,draw=black!80!white,line width=0.25mm] (5.7,-1.1 + 0.5*3) -- ++ (0.7,0); \filldraw[fill=black!00!white,draw=black!80!white] (5.7,-1.1 + 0.5*3) circle [radius=0.07cm]; \filldraw[fill=blue!20!white,draw=black!80!white] (6.2cm,0.4cm) circle [radius=0.09cm]; 

\draw[-,draw=black!80!white,line width=0.25mm] (6.4,-1.1 + 0.5*3) -- ++ (0.8,0);  \filldraw[fill=blue!20!white,draw=black!80!white] (6.7cm,0.4cm) circle [radius=0.09cm]; 
\draw[-,draw=black!80!white,line width=0.25mm] (7.2,-0.1cm) -- (7.2,0.4cm);

\draw[-,draw=black!80!white,line width=0.25mm] (7.2,-0.1) -- ++ (1.4,0); \filldraw[fill=black!80!white,draw=black!80!white] (7.2,-0.1) circle [radius=0.06cm]; 

\draw[-,draw=black!80!white,line width=0.25mm] (8.6,-0.1) -- ++ (0.5,0); \filldraw[fill=black!00!white,,draw=black!80!white] (8.6,-0.1) circle [radius=0.07cm]; 

% Jump times
\foreach \i in {5.1,5.7,6.2,6.7,8.6} {
\filldraw[fill=black!00!white,draw=black!80!white] (\i,-1.6) circle [radius=0.07cm]; % Circle
}
\foreach \i in {4,4.65,7.2} {
\filldraw[fill=black!20!white,draw=black!80!white] (\i,-1.6) circle [radius=0.07cm]; % Circle
}
% Jump letters
\node at (4,-2.06) {\small $\hat{t}_1$};
\node at (4.65,-2.06) {\small $\hat{t}_2$};
\node at (5.1,-2.06) {\small $\hat{t}_3$};
\node at (5.7,-2.06) {\small $\hat{t}_4$};
\node at (6.2,-2.06) {\small $\hat{t}_5$};
\node at (6.7,-2.06) {\small $\hat{t}_6$};
\node at (7.2,-2.06) {\small $\hat{t}_7$};
\node at (8.6,-2.06) {\small $\hat{t}_8$};
% States
\node at (3.2cm,0.75cm) {$\hat{x}_0$};
\node at (4cm,1.25cm) {$\hat{x}_1$};
\node at (4.65cm,0.0cm) {$\hat{x}_2$};
\node at (5.1cm,0.85cm) {$\hat{x}_3$};
\node at (5.7cm,0.85cm) {$\hat{x}_4$};
\node at (6.2cm,0.85cm) {$\hat{x}_5$};
\node at (6.7cm,0.85cm) {$\hat{x}_6$};
\node at (7.2cm,-0.45cm) {$\hat{x}_7$};
\node at (8.6cm,0.25cm) {$\hat{x}_8$};
\end{tikzpicture}}\hfill
\resizebox{0.32\textwidth}{!}{%
\begin{tikzpicture}
\draw[->,line width=0.10mm] (3.2cm,-1.6cm) -- ++ (6.1,0cm);
\draw (3.2,-1.65cm) -- ++(0cm,0.1cm);
% State indicators and back lines
\foreach \i in {0,...,5} {
 \node at (2.6,-1.1 + 0.5*\i) {\small $\i$}; \draw[dashed,draw=black!10!white,line width=0.10mm] (3.2,-1.1 + 0.5*\i) -- ++ (6.1,0);
}

\draw[-,draw=black!80!white,line width=0.25mm] (3.2,0.4) -- ++ (0.8,0); \filldraw[fill=black!80!white,draw=black!80!white] (3.2,0.4) circle [radius=0.06cm]; 
\draw[-,draw=black!80!white,line width=0.25mm] (4,0.4) -- (4,0.9);

\draw[-,draw=black!80!white,line width=0.25mm] (4,-1.1 + 0.5*4) -- ++ (0.65,0); \filldraw[fill=black!80!white,draw=black!80!white] (4,-1.1 + 0.5*4) circle [radius=0.06cm]; 
\draw[-,draw=black!80!white,line width=0.25mm] (4.65cm,0.4cm) -- (4.65cm,0.9cm);

\draw[-,draw=black!80!white,line width=0.25mm] (4.65cm,0.4cm) -- ++ (1.05cm,0cm);   \filldraw[fill=blue!20!white,draw=black!80!white] (4.65,-1.1 + 0.5*3) circle [radius=0.14cm]; \filldraw[fill=black!80!white,draw=black!80!white] (4.65cm,0.4cm) circle [radius=0.06cm];

\draw[-,draw=black!80!white,line width=0.25mm] (5.7,-1.1 + 0.5*3) -- ++ (0.7,0); \filldraw[fill=blue!20!white,draw=black!80!white] (5.7,-1.1 + 0.5*3) circle [radius=0.21cm]; \filldraw[fill=blue!20!white,draw=black!80!white] (5.7,-1.1 + 0.5*3) circle [radius=0.14cm]; \filldraw[fill=black!00!white,draw=black!80!white] (5.7,-1.1 + 0.5*3) circle [radius=0.07cm];

\draw[-,draw=black!80!white,line width=0.25mm] (6.4,-1.1 + 0.5*3) -- ++ (0.8,0); 
\draw[-,draw=black!80!white,line width=0.25mm] (7.2,-0.1cm) -- (7.2,0.4cm);

\draw[-,draw=black!80!white,line width=0.25mm] (7.2,-0.1) -- ++ (1.4,0); \filldraw[fill=black!80!white,draw=black!80!white] (7.2,-0.1) circle [radius=0.06cm]; 

\draw[-,draw=black!80!white,line width=0.25mm] (8.6,-0.1) -- ++ (0.5,0); \filldraw[fill=black!00!white,,draw=black!80!white] (8.6,-0.1) circle [radius=0.07cm]; 

% Jump times
\foreach \i in {5.7,8.6} {
\filldraw[fill=black!00!white,draw=black!80!white] (\i,-1.6) circle [radius=0.07cm]; % Circle
}
\foreach \i in {4,4.65,7.2} {
\filldraw[fill=black!20!white,draw=black!80!white] (\i,-1.6) circle [radius=0.07cm]; % Circle
}

% Jump letters
\node at (4,-2.06) {\small $\hat{t}_1$};
\node at (4.65,-2.06) {\small $\hat{t}_2$};
\node at (5.7,-2.06) {\small $\hat{t}_3$};
\node at (7.2,-2.06) {\small $\hat{t}_4$};
\node at (8.6,-2.06) {\small $\hat{t}_5$};
% States
\node at (3.2cm,0.75cm) {$\hat{x}_0$};
\node at (4cm,1.25cm) {$\hat{x}_1$};
\node at (4.65cm,0.0cm) {$\hat{x}_2$};
\node at (5.7cm,0.85cm) {$\hat{x}_3$};
\node at (7.2cm,-0.45cm) {$\hat{x}_4$};
\node at (8.6cm,0.25cm) {$\hat{x}_5$};
\end{tikzpicture}}
\vskip 0in
\caption{Schematic of augmentation procedure to $(\hat{\boldsymbol{t}},\hat{\boldsymbol{x}},\boldsymbol{k})$. Left, a trajectory $(\hat{\boldsymbol{t}},\hat{\boldsymbol{x}})$ with virtual jumps. Centre, \textit{compensating} virtual epochs are superimposed. Right, superimposed epochs assigned as weights; times ignored.} 
\label{samplingExample2}
\end{center}
\vskip -0.1in
\end{figure*}
On the left, we find an augmented trajectory $(\hat{\boldsymbol{t}},\hat{\boldsymbol{x}})$; this includes the \textit{real} MJP $(\boldsymbol{t},\boldsymbol{x})$ along with two virtual jumps sampled from a `\textit{controlled}' Poisson process with rate $\psi(\cdot)>0$. In the centre, further virtual epochs are added from a `\textit{compensating}' process. This joint procedure corresponds to steps 1-2 within Algorithm \ref{naiveAlgo}, and is equivalent to \textit{splitting} augmentation steps for stationary processes outlined in \cite{rao13a}, where a larger sequence $\hat{\boldsymbol{t}}$ is directly sampled from \eqref{virtualJumps}. In the right diagram, the compensating virtual epochs are re-assigned as weights $k_i$, $i\geq 0$ over their corresponding nodes; within Algorithm \ref{naiveAlgoUnif}, the times may be ignored for the purpose of re-sampling a new trajectory $\hat{\boldsymbol{x}}$. 

These diagrams help to depict major shortcomings behind traditional uniformization schemes \citep[see][]{hobolth2009,rao13a} for inference with stationary systems. Note that any (augmented) sequence $\hat{\boldsymbol{t}}$ is effectively a random discretization of a time-interval $[0,T]$, and serves as a basis for \textit{forward-backward} procedures. Yet, population models are always governed by large/infinite generator matrices $Q$, and are tied to large dominating rates $\Omega > \max_{x\in\mathcal{S}} |Q_x|$. This leads to sizeable candidate sets $\hat{\boldsymbol{t}}$ with associated overheads during forward-backward procedures. However, underling trajectories in $\mathcal{X}$ are unlikely to consistently transition states in $\mathcal{S}$ whose departure rates are `\textit{close}' to $\Omega$. Thus, the majority of candidate times in $\hat{\boldsymbol{t}}$ will require thinning anyway. As observed in Figure \ref{samplingExample2}, this paper builds over data augmentation techniques that restrict the cardinality of $\hat{\boldsymbol{t}}$, and correspondingly penalise self-transitions in order to preserve asymptotic exactness.

\subsection{Limiting properties and arbitrarily large bounds}

We begin with a preliminary result regarding convergence of sequences of random variables.

\begin{lemma} \label{lemmaL2}
Let $a\in\mathbb{R}$ and $b,c\in\mathbb{R}_{>0}$ be some fixed constant values, and define random variables $u_\kappa$ by non-linear transformations  $$u_\kappa = \Big( 1+\frac{a}{\kappa} \Big)^{v_\kappa}, \quad v_\kappa\sim\mathcal{P}\big([\kappa + b]\cdot c\big)$$ for all $\kappa\in\mathbb{R}_{>0}$, s.t. every $v_\kappa$ denotes a Poisson random variable with mean rate $(\kappa + b)\cdot c$. Then, $u_\kappa\xrightarrow{L^2} e^{a\cdot c}$  as $\kappa\rightarrow\infty$, and any sequence of random variables $u_i$, $i\in\mathcal{I}$ defined over an increasing and unbounded index set $I$ converges in mean square to the same constant value.
\end{lemma}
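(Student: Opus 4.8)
The plan is to prove the $L^2$ statement directly from the first two moments of $u_\kappa$, using the elementary observation that, when the target is a deterministic constant $L$, the convergence $u_\kappa\xrightarrow{L^2}L$ is equivalent to $\mathbb{E}[u_\kappa]\to L$ together with $\mathbb{E}[u_\kappa^2]\to L^2$, since $\mathbb{E}\big[(u_\kappa-L)^2\big]=\mathbb{E}[u_\kappa^2]-2L\,\mathbb{E}[u_\kappa]+L^2$. The candidate limit here is $L=e^{ac}$, and both moments will turn out to be finite for every $\kappa>0$, so the statement is well posed.

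I would compute the two moments via the probability generating function of a Poisson variable: if $v\sim\mathcal{P}(\lambda)$ then $\mathbb{E}[z^{v}]=e^{\lambda(z-1)}$, a series which converges absolutely for every $z\in\mathbb{R}$, with $\mathbb{E}[|z|^{v}]<\infty$, so the expectation is legitimate even when the base is negative (the single point that deserves a line of care, and the reason the claim is phrased as a limit in $\kappa$ rather than an identity at each $\kappa$). Taking $\lambda=(\kappa+b)c$ and $z=1+a/\kappa$ gives
\[
\mathbb{E}[u_\kappa]=\exp\!\Big((\kappa+b)c\cdot\tfrac{a}{\kappa}\Big)=\exp\!\Big(ac\big(1+\tfrac{b}{\kappa}\big)\Big)\;\xrightarrow[\kappa\to\infty]{}\;e^{ac},
\]
while $u_\kappa^2=\big((1+a/\kappa)^2\big)^{v_\kappa}$, so the same identity with $z=(1+a/\kappa)^2$ gives
\[
\mathbb{E}[u_\kappa^2]=\exp\!\Big((\kappa+b)c\big(\tfrac{2a}{\kappa}+\tfrac{a^2}{\kappa^2}\big)\Big)=\exp\!\Big(c\big(1+\tfrac{b}{\kappa}\big)\big(2a+\tfrac{a^2}{\kappa}\big)\Big)\;\xrightarrow[\kappa\to\infty]{}\;e^{2ac}.
\]
Substituting these limits into the displayed identity yields $\mathbb{E}\big[(u_\kappa-e^{ac})^2\big]\to e^{2ac}-2e^{ac}e^{ac}+e^{2ac}=0$, which is the asserted $L^2$ convergence.

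For the final sentence I would simply note that the argument above exhibits $\kappa\mapsto\mathbb{E}\big[(u_\kappa-e^{ac})^2\big]$ as an explicit real-valued function of the real parameter $\kappa$ tending to $0$ as $\kappa\to\infty$; hence its restriction to any increasing, unbounded index set $\mathcal{I}\subseteq\mathbb{R}_{>0}$ also tends to $0$, so $(u_i)_{i\in\mathcal{I}}$ converges in mean square to the same constant $e^{ac}$. I do not expect a genuine obstacle: the substance of the proof is the two Poisson generating-function evaluations, and the only steps warranting a sentence of justification are the reduction of mean-square convergence to convergence of the first and second moments (valid precisely because the limit is deterministic) and the well-definedness of these expectations when the base $1+a/\kappa$ is negative for small $\kappa$.
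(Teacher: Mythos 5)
Your proposal is correct and takes essentially the same route as the paper: both compute $\mathbb{E}[u_\kappa]$ and $\mathbb{E}[u_\kappa^2]$ exactly via the Poisson generating function (the paper writes out the series sum explicitly, obtaining $e^{ac+abc/\kappa}$ and $e^{2ac+(a^2c+2abc)/\kappa+a^2bc/\kappa^2}$), and then expand $\mathbb{E}\big[(u_\kappa-e^{ac})^2\big]=\mathbb{E}[u_\kappa^2]-2e^{ac}\mathbb{E}[u_\kappa]+e^{2ac}\to 0$. Your added remarks on well-definedness when $1+a/\kappa<0$ and on restricting to an unbounded index set are harmless refinements of the same argument.
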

\begin{proof}
We show that $\mathbb{E}[u_\kappa^2]$ exists for all $\kappa\in\mathbb{R}_{>0}$, and $\lim_{\kappa\rightarrow\infty}\mathbb{E}\Big[\big(u_\kappa-e^{a c}\big)^2\Big]=0.$
First, note that
\begin{align*}
\mathbb{E}[u_\kappa^2]&=\sum_{x=0}^\infty \frac{[(\kappa+b)\cdot c]^x e^{-(\kappa+b) c}}{x!}\Big(1+\frac{a}{\kappa}\Big)^{2x}  = e^{2ac + (a^2c+2abc)/\kappa + a^2bc/\kappa^2},
\end{align*}
which is well defined for all $\kappa\in\mathbb{R}_{>0}$. Similarly $\mathbb{E}[u_\kappa] = e^{ac+ abc/\kappa}$, and it follows that
\begin{align*}
\mathbb{E}\Big[\big(u_\kappa-e^{a\cdot c}\big)^2\Big] = \mathbb{E}\Big[u_\kappa^2\Big] - 2\cdot e^{ac}\cdot\mathbb{E}\Big[u_\kappa\Big] + e^{2ac} \xrightarrow{\kappa\rightarrow\infty} 0. 
\end{align*}
\end{proof}

Next, for each epoch $i=0,\dots,m$ within an (augmented) sequence $\hat{\boldsymbol{t}}$, define a further partition of the interval $[\hat{t}_i,\hat{t}_{i+1}]$, into $\nu$ equally spaced subintervals of step size $\Delta t=\frac{\hat{t}_{i+1} - \hat{t}_{i}}{\nu}$, s.t. 
\begin{align}
\int_{\hat{t}_i}^{\hat{t}_{i+1}} [\Omega + Q_{\hat{x}_i}(s) - \psi(s,\hat{x}_i)] \mathrm{d}s \approx \sum_{j=0}^{\nu-1} \Delta t \cdot [\Omega + Q_{\hat{x}_i}(\hat{t}_i + j\cdot \Delta t) - \psi(\hat{t}_i + j\cdot \Delta t,\hat{x}_i)] \label{RiemannApprox}
\end{align}
offers a Riemann approximation (exact as $\Delta t \rightarrow 0$) to the intensity of \textit{compensating} jumps in Proposition \ref{propDataAug} and variables $k_i$ in Algorithm \ref{naiveAlgo} (Step 2). The approximating rate is piecewise constant; s.t. \textit{compensating} jumps under \eqref{RiemannApprox} are uniformly distributed in each tagged subinterval $j=0,\dots,\nu-1$ of $[\hat{t}_i,\hat{t}_{i+1}]$. Thus, for all $i=0,\dots,m$ Poisson counts $k_i^j$ respond to rates $\Delta t \cdot [\Omega + Q_{\hat{x}_i}(\hat{t}_i + j\cdot \Delta t) - \psi(\hat{t}_i + j\cdot \Delta t,\hat{x}_i)]$; and $w_i(x)$ in \eqref{weightsDef} is approximated by
\begin{align*}
w_i(x)\approx\prod_{j=0}^{\nu-1} \bigg(1+\frac{Q_{\hat{x}_i}(\hat{t}_i + j\cdot \Delta t) - \psi(\hat{t}_i + j\cdot \Delta t,\hat{x}_i)}{\Omega}\bigg)^{k_i^j}. 
\end{align*} 
By Lemma \ref{lemmaL2}, as the dominating rate $\Omega\rightarrow \infty$, and thus the inferential framework accommodates arbitrarily large rates within $\boldsymbol{Q}(\boldsymbol{\lambda})$, it further holds
\begin{align*}
w_i(x)\approx \exp(\sum_{j=0}^{\nu-1} \Delta t \cdot [Q_{\hat{x}_i}(\hat{t}_i + j\cdot \Delta t) - \psi(\hat{t}_i + j\cdot \Delta t,\hat{x}_i)]).
\end{align*} 
By finally taking the limit $\Delta t \xrightarrow{\nu\rightarrow \infty} 0$ to retrieve the original integral representation in \eqref{RiemannApprox}, we have $w_i(x) =  \exp\big(\int_{\hat{t}_i}^{\hat{t}_{i+1}}  [Q_{\hat{x}_i}(s) - \psi(s,\hat{x}_i)] \mathrm{d}s\big)$, which leads to a simplified sampler design for non-stationary systems as shown in Algorithm \ref{algoMainMJP} (similarly amendable to observations). 
\begin{algorithm}[t]
  \normalsize
\caption{Reduced construction of non-stationary correlated MJP trajectories on $\mathcal{X}$.}
\label{algoMainMJP}
\vspace{2pt}
\begin{minipage}[t]{0.08\textwidth}
\textit{Input:}
\end{minipage}
\begin{minipage}[t]{0.9\textwidth}
Sequence of intensity matrices  $\boldsymbol{Q}=\boldsymbol{Q}(\boldsymbol{\lambda})$ parametrized by $\boldsymbol{\lambda}$. \\[1pt]
An MJP trajectory $(\boldsymbol{t},\boldsymbol{x})\in\mathcal{X}$ with $\boldsymbol{t}=\{t_0,\dots,t_{n}\}$ and $\boldsymbol{x}=\{x_0,\dots,x_{n}\}$. \\[1pt]
Arbitrary intensity operator $\psi:[0,T]\times\mathcal{S}\rightarrow\mathbb{R}_+$ for candidate times.
\end{minipage}
\vspace{4pt}

\begin{minipage}[t]{0.08\textwidth}
\textit{Output:}
\end{minipage}
\begin{minipage}[t]{0.9\textwidth}
A new MJP trajectory $(\boldsymbol{t},\boldsymbol{x})_{new}\in\mathcal{X}$ sampled from the density $f_{X}(\boldsymbol{t},\boldsymbol{x}|\boldsymbol{Q})$ in \eqref{pathProbs}.
\end{minipage}
\vspace{-6pt}

\hrulefill
\begin{algorithmic}[1] 
\State Create an (ordered) set of candidate times $\hat{\boldsymbol{t}}=\{\hat{t}_0,\dots,\hat{t}_{m}\}$, $m\geq n$, attaching to $\boldsymbol{t}$ auxiliary events from a Poisson process; rate $\psi(t,x_i)>0$, within intervals $(t_i,t_{i+1})$, with $t_{n+1}=T$. \vspace{2pt}
\State Draw a new sequence $\hat{\boldsymbol{x}}=\{\hat{x}_0,\dots,\hat{x}_{m}\}$ with a forward-backward procedure; given initial distribution $\pi(x)$, transition \textit{weight} matrices
\begin{align*}
\tilde{P}(\hat{t}_i) = \textnormal{diag}( \{\psi(\hat{t}_i,x)-Q_{x}(\hat{t}_i) \, : \, x\in\mathcal{S}\}) + Q(\hat{t}_i) , 
\end{align*}
and importance weights
\begin{align*}
w_i(x) =  \exp(\int_{\hat{t}_i}^{\hat{t}_{i+1}}  [Q_{\hat{x}_i}(s) - \psi(s,\hat{x}_i)] \mathrm{d}s),
\end{align*} 
imposed over epochs $i\in\{0,\dots,m\}$.  \vspace{2pt}
\State Remove self-transitions on $(\hat{\boldsymbol{t}},\hat{\boldsymbol{x}})$ to produce $(\boldsymbol{t},\boldsymbol{x})_{new}$.
\end{algorithmic}
\end{algorithm}
There, note that the dominating rate $\Omega$ and \textit{compensating} jumps are no longer relevant. The result retrieves an analogue construction to algorithmic propositions for semi-Markov processes in \cite{rao2012mcmc}; however, it requires iterative calculations of exponential functionals, notoriously resource-demanding in computational implementations.

\section{Scalable sampling of deviations from mean-average dynamics} \label{ODEsection}

We continue with novel integrations of auxiliary variables to sample MJP paths in $f_{X}(\boldsymbol{t},\boldsymbol{x}|\boldsymbol{\lambda},\boldsymbol{O})$ in \eqref{targetdens} as \textit{controlled} deviations from approximate mean-average dynamics. As reference, we use a time functional $\xi(t)_{0\leq t\leq T}$ supported on an aribitrary set, so that a \textit{distance} to population levels in $\mathcal{S}$ is quantifiable. We require that $\xi(t)$ is \textit{close} to a region of high density in the posterior distribution of $X_t|\boldsymbol{O}$, for all $t\in[0,T]$. Under reasonably mild conditions, limiting theorems in \cite{kurtz_1970,kurtz_1971} guarantee that every stochastic jump process accepts a real-valued deterministic approximation, as a solution to a system of \textit{ordinary differential equations} (ODEs). This can be further calibrated to observed data in a computationally inexpensive manner, and we observe an example on the left hand side diagram within Figure \ref{fig:AuxVar}. 
\begin{figure}[h!]
  \centering
   \includegraphics[width=\linewidth]{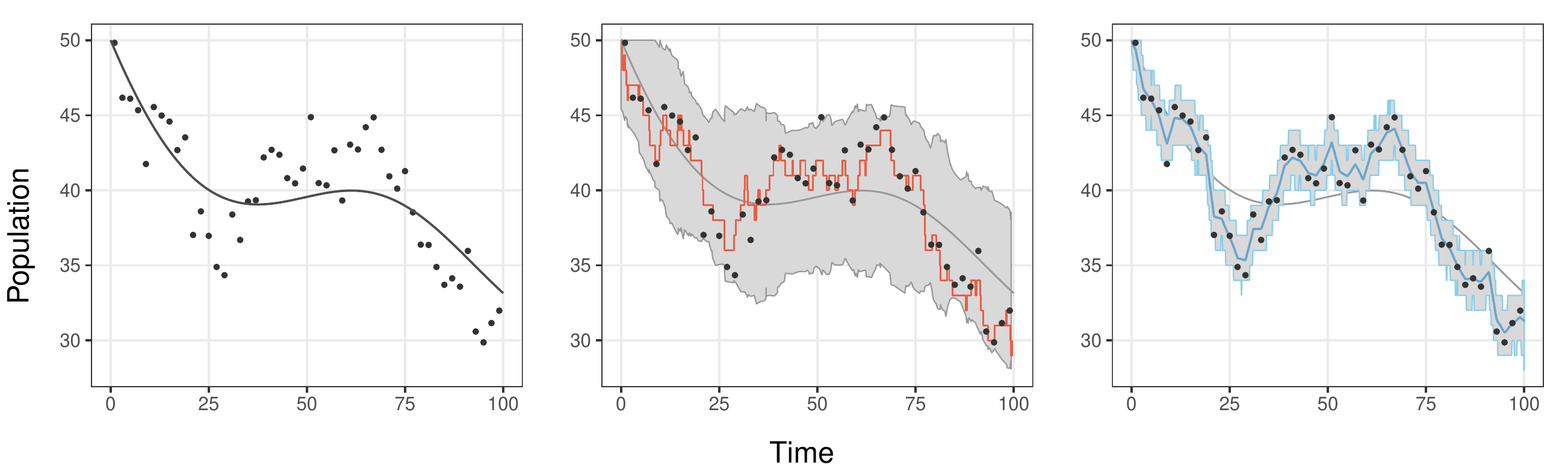}
  \caption{Left, noisy observations of \textit{birth-death} population levels, along with a calibrated solution to an ODE. Centre, sample trajectory from the posterior MJP (in red), restricted to a (random) subspace of $\mathcal{X}$ defined by a range (in grey) centred at $\xi(t)_{0\leq t\leq T}$. Right, posterior mean-average path and $95\%$ confidence interval for $X$.}
  \label{fig:AuxVar}
\end{figure}

Given $\xi(t)_{0\leq t\leq T}$, we complement each iteration in Algorithms \ref{naiveAlgo}-\ref{algoMainMJP} with a further auxiliary sequence $\boldsymbol{u}$ of the form \eqref{intAux}. The goal is to form an informative set that restricts the explorable space of $\hat{\boldsymbol{x}}$ in sampling steps for \eqref{StSpRep}. Importantly, this must be completed within Gibbs procedures and thus not compromise the mixing properties of the MCMC sampler \cite[cf.][]{georgoulas2017unbiased}. For simplicity in the presentation, we restrict the following formulations to integer-valued univariate population systems, where $\xi(t)_{0\leq t\leq T}$ is a real-valued function; however, the various definitions are readily amendable to multivariate models with various support sets. 

\noindent\textbf{Auxiliary truncated normal random variables.} For a current (augmented) trajectory $(\hat{\boldsymbol{t}},\hat{\boldsymbol{x}})$, define
\begin{align}
g(u_i|u_{i-1},\hat{x}_i,\xi(\hat{t}_i)) = \phi\bigg(\frac{u_i-\mu_i}{\sigma}\bigg)\bigg/\sigma\bigg[1-\Phi\bigg(\frac{|\hat{x}_i-\xi(\hat{t}_i)|- \mu_i}{\sigma}\bigg)\bigg], \label{normalAux}
\end{align}
whenever $\hat{x}_i\in\big(\xi(\hat{t}_i)-u_i,\xi(\hat{t}_i)+u_i\big)$, $i=1,\dots,m$; where 
$$\mu_i = \max(\mu, u_{i-1} - \kappa), \quad \text{with} \quad \mu_1=\mu \in\mathbb{R}_+, \, \kappa\in(0,1),$$
and $\phi(\cdot)$, $\Phi(\cdot)$ denote the standard normal density/cumulative distribution functions, respectively. Each $u_i$ is thus normally distributed (mean $\mu_i$, standard deviation $\sigma$) and truncated to a space $(|\hat{x}_i-\xi(\hat{t}_i)|,\infty)\in\mathbb{R}$. The minimum deviation a newly sampled sequence $\hat{\boldsymbol{x}}$ will, on average, be allowed to distance itself from $\xi(\cdot)$ is defined by $\mu$; and $\kappa$ accommodates mean reverting dynamics as depicted in Figure \ref{fig:AuxVardiag}. We find an example in the centre diagram within Figure \ref{fig:AuxVar}; there, the greyed area denotes the region between lower/upper boundaries $\xi(\hat{t}_i)-u_i$ and $\xi(\hat{t}_i)+u_i$, across epochs $i=1,\dots,m$. 
\begin{figure}[h!] \vspace*{-5pt}
\centering
\begin{tikzpicture}
    \hspace*{-10pt}\draw (0, 0) node[inner sep=0] {\includegraphics[width=\linewidth]{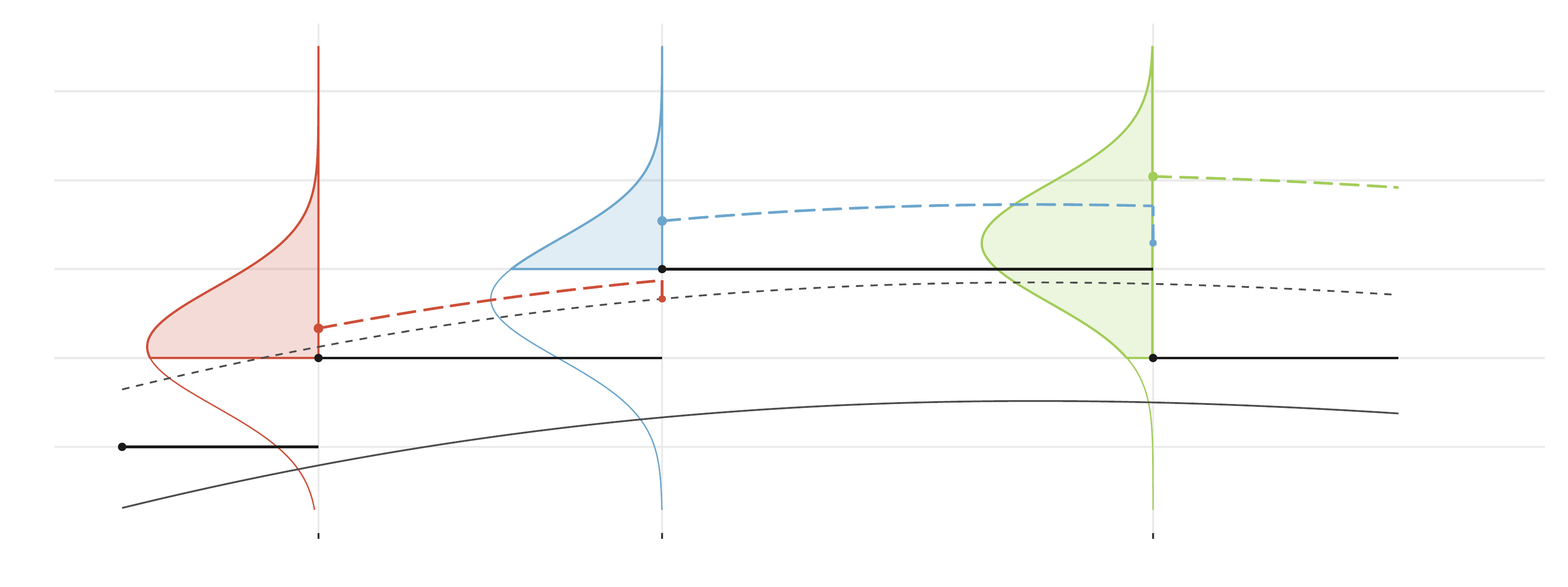}};
    %\draw (-6.5, -1.75) node {\small $\hat{x}_{i-2}$};
    \draw (-1.15, -2.7) node {\footnotesize $\hat{t}_i$}; %\draw (-0.95, 0.45) node {\small $\hat{x}_i$}; 
    \draw (-4.5, -2.7) node {\footnotesize $\hat{t}_{i-1}$}; %\draw (-4.5, -0.9) node {\small $\hat{x}_{i-1}$}; 
    \draw (3.7, -2.7) node {\footnotesize $\hat{t}_{i+1}$}; %\draw (4, -0.45) node {\small $\hat{x}_{i+1}$}; 
	\draw (6.8, -1.2) node {\footnotesize $\xi(t)_{0\leq t\leq T}$}; 
	\draw (7.1, 0) node {\footnotesize $\mu + \xi(t)_{0\leq t\leq T}$}; 
	\draw (-1.98, 0.76) node {\footnotesize $\xi(\hat{t}_i) + u_i$}; 
	\draw (-0.55, -0.25) node {\footnotesize $\mu + \xi(\hat{t}_i)$}; 
	\draw (-5.7, -0.3) node {\footnotesize $\xi(\hat{t}_{i-1}) + u_{i-1}$};
	\draw (2.5, 1.2) node {\footnotesize $\xi(\hat{t}_{i+1}) + u_{i+1}$}; 
	\draw (4.9, 0.5) node {\footnotesize $\xi(\hat{t}_{i+1}) + u_{i} - \kappa$}; 
\end{tikzpicture}
  \caption{Sample sketch with auxiliary normal random variables (coloured dots), superimposed to an (augmented) sequence $\hat{\boldsymbol{x}}$ (in black) above the mean-average dynamics $\xi(t)_{0\leq t\leq T}$. Shaded coloured areas represent the truncated densities associated with random deviations $u_i$ from $\xi(\hat{t}_i)$, $i\geq 0$. }
  \label{fig:AuxVardiag}
\end{figure}

In order to sample a new \textit{compatible} sequence $\hat{\boldsymbol{x}}|\hat{\boldsymbol{t}},\boldsymbol{u}$  within Gibbs steps in Algorithm \ref{naiveAlgo}, forward filtering procedures with matrices \eqref{transPdef} correspond to equations
\begin{equation}
\mathbb{P}(\hat{x}_i=x|u_1,\dots,u_{i};\hat{\boldsymbol{t}}) \propto \mathbb{I}\big(x\in\bar{\mathcal{S}}_i\big) \cdot w_i(x) \cdot \sum_{x'\in\bar{\mathcal{S}}_{i-1}} \tilde{P}_{x',x}(\hat{t}_i) \cdot \mathbb{P}(\hat{x}_{i-1}=x'|u_1,\dots,u_{i-1};\hat{\boldsymbol{t}}) \label{eqFF}
\end{equation}
for $i=1,\dots,m$, with $\bar{\mathcal{S}}_i = \{ x\in\mathcal{S} \, : \, \xi(\hat{t}_i)-u_i \leq x \leq \xi(\hat{t}_i)+u_i   \}$ and importance weights
$$ w_i(x)= \prod_{s\in\boldsymbol{s}_i} \bigg(1+\frac{Q_x(s)-\psi(s,x)}{\Omega}\bigg) \bigg/ \Phi\bigg(\frac{\mu_i - |x-\xi(\hat{t}_i)|}{\sigma}\bigg).$$
Backward sampling steps remain unaltered, s.t. $\hat{x}_m$ is sampled from within $\bar{\mathcal{S}}_m$ in proportion to $\mathbb{P}(\hat{x}_m|\boldsymbol{u};\hat{\boldsymbol{t}})$; then, for $i = m-1,\dots,0$ we may sample subsequent states within $\bar{\mathcal{S}}_i$ from
\begin{align}
\mathbb{P}(\hat{x}_{i}|\hat{x}_{i+1},\boldsymbol{u};\hat{\boldsymbol{t}}) \propto \tilde{P}_{\hat{x}_{i},\hat{x}_{i+1}}(\hat{t}_{i+1}) \cdot \mathbb{P}(\hat{x}_i|u_1,\dots,u_{i};\hat{\boldsymbol{t}}). \label{backSteps}
\end{align}
The computational burden of the algorithm is thus restricted to calculations of quadratic complexity over an statistically \textit{controllable} space. To further condition a trajectory on observations $\boldsymbol{O}$ (pictured in red within centre diagram in Figure \ref{fig:AuxVar}), matrices $\tilde{P}$ and weights $w(\cdot)$ are altered according to definitions in Subsection \ref{twoSetpSubsec}. Analogue predictive/update steps to incorporate these truncation techniques within Algorithms \ref{naiveAlgoUnif}-\ref{algoMainMJP} follow naturally. Finally, in order to efficiently obtain (unbiased) estimates of the posterior trajectory of a population model (rightmost diagram in Figure \ref{fig:AuxVar}), we alternate between: (i) define subsets of $\mathcal{X}$ centred around $\xi(\cdot)$ and (ii) produce new trajectories $\hat{\boldsymbol{x}}$ within.

\noindent\textbf{Auxiliary Gamma random variables.} In this variant, suited in combination with population models subject to jumps of unit length, we let
\begin{eqnarray}
u_i = |x_i-\xi(\hat{t}_i)| + v_i \quad \text{with} \quad v_j\sim\Gamma(\alpha,\beta_i), \label{gammaVars}
\end{eqnarray} 
over a subset of epochs $i\in\mathcal{I}\subseteq\{1,\dots,m\}$; i.e. auxiliary variables are undefined for $i=1,\dots,m$ s.t. $i\not\in\mathcal{I}$, and $v_i$, $i\in\mathcal{I}$ are gamma distributed random variables. Here, $\alpha\in\mathbb{N}$ will secure a fast evaluation of the associated densities; and rate parameters are subordinated to a random autoregressive process $\boldsymbol{\mu}$ (stationary mean $\mu$, lag-$1$ deviation $\sigma$), s.t. $\beta_i=\alpha\cdot e^{-\mu_i}$ for all $i\in\mathcal{I}$, with
$$\mu_i|\mu_{i-l} \sim \mathcal{N}\Big(\mu\ + (\mu_{i-l}-\mu)(1-\kappa)^l \, , \, \sigma^2 \cdot \big(1-(1-\kappa)^{2l}\big)\big/\big(1-(1-\kappa)^2\big) \Big)\, ,$$
where $l\in\mathbb{N}$ denotes the lag between subsequent time points in $\mathcal{I}$, and $\kappa\in(0,1)$. Thus, $\mu_i \sim \mathcal{N}(\kappa\mu + (1-\kappa)\mu_{i-1},\sigma^2)$ whenever $l=1$ and $\mathcal{I}=\{1,\dots,m\}$ is associated to all times in $\hat{\boldsymbol{t}}$. This construct ensures $\mathbb{E}[u_i|x_i,\xi(\cdot)] = |x_i-\xi(\hat{t}_i)| + e^{\mu_i}$ and $\mathbb{V}[u_i] = e^{2\mu_i}/\alpha$; well calibrated, it allows for $\hat{\boldsymbol{x}}$ to significantly deviate from $\xi(t)_{0\leq t\leq T}$ over restricted time-intervals. A diagram depicting such structure of auxiliary variables is shown in Figure \ref{fig:AuxVardiag2}; there, $u_i$, $i\in\mathcal{I}$ are represented by coloured dots, placed over equally spaced epochs with lag $l=2$; means assigned to Gamma variables (grey dots) are random and transition according to log-normal distributions. 
\begin{figure}[!h] \vspace*{-5pt}
\centering
\begin{tikzpicture}
    \hspace*{-10pt}\draw (0, 0) node[inner sep=0] {\includegraphics[width=\linewidth]{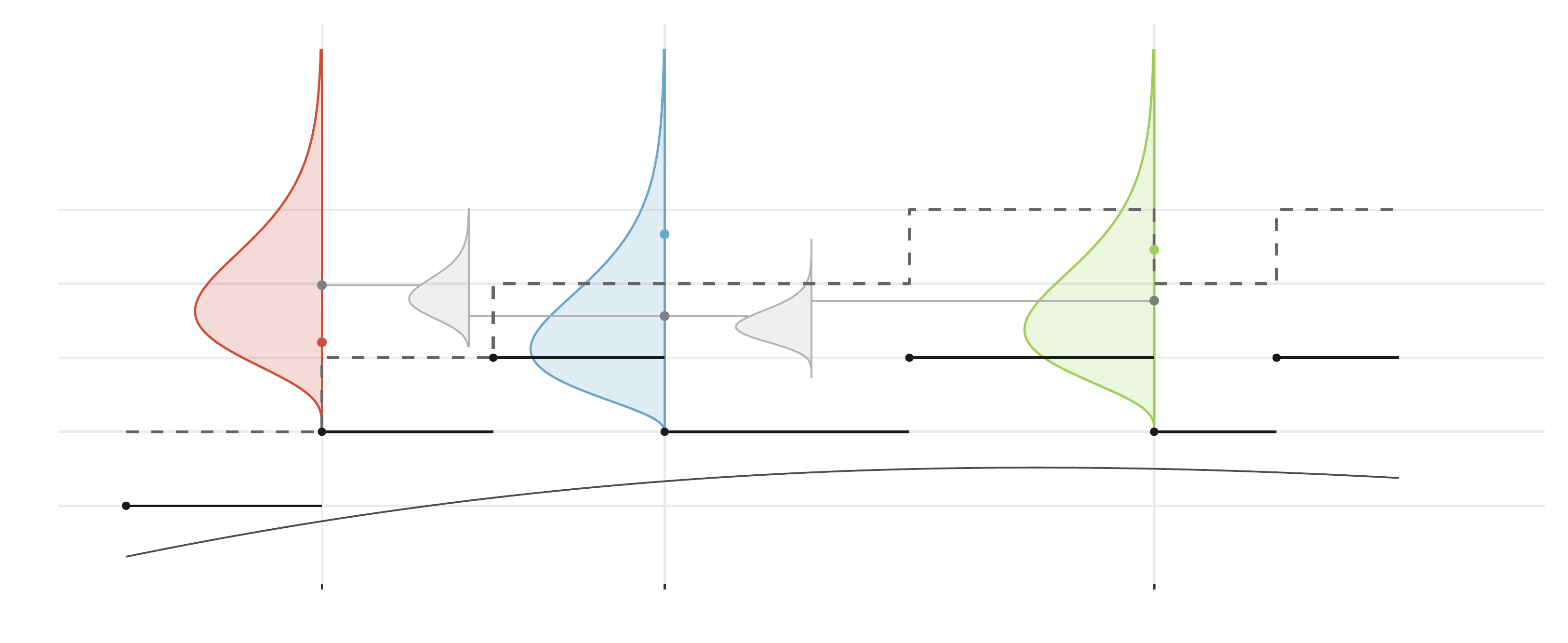}};
    \draw (-1.15, -2.9) node {\footnotesize $\hat{t}_i$}; 
    \draw (-4.5, -2.9) node {\footnotesize $\hat{t}_{i-2}$}; 
    \draw (3.7, -2.9) node {\footnotesize $\hat{t}_{i+2}$}; 
	\draw (6.8, -1.58) node {\footnotesize $\xi(t)_{0\leq t\leq T}$}; 
	\draw (6.9, 1.1) node {\footnotesize $\{\max\bar{\mathcal{S}}_i\}_{i\geq0}$}; 
	\draw (-1.98, 0.9) node {\footnotesize $\xi(\hat{t}_i) + u_i$}; 
	\draw (-5.7, -0.18) node {\footnotesize $\xi(\hat{t}_{i-2}) + u_{i-2}$};
	\draw (2.5, 0.75) node {\footnotesize $\xi(\hat{t}_{i+2}) + u_{i+2}$}; 
	\draw [decorate,decoration={brace,amplitude=3pt,mirror}] (-1.11,-1.05) -- (-1.11,0.02) node [black,midway,xshift=0.4cm] {\footnotesize $e^{\mu_i}$};	
	\draw [decorate,decoration={brace,amplitude=3pt,mirror}] (3.72,-1.05) -- (3.72,0.19) node [black,midway,xshift=0.55cm] {\footnotesize $e^{\mu_{i+2}}$};	
\end{tikzpicture}
  \caption{Sketch with auxiliary gamma variables (coloured dots), superimposed to a trajectory $\hat{\boldsymbol{x}}$ (in black) above $\xi(t)_{0\leq t\leq T}$. Coloured areas represent densities associated with $v_i, \, i\in\mathcal{I}$ over lagged epochs ($l=2$). Dashed line is the maximum deviation from $\xi(\hat{t}_i)$ that a newly (augmented) MJP can reach at times $\hat{t}_i, \, i\geq 0$. }\label{fig:AuxVardiag2}
\end{figure}

Next, assume process jumps are of unit length. In order to sample a \textit{compatible} sequence $\hat{\boldsymbol{x}}|\hat{\boldsymbol{t}},\boldsymbol{u}$ within Gibbs steps in Algorithm \ref{naiveAlgo}, the analogue to forward filtering procedures in \eqref{eqFF} is given by
\begin{align*}
\mathbb{P}(\hat{x}_i=x|\{u_j : & \, j\leq i, j\in\mathcal{I}\};\boldsymbol{\mu},\hat{\boldsymbol{t}}) \propto \\
&\mathbb{I}\big(x\in\bar{\mathcal{S}}_i\big) \cdot w_i(x|\boldsymbol{\mu}) \cdot \sum_{x'\in\bar{\mathcal{S}}_{i-1}} \tilde{P}_{x',x}(\hat{t}_i) \cdot \mathbb{P}(\hat{x}_{i-1}=x'|\{u_j : j< i, j\in\mathcal{I}\};\boldsymbol{\mu},\hat{\boldsymbol{t}}) 
\end{align*}
for $i=1,\dots,m$, with restricted subsets defined s.t.
\begin{align*}
\bar{\mathcal{S}}_i = \begin{cases}
    \{ x\in\mathcal{S} \, : \, \xi(\hat{t}_i)-u_i \leq x \leq \xi(\hat{t}_i)+u_i   \} & \text{if } i\in\mathcal{I},\\
    \{ x\in\mathcal{S} \, : \, \min\bar{\mathcal{S}}_{i-1} - 1 \leq x \leq \max\bar{\mathcal{S}}_{i-1} + 1   \} & \text{otherwise.}
  \end{cases}
\end{align*}
This assumes that $X$ is supported over an unbounded set of integers (but may be suitably redefined otherwise). Importance weights are given by
$$ w_i(x|\boldsymbol{\mu})= (u_i-|x-\xi(\hat{t}_i)|)^{\alpha_i-1} e^{-\beta_i (u_i - |x-\xi(\hat{t}_i)|)} \prod_{s\in\boldsymbol{s}_i} \bigg(1+\frac{Q_x(s)-\psi(s,x)}{\Omega}\bigg),$$
whenever $i\in\mathcal{I}$ and $ w_i(x|\boldsymbol{\mu})= \prod_{s\in\boldsymbol{s}_i} (1+(Q_x(s)-\psi(s,x))/\Omega)$ otherwise. This suggests a forward implementation with dynamic vectors, since the explorable space of MJP trajectories expands across jump epochs $i\not\in\mathcal{I}$, while contracting again towards mean-average dynamics in the presence of auxiliary evidence. Backward sampling steps still correspond to \eqref{backSteps} above. Again, similar amendments may be realized over Algorithms \ref{naiveAlgoUnif}-\ref{algoMainMJP}; also, further conditioning this procedure on observations corresponds to including alterations on $\tilde{P}$, $w(\cdot)$ as listed in Subsection \ref{twoSetpSubsec}.

Below, we discuss results of algorithmic implementations of these methods, on two instances of popular jump processes, and we draw comparisons on efficiency with current benchmark methodologies for inferential tasks. Results and comparisons reported are produced by C++ implementations; code and data can be found on \href{https://github.com/IkerPerez/scalableSamplingMJPs}{github.com/IkerPerez/scalableSamplingMJPs}.

\subsection{Example 1: A pure birth-death process}

A \textit{birth-death} process is a population model with applications in queueing theory and performance engineering tasks. In its simplest form, $X$ refers to a \textit{population} supported within the set of non-negative integers $\mathcal{S}=\mathbb{N}_0$, and state transitions involve both \textit{births} and \textit{deaths}. Infinitesimal rates for jumps are denoted by $\{\lambda_x(t)\}_{x\in\mathcal{S}}$ and $\{\mu_x(t)\}_{x\in\mathcal{S}}$, respectively, for all $t\geq 0$, so that
\begin{align*}
Q_{x,x'}(t) &= 
  \begin{cases}
    \lambda_x(t) & \text{if} \quad x'=x+1,\\
    \mu_x(t) \cdot \mathbb{I}(x>0) & \text{if} \quad x'=x-1, \\
    - \lambda_x(t) - \mu_x(t) \cdot \mathbb{I}(x>0) &  \text{if} \quad x'=x,
  \end{cases}
\end{align*}
and $Q_{x,x'}(t)=0$ otherwise. Hence, the process increases its population by $1$ whenever a \textit{birth} occurs; alternatively, it decreases its population by $1$ during a \textit{death} event. 

\noindent\textbf{A finite capacity immigration-death process.} In this variant, $\mathcal{S}$ is bounded from above by some positive constant $N\in\mathbb{N}_0$; thus, it is a system equivalent to a \textit{closed} queueing network with an infinite processor \citep[cf.][]{perez2018approximate}, or a \textit{truncated} M/M/N/N queue \citep{Gross:2008:FQT:1972549}. Importantly, for all states $x\in\{0,\dots,N\}$, death rates scale along with population levels, s.t. $\mu_x(t) = x\cdot \mu(t)$ for some time dependent function $\mu(\cdot)$. Here, we assume that arrivals enter the system with a constant birth intensity $\lambda_x(t)=\lambda \cdot \mathbb{I}(x<N)$ , $\lambda\in\mathbb{R}_+$; and death rates respond to seasonal patterns, s.t. $\mu(t)=\mu \cdot r(t)$ for some positive functional $r(t)\in[1,2], \, t\geq 0$. We further assume that $x_0=N$, and note that the model is fully parametrized by $\lambda$ and $\mu$.

\noindent\textbf{Noisy state observations and inference.} Let $\boldsymbol{O}=\{O_r\}_{r\geq 1}$ be state observations subject to measurement error, s.t. $O_r\sim\mathcal{N}(X_{t_r},\sigma^2)$ reflect normal random variables at times $t_r\in[0,T]$, $r\geq 1$. This, along with a finite population set-up, allows for the implementation (for comparison purposes) of benchmark uniformization-based inferential techniques. We find sample observations within the left diagram in Figure \ref{fig:AuxVar} (black dots), for a latent process realisation with capacity $N=50$ and seasonality $r(t)=3/2 + \cos(2\pi\cdot t/T)/2$. The dark line in the figure corresponds to the deterministic approximation $\xi(\cdot)$ with death rate parameter
\begin{align*}
\mu = \arg\min_{\mu\in\mathbb{R}_+} \sum_{r\geq 1}(\xi(t_r)-O_{r})^2 \quad \text{subject to} \quad
\frac{\mathrm{d}\xi(t)}{\mathrm{d}t} = \mathbb{I}(\xi(t)<50) \cdot \lambda - r(t) \cdot \xi(t) \cdot \mu,
\end{align*}
and a (known) birth rate $\lambda$ fixed to an arbitrary value (ensuring model identifiability).  

From \eqref{intractLikel}, notice that the posterior density $f(\mu|\boldsymbol{O})$ requires integrating the observation likelihood, over all possible trajectories with associated density
\begin{align*}
f_{X}(\boldsymbol{t},\boldsymbol{x}|\boldsymbol{Q})  & =  \pi(x_0) e^{-\sum_{i=1}^n\int_{t_{i-1}}^{t_i} (x_nr(s)\mu+\mathbb{I}(x_n<N)\lambda)    \mathrm{d}s} \prod_{i=1}^n \lambda^{\mathbb{I}(x_i=x_{i-1}+1)} [\mu\cdot x_{i-1}\cdot r(t_i)]^{\mathbb{I}(x_i=x_{i-1}-1)}, 
\end{align*}
and is thus intractable. In this task, we carry posterior MCMC inference on the death rate by iterating between sampling trajectories and $\mu$ from its conditional density 
$$f(\mu|\boldsymbol{t},\boldsymbol{x}) \propto \mu^{\sum_{i=1}^n \mathbb{I}(x_i=x_{i-1}-1)} e^{-\mu \sum_{i=1}^n x_n\int_{t_{i-1}}^{t_i} r(s)   \mathrm{d}s}  \cdot \pi(\mu),$$
with some loosely uninformative prior $\pi(\mu)$. A sample trace output is shown in Figure \ref{fig:traceBD}, corresponding to the data displayed within Figure \ref{fig:AuxVar}. There, the $3$ different traces and densities correspond to (i) a traditional uniformization-based implementation \citep{rao13a}, (ii) Algorithm \ref{algoMainMJP} and (iii) a variant centred around mean-average dynamics and normal auxiliary variables in \eqref{normalAux}. All alternatives yield equivalent estimates for $\mu$, of a seemingly similar quality.
\begin{figure}[h!]
  \centering
   \includegraphics[width=\linewidth]{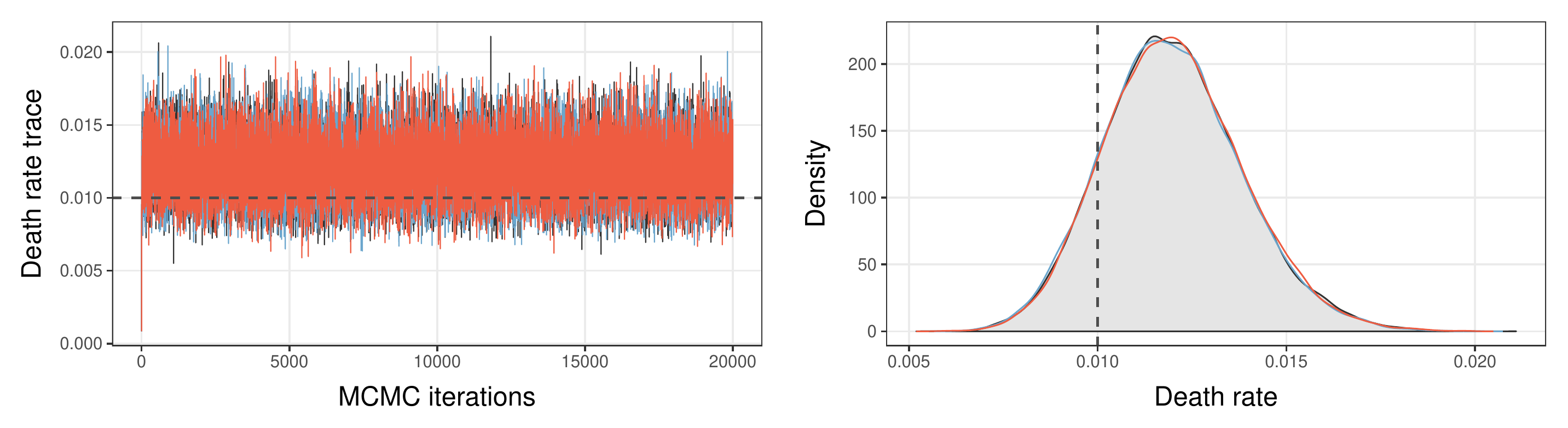}
  \caption{MCMC traces and kernel density representations for death rates, corresponding to uniformization (red), Algorithm \ref{algoMainMJP} (blue) and Algorithm \ref{algoMainMJP} combined with variables \eqref{normalAux} (black).}
  \label{fig:traceBD}
\end{figure}

We repeat the process for multiple simulated \textit{birth-death} trajectories, at increasing population sizes $N$. In all cases, $T=100$ and we produce $50$ noisy observations over equally spaced intervals. Comparisons on efficiency across various methods are offered in Figure \ref{fig:BDratios}, which further includes a summary of population sizes tested, birth rates used and deviation associated with the observations.
%$\mu= 0.02$
In the diagram, we find ratios in effective sample sizes (scaled for computation time) against the benchmark algorithm of \cite{rao13a} (black line) with dominating rate $\Omega = 1.5 \cdot\max_{x\in\mathcal{S}} \sup_{t\in[0,T]}|Q_x(t)|$. Whiskers represent $95\%$ confidence intervals. There, (i) the blue line corresponds to Algorithm \ref{algoMainMJP}, with $\psi(t,x) = |Q_x(t)|$, $(t,x)\in[0,T]\times\mathcal{S}$, s.t. \textit{auxiliary} jumps attached to $\boldsymbol{t}$ are generated in proportion to diagonal elements of $Q(t), \, t\in[0,T]$, and $\hat{\boldsymbol{t}}$ is of approximately double the size of $\boldsymbol{t}$ in each MCMC iteration, (ii) the green line is for further restricting MJP samples to deviations from $\xi(\cdot)$, using auxiliary variables \eqref{normalAux} with mean $\mu=N/10$, autoregressive coefficient $\kappa=1$ and Gaussian deviation $\sigma=0.65\cdot(1+\kappa)$; this offers a good heuristic, noting that birth-death jumps are of magnitude one s.t. truncated normal densities are always substantial, (iii) the red line finally assigns $\mu=\sqrt{N}$, $\kappa = 0.05$ and $\sigma=1.5\cdot(1+\kappa)$, s.t. explorable spaces for $\hat{\boldsymbol{x}}$ are very restricted around the mean-average solution, yet, randomness and autoregressive effects are strong and can accommodate sudden short-timed deviations from $\xi(t)_{t\in[0,T]}$.
\begin{figure}[h!]
\centering
\begin{minipage}{0.65\textwidth}
\centering
\includegraphics[width=\linewidth]{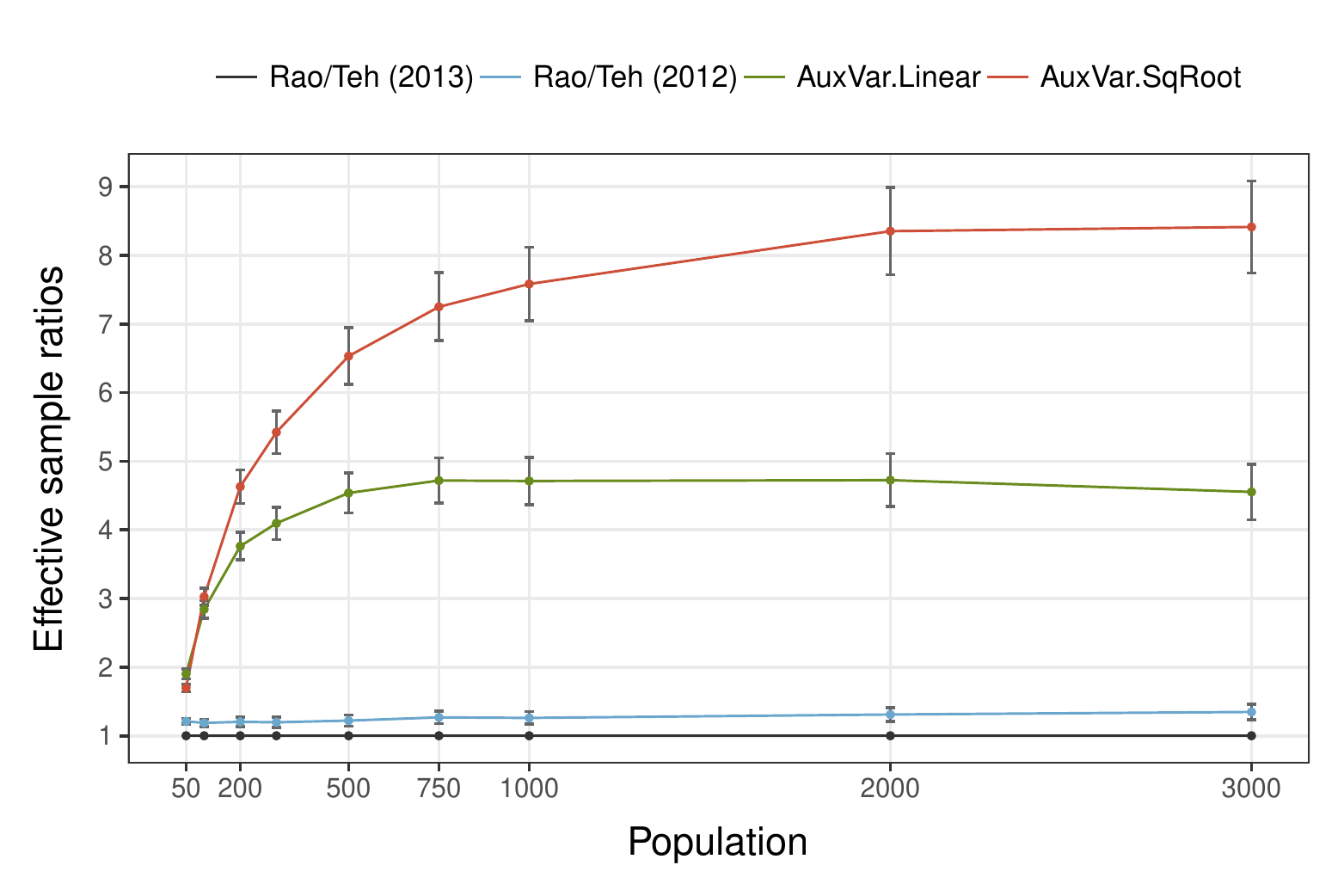}
\end{minipage}
\hfill
\begin{minipage}{0.34\textwidth}
\centering
\footnotesize
\vspace*{-15pt}
\renewcommand{\arraystretch}{1.4}
\setlength{\tabcolsep}{8pt}
\begin{tabular}{r|cc}
N    & Rate $\lambda$ & St. Dev. $\sigma$ \\ \hhline{===}
50   & 0.5            & 1.0               \\
100  & 1.0            & 2.0               \\
200  & 2.0            & 4.0               \\
300  & 3.0            & 6.0               \\
500  & 5.0            & 10.0              \\
750  & 7.5            & 15.0              \\
1000 & 10.0           & 20.0              \\
2000 & 20.0           & 40.0              \\
3000 & 30.0           & 60.0             
\end{tabular}
\end{minipage}
  \caption{Left, diagram with ratios in effective sample sizes versus a benchmark uniformization-based inference algorithm, for different algorithmic implementations. Intervals around points represent confidence intervals. Right, table summarizing the population levels, birth-rates and observation error used in simulations.}
  \label{fig:BDratios}
\end{figure}

Overall, Algorithm \ref{algoMainMJP} does not pose big gains over traditional uniformization, since rates for jumps in a birth-death system scale linearly with the population. Yet, from confidence metrics across both green and red scalings, we conclude that well tunned auxiliary-variable techniques presented in this section yield very significant efficiency gains; due to the approach naturally integrating within Gibbs steps for posterior paths of $X$.

\subsection{Example 2: Markovian stochastic epidemic models} 

Next, we address an inferential task with a time-homogeneous \textit{Susceptible-Infective-Removed} (SIR) stochastic epidemic model \citep{bailey1975mathematical}. Here, $X=(S_t,I_t,R_t)_{t\in[0,T]}$ tracks a population of $N$ individuals s.t. $\mathcal{S}=\{0,\dots,N\}^3$. At any time $t\in[0,T]$ each member of the population is either \textit{susceptible} (capable of contracting a disease), \textit{infective} (able to pass the disease to others) or \textit{removed} (immune to infection and unable to infect others). Since $S_t+I_t+R_t=N$, then $X\equiv (S_t,I_t)_{t\in[0,T]}$ corresponds to a bivariate jump process. In common applications, $X$ begins with a susceptible population of $N-1$ individuals, along with an infective member whose disease contraction time is unknown; the infinitesimal generator matrix $Q$ is s.t.
\begin{align*}
Q_{(s,i),(s',i')} &= 
  \begin{cases}
    \beta s i & \text{if} \quad s'=s-1, \, i'=i+1, \\
    \gamma i & \text{if} \quad s'=s, \, i'=i-1, 
  \end{cases}
\end{align*}
and $Q_{(s,i),(s',i')}=0$ otherwise. Therefore, infective individuals become removed (by death or recovery) after an independent infectious period with \textit{removal rate} $\gamma$. While infected, they may further transfer the disease to members of the susceptible population through Poisson contacts with \textit{infection rate} $\beta$. When the epidemic has ceased at some terminal time $T$, then the entire population is divided between susceptibles (who avoided infection) and removed members.

\noindent\textbf{ Observed removals and inference.} Here, $X$ can further be represented by a triplet $(\boldsymbol{t},\boldsymbol{s},\boldsymbol{i})$ of transition times along with corresponding susceptible/infected population vectors. In common inferential settings, all removal observations $\boldsymbol{t}^R$ are available; that is, some $k<N$ times $0\leq t^R_1 < \dots < t^R_k<T$ when infective individuals have either died or recovered from a disease. Thus,
\begin{align*}
\mathcal{L}(\boldsymbol{t}^R|\boldsymbol{t},\boldsymbol{s},\boldsymbol{i})&= \prod_{j=1}^{|\boldsymbol{t}^R|}\mathbb{I}(\lim_{t\nearrow t^R_j} I_{t} = I_{t^R_j} + 1), 
\end{align*} 
which is a simplified, analogue expression to \eqref{likJumps}, where removal \textit{jump observations} are always observed. The term $\mathcal{L}(\boldsymbol{t}^R|\beta,\gamma)$, key for inference tasks on the rates, requires integrating over a space of full infection and removal times with associated density
\begin{align*}
f_{X}(\boldsymbol{t},\boldsymbol{s},\boldsymbol{i}|\beta,\gamma)  & =  \pi(i_0) e^{-\sum_{j=0}^{|\boldsymbol{t}|} (\beta s_j i_j + \gamma i_j)(t_{j+1}-t_j)} \prod_{j=0}^{|\boldsymbol{t}|-1} (\beta s_j i_j)^{\mathbb{I}(i_{j+1} = i_j + 1)} (\gamma i_j)^{\mathbb{I}(i_{j+1} = i_j - 1)} ,
\end{align*}
where $t_{|\boldsymbol{t}|+1}=T$, and $\pi(i_0)$ denotes the distribution of the initial infection time. The expression offers a basis for an MCMC approach to inference, through augmentation of the MJP trajectory with missing infection times; in combination with samples from rate posteriors
$$f(\beta|\boldsymbol{t},\boldsymbol{s},\boldsymbol{i}) \propto \beta^{|\boldsymbol{t}^R|-1} e^{-\sum_{j=0}^{|\boldsymbol{t}|} \beta s_j i_j \cdot (t_{j+1}-t_j)}  \cdot \pi(\beta) \quad\text{and}\quad f(\gamma|\boldsymbol{t},\boldsymbol{s},\boldsymbol{i}) \propto \beta^{|\boldsymbol{t}^R|} e^{-\sum_{j=0}^{|\boldsymbol{t}|} \gamma i_j \cdot (t_{j+1}-t_j)}  \cdot \pi(\gamma)$$
and time-intervals $[t_0,T]$ with initial infection $\pi(t_0|t_1) \propto e^{-(\beta\cdot(N-1) + \gamma)(t_1-t_0) }\cdot \pi(t_0)$, where $\pi(\beta), \pi(\gamma)$ and $\pi(t_0)$ denote priors. This is usually achieved with Metropolis-Hastings steps \citep{o1999bayesian,jewell2009bayesian}, where updates proceed by proposing additions, deletions or moves of a proportion (usually half) of infection times; however, the scalability of the algorithm is reportedly poor. This is displayed in Figure \ref{fig:epidemOutput}, where we find output traces for parameters in a small epidemic ($N=50$). In all density and \textit{autocorrelation} (ACF) diagrams, black/grey representations correspond to an auxiliary-variable algorithm as introduced in this paper; red coloured counterparts relate to a benchmark M-H implementation \citep{o1999bayesian}. Severe efficiency differences may be observed within the ACF plot. On the left, we find posterior mean dynamics and a $\%95$ credible interval for $(I_t+R_t)_{t\in[0,T]}$; the dashed blue line corresponds to the real unobserved value, and the green line is the (observed) removal process $(R_t)_{t\in[0,T]}$ with jump times $\boldsymbol{t}^R$.
\begin{figure}[h!]
  \centering
   \includegraphics[width=\linewidth]{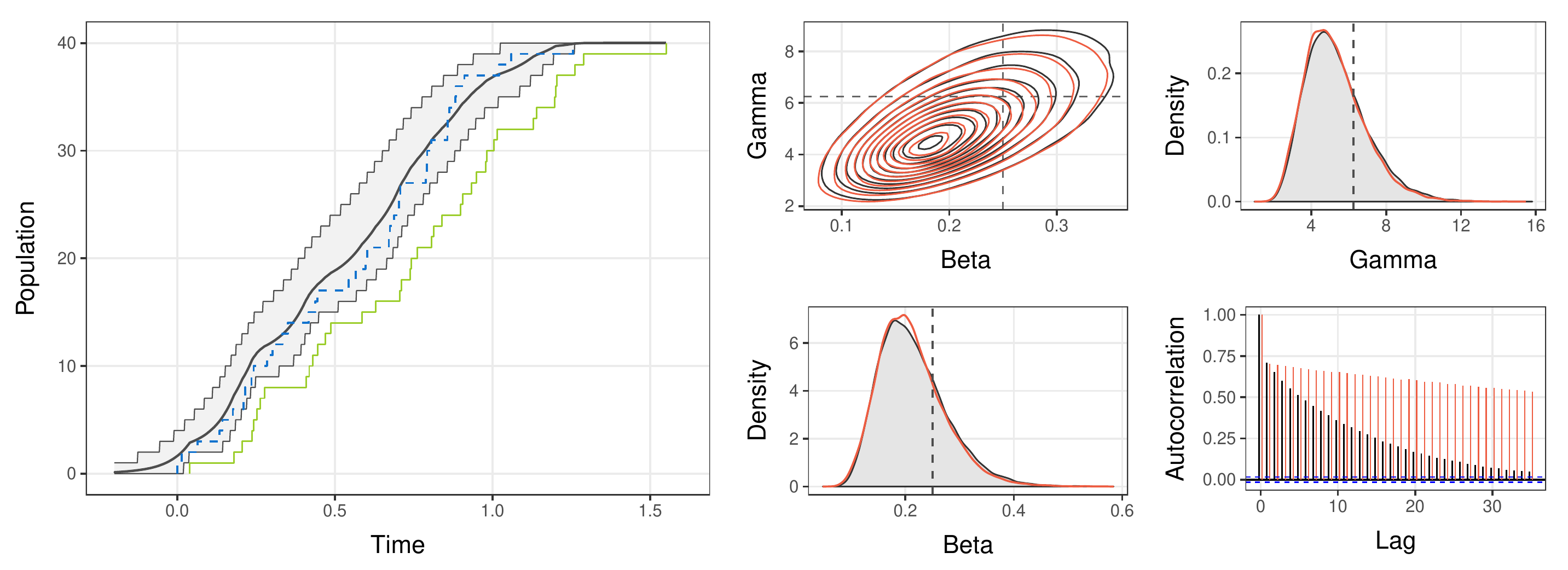}
  \caption{Epidemic study at capacity $N=50$.  Left, posterior mean dynamics and a $\%95$ credible interval for $(I_t+R_t)_{t\in[0,T]}$; the green line is the (observed) removal process $(R_t)_{t\in[0,T]}$. Right, density and \textit{autocorrelation} diagrams for an auxiliary-variable (black/grey) and Metropolis-Hastings (red) implementations.}
  \label{fig:epidemOutput}
\end{figure}

Next, we compare efficiency metrics in the procedures across increasing populations; and further analyze an adaptation of uniformization methods in \cite{rao13a} to system-jump observations (see definitions in Subsection \ref{twoSetpSubsec}). Removal data is simulated with rates $\gamma=1$, $\beta=2/N$ and securing a final removed population $R_T=N\cdot\%80$ (most representative outcome). Ratios on effective samples are reported within Figure \ref{fig:epidemESS}; there, benchmark lines (in blue) correspond to Algorithm \ref{naiveAlgoUnif} (not requiring exponential evaluations), with operator $\psi(x) = |Q_x/2|$, $x\in\mathcal{S}$ (candidate jumps attached to $\boldsymbol{t}$ with half intensity of diagonal in $Q$). We make this choice because (i) the model is stationary and (ii) existing alternative augmentation schemes do not scale (i.e. they do not work) with large populations. Also, (i) the green line represents the afore-mentioned benchmark epidemics Metropolis algorithm, (ii) the black line is for vanilla uniformization; with dominating rate $\Omega = 1.5 \cdot\max_{x\in\mathcal{S}} |Q_x|$, and (iii) the red line corresponds to sampling paths as deviations from mean-average dynamics $\xi(\cdot)$, given by the solution to a multivariate system
\begin{align*}
\frac{\mathrm{d}\xi_S(t)}{\mathrm{d}t} = -\beta \cdot \xi_S(t) \cdot \xi_I(t), \quad \frac{\mathrm{d}\xi_I(t)}{\mathrm{d}t} = \beta \cdot \xi_S(t) \cdot \xi_I(t) - \gamma\cdot \xi_I(t), \quad \text{and} \quad \frac{\mathrm{d}\xi_R(t)}{\mathrm{d}t} = \gamma\cdot \xi_I(t),
\end{align*}
with infection/removal parameters set to optimize $\min_{\beta,\gamma\in\mathbb{R}_+} \sum_{t_r\in\mathcal{D}[0,T]}(\xi_R(t_r)-R_{t_r})^2$ over an arbitrary discretization $\mathcal{D}[0,T]$ of the time interval. This is achieved incorporating Gamma variables in \eqref{gammaVars} over Algorithm \ref{naiveAlgoUnif}, with lag $l=25$, stationary mean $\mu=\log(N/10)$, autoregressive coefficient $\kappa=0.5$ and deviation $\sigma = 0.25$.
\begin{figure}[h!]
  \centering
   \includegraphics[width=\linewidth]{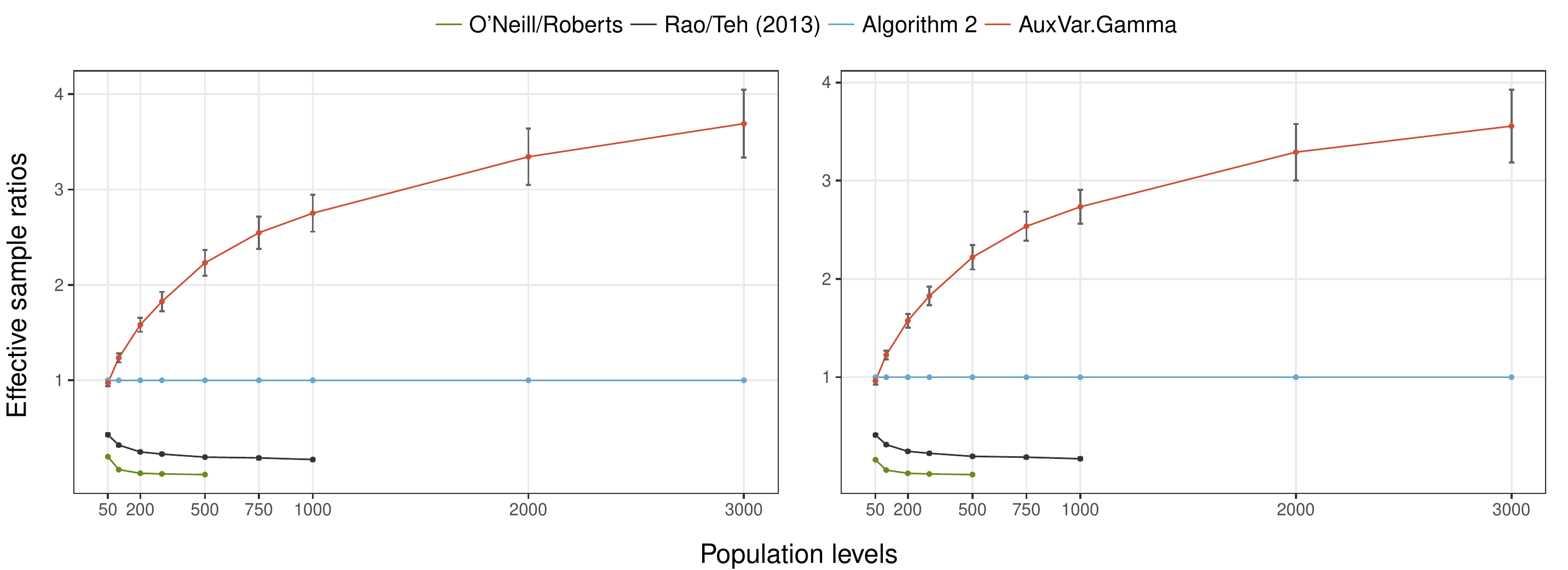}
  \caption{Ratios in effective sample sizes (with confidence intervals) against Algorithm \ref{naiveAlgoUnif} (in blue). Left diagram corresponds to infection rates; on the right, removal rates.}
  \label{fig:epidemESS}
\end{figure}

Existing inferential methods (green and black lines) do not scale to sizeable populations and perform poorly even within small ones. Noticeably, vanilla uniformization is bound to be inefficient in systems where generator rates scale quadratically; in epidemics, the data-augmentation procedure is associated with large dominating rates, often s.t. $\Omega > \beta \cdot (N/2)^2 + \gamma\cdot N$.

\section{Splitting the problem by mapping states or transitions} \label{splitting}

Finally, we discuss \textit{mappings} to reduce full MJP augmentations into families of smaller end-point conditioned tasks. A fixed $l\in\mathbb{N}$ will again define a \textit{lag} for auxiliary variables in \eqref{intAux},  among the discretization epochs in $\hat{\boldsymbol{t}}$. We thus employ a \textit{reduced} (deterministic) sequence $\{u_i\}_{i=l,2l,\dots}$ at times $\hat{t}_l,\hat{t}_{2l},\dots$ s.t. $u_i = \mathcal{T}(\hat{x}_{i-1}, \hat{x}_i)$ for some surjective mapping $\mathcal{T}:\mathcal{S}^2\rightarrow\mathcal{J}$; and variables in $\boldsymbol{u}$ are undefined other than for lagged times. Through $\mathcal{T}$, we map pairs of states in $\mathcal{S}^2$ to elements of the power set $\Sigma_\mathcal{S}$. A particular case of such construct was first discussed in \cite{Perez2017}; there, the authors simplify augmentation tasks for networked queueing systems by mapping MJP state transitions to job orderings across queues. Importantly, within the following examples, a lag $l$ must be (randomly) re-instantiated (or drifted) within every MCMC iteration, in order to ensure that trajectories $X$ are sampled from within their full support $\mathcal{X}$.

\noindent \textbf{Partitioning a state space.} Here, an (augmented) MJP process is forced to transition (small) population ranges at lagged times $\{\hat{t}_i\}_{i=l,2l,\dots}$. For a univariate $\mathcal{S}$-valued process example, we define 
$$\mathcal{J}\subset\Sigma_{\mathcal{S}} \quad \text{s.t.} \quad \varnothing\not\in\mathcal{J}, \, \cup_{A\in\mathcal{J}}=\mathcal{S} \: \text{and} \: A\cap B=\varnothing \, \text{for all} \, A,B\in\mathcal{J}.$$ 
Each \textit{part} $A\in\mathcal{J}$ must be defined s.t. jumps (including virtual self-jumps) restricted among its states yield an irreducible Markov chain on $A$. Then, for every existing (augmented) sequence $\hat{\boldsymbol{x}}$, we let $\mathcal{T}(x,x')=\{A\in\mathcal{J} : x' \in A\}$ map jumps $x\rightarrow x'$ at times $\{\hat{t}_i\}_{i=l,2l,\dots}$ to \textit{parts} $\{A_i\}_{i=l,2l,\dots}$ that contain the arrival states $\{\hat{x}_i\}_{i=l,2l,\dots}$. In order to re-sample a new \textit{compatible} sequence $\hat{\boldsymbol{x}}|\hat{\boldsymbol{t}},\boldsymbol{u}$ within Algorithms \ref{naiveAlgo}-\ref{algoMainMJP}, we can \textit{split} forward-backward procedures over intervals $[\hat{t}_{i-l},\hat{t}_{i})$, $i=l,2l,\dots$ s.t. each forward estimation 
$$\mathbb{P}(\hat{x}_{i}=x|u_0,\dots,u_{i-l};\hat{\boldsymbol{t}}) \quad \text{at epochs} \quad i=l,2l,\dots$$ 
is restricted to the subset $A_{i}$ of $\mathcal{S}$ and fed as the initial distribution $\pi(\cdot)$ at time $\hat{t}_{i}$ during the next interval. Backward steps proceed normally within and across sub-intervals. In Figure \ref{sampleCompatibles} (left) we find a sample diagram depicting this partitioning of the augmentation task. There, grey circles represent the reach of a univariate birth-death jump process at time points in $\hat{\boldsymbol{t}}$; blue squares (assigned at randomly lagged times, not equally spaced) correspond to ranges the process must transit. 
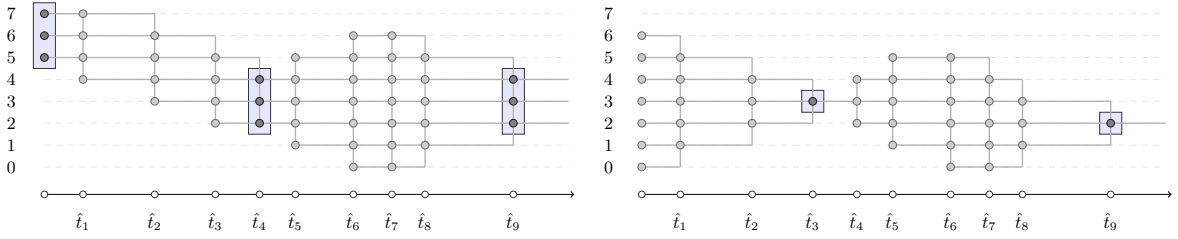
\begin{figure*}[h!]
\vskip 0.1in
\begin{center}
\resizebox{0.49\textwidth}{!}{%
\begin{tikzpicture}
\draw[->,line width=0.10mm] (0.1cm,-1.6cm) -- ++ (9.6,0cm);
\draw (0.1,-1.65cm) -- ++(0cm,0.1cm);
% State indicators and back lines
\foreach \i in {0,...,7} {
 \node at (-0.5,-1.1 + 0.4*\i) {\small $\i$}; \draw[dashed,draw=black!10!white,line width=0.10mm] (0.1,-1.1 + 0.4*\i) -- (9.6,-1.1 + 0.4*\i);
}
% Rectangles for blocks
\fill[draw=black!80!white,fill=blue!10!white] (-0.1,0.7) rectangle (0.3,1.9);
\fill[draw=black!80!white,fill=blue!10!white] (3.8,-0.5) rectangle (4.2,0.7);
\fill[draw=black!80!white,fill=blue!10!white] (8.4,-0.5) rectangle (8.8,0.7);

% Feasible state circles and paths
\foreach \i in {5,...,7} {
 \draw[-,draw=black!30!white,line width=0.25mm] (0.1,-1.1 + 0.4*\i) -- ++ (0.7,0); \filldraw[fill=black!50!white,draw=black!90!white] (0.1,-1.1 + 0.4*\i) circle [radius=0.07cm]; 
}
\draw[-,draw=black!30!white,line width=0.25mm] (0.8,0.5) -- (0.8,1.7);

\foreach \i in {4,...,7} {
 \draw[-,draw=black!30!white,line width=0.25mm] (0.8,-1.1 + 0.4*\i) -- ++ (1.3,0); \filldraw[fill=black!20!white,draw=black!60!white] (0.8,-1.1 + 0.4*\i) circle [radius=0.07cm]; 
}
\draw[-,draw=black!30!white,line width=0.25mm] (2.1,0.1) -- (2.1,1.7);

\foreach \i in {3,...,6} {
 \draw[-,draw=black!30!white,line width=0.25mm] (2.1,-1.1 + 0.4*\i) -- ++ (1.1,0); \filldraw[fill=black!20!white,draw=black!60!white] (2.1,-1.1 + 0.4*\i) circle [radius=0.07cm]; 
}
\draw[-,draw=black!30!white,line width=0.25mm] (3.2,-0.3) -- (3.2,1.3);

\foreach \i in {2,...,5} {
 \draw[-,draw=black!30!white,line width=0.25mm] (3.2,-1.1 + 0.4*\i) -- ++ (0.8,0); \filldraw[fill=black!20!white,draw=black!60!white] (3.2,-1.1 + 0.4*\i) circle [radius=0.07cm]; 
}
\draw[-,draw=black!30!white,line width=0.25mm] (4,-0.3) -- (4,0.9);

\foreach \i in {2,...,4} {
 \draw[-,draw=black!30!white,line width=0.25mm] (4,-1.1 + 0.4*\i) -- ++ (0.65,0); \filldraw[fill=black!50!white,draw=black!90!white] (4,-1.1 + 0.4*\i) circle [radius=0.07cm]; 
}
\draw[-,draw=black!30!white,line width=0.25mm] (4.65cm,-0.7cm) -- (4.65cm,0.9cm);

\foreach \i in {1,...,5} {
 \draw[-,draw=black!30!white,line width=0.25mm] (4.65,-1.1 + 0.4*\i) -- ++ (1.05,0); \filldraw[fill=black!20!white,draw=black!60!white] (4.65,-1.1 + 0.4*\i) circle [radius=0.07cm]; 
}

\draw[-,draw=black!30!white,line width=0.25mm] (5.7cm,-1.1cm) -- (5.7cm,1.3cm);

\foreach \i in {0,...,6} {
 \draw[-,draw=black!30!white,line width=0.25mm] (5.7,-1.1 + 0.4*\i) -- ++ (0.7,0); \filldraw[fill=black!20!white,draw=black!60!white] (5.7,-1.1 + 0.4*\i) circle [radius=0.07cm]; 
}
\draw[-,draw=black!30!white,line width=0.25mm] (6.4cm,-1.1cm) -- (6.4cm,1.3cm);

\foreach \i in {0,...,6} {
 \draw[-,draw=black!30!white,line width=0.25mm] (6.4,-1.1 + 0.4*\i) -- ++ (0.6,0); \filldraw[fill=black!20!white,draw=black!60!white] (6.4,-1.1 + 0.4*\i) circle [radius=0.07cm]; 
}
\draw[-,draw=black!30!white,line width=0.25mm] (7,-1.1cm) -- (7,1.3cm);

\foreach \i in {1,...,5} {
 \draw[-,draw=black!30!white,line width=0.25mm] (7,-1.1 + 0.4*\i) -- ++ (1.6,0); \filldraw[fill=black!20!white,draw=black!60!white] (7,-1.1 + 0.4*\i) circle [radius=0.07cm]; 
}
\draw[-,draw=black!30!white,line width=0.25mm] (8.6,-0.7cm) -- (8.6,0.9cm);

\foreach \i in {2,...,4} {
 \draw[-,draw=black!30!white,line width=0.25mm] (8.6,-1.1 + 0.4*\i) -- ++ (1,0); \filldraw[fill=black!50!white,draw=black!90!white] (8.6,-1.1 + 0.4*\i) circle [radius=0.07cm]; 
}

% Jump times
\foreach \i in {0.1,0.8,2.1,3.2,4,4.65,5.7,6.4,7,8.6} {
\filldraw[fill=black!00!white,draw=black!80!white] (\i,-1.6) circle [radius=0.06cm]; % Circle
}
% Jump letters
\node at (0.8,-2.1) {\small $\hat{t}_1$};
\node at (2.1,-2.1) {\small $\hat{t}_2$};
\node at (3.2,-2.1) {\small $\hat{t}_3$};
\node at (4,-2.1) {\small $\hat{t}_4$};
\node at (4.65,-2.1) {\small $\hat{t}_5$};
\node at (5.7,-2.1) {\small $\hat{t}_6$};
\node at (6.4,-2.1) {\small $\hat{t}_7$};
\node at (7,-2.1) {\small $\hat{t}_8$};
\node at (8.6,-2.1) {\small $\hat{t}_9$};

\end{tikzpicture}}\hfill
\resizebox{0.49\textwidth}{!}{%
\begin{tikzpicture}
\draw[->,line width=0.10mm] (0.1cm,-1.6cm) -- ++ (9.6,0cm);
\draw (0.1,-1.65cm) -- ++(0cm,0.1cm);
% State indicators and back lines
\foreach \i in {0,...,7} {
 \node at (-0.5,-1.1 + 0.4*\i) {\small $\i$}; \draw[dashed,draw=black!10!white,line width=0.10mm] (0.1,-1.1 + 0.4*\i) -- (9.6,-1.1 + 0.4*\i);
}

% Rectangles for blocks
\fill[draw=black!80!white,fill=blue!10!white] (3,-0.1) rectangle (3.4,0.3);
\fill[draw=black!80!white,fill=blue!10!white] (8.4,-0.5) rectangle (8.8,-0.1);

% Feasible state circles and paths
\foreach \i in {0,...,6} {
 \draw[-,draw=black!30!white,line width=0.25mm] (0.1,-1.1 + 0.4*\i) -- ++ (0.7,0); \filldraw[fill=black!20!white,draw=black!60!white] (0.1,-1.1 + 0.4*\i) circle [radius=0.07cm]; 
}
\draw[-,draw=black!30!white,line width=0.25mm] (0.8,-1.1) -- (0.8,1.3);

\foreach \i in {1,...,5} {
 \draw[-,draw=black!30!white,line width=0.25mm] (0.8,-1.1 + 0.4*\i) -- ++ (1.3,0); \filldraw[fill=black!20!white,draw=black!60!white] (0.8,-1.1 + 0.4*\i) circle [radius=0.07cm]; 
}
\draw[-,draw=black!30!white,line width=0.25mm] (2.1,-0.7) -- (2.1,0.9);

\foreach \i in {2,...,4} {
 \draw[-,draw=black!30!white,line width=0.25mm] (2.1,-1.1 + 0.4*\i) -- ++ (1.1,0); \filldraw[fill=black!20!white,draw=black!60!white] (2.1,-1.1 + 0.4*\i) circle [radius=0.07cm]; 
}
\draw[-,draw=black!30!white,line width=0.25mm] (3.2,-0.3) -- (3.2,0.5);

\foreach \i in {3,...,3} {
 \draw[-,draw=black!30!white,line width=0.25mm] (3.2,-1.1 + 0.4*\i) -- ++ (0.8,0); \filldraw[fill=black!50!white,draw=black!90!white] (3.2,-1.1 + 0.4*\i) circle [radius=0.07cm]; 
}
\draw[-,draw=black!30!white,line width=0.25mm] (4,-0.3) -- (4,0.5);

\foreach \i in {2,...,4} {
 \draw[-,draw=black!30!white,line width=0.25mm] (4,-1.1 + 0.4*\i) -- ++ (0.65,0); \filldraw[fill=black!20!white,draw=black!60!white] (4,-1.1 + 0.4*\i) circle [radius=0.07cm]; 
}
\draw[-,draw=black!30!white,line width=0.25mm] (4.65cm,-0.7cm) -- (4.65cm,0.9cm);

\foreach \i in {1,...,5} {
 \draw[-,draw=black!30!white,line width=0.25mm] (4.65,-1.1 + 0.4*\i) -- ++ (1.05,0); \filldraw[fill=black!20!white,draw=black!60!white] (4.65,-1.1 + 0.4*\i) circle [radius=0.07cm]; 
}
\draw[-,draw=black!30!white,line width=0.25mm] (5.7cm,-1.1cm) -- (5.7cm,0.9cm);

\foreach \i in {0,...,5} {
 \draw[-,draw=black!30!white,line width=0.25mm] (5.7,-1.1 + 0.4*\i) -- ++ (0.7,0); \filldraw[fill=black!20!white,draw=black!60!white] (5.7,-1.1 + 0.4*\i) circle [radius=0.07cm]; 
}
\draw[-,draw=black!30!white,line width=0.25mm] (6.4cm,-1.1cm) -- (6.4cm,0.9cm);

\foreach \i in {0,...,4} {
 \draw[-,draw=black!30!white,line width=0.25mm] (6.4,-1.1 + 0.4*\i) -- ++ (0.6,0); \filldraw[fill=black!20!white,draw=black!60!white] (6.4,-1.1 + 0.4*\i) circle [radius=0.07cm]; 
}
\draw[-,draw=black!30!white,line width=0.25mm] (7,-1.1cm) -- (7,0.5cm);

\foreach \i in {1,...,3} {
 \draw[-,draw=black!30!white,line width=0.25mm] (7,-1.1 + 0.4*\i) -- ++ (1.6,0); \filldraw[fill=black!20!white,draw=black!60!white] (7,-1.1 + 0.4*\i) circle [radius=0.07cm]; 
}
\draw[-,draw=black!30!white,line width=0.25mm] (8.6,-0.7cm) -- (8.6,0.1cm);

\foreach \i in {2,...,2} {
 \draw[-,draw=black!30!white,line width=0.25mm] (8.6,-1.1 + 0.4*\i) -- ++ (1,0); \filldraw[fill=black!50!white,draw=black!90!white] (8.6,-1.1 + 0.4*\i) circle [radius=0.07cm]; 
}

% Jump times
\foreach \i in {0.1,0.8,2.1,3.2,4,4.65,5.7,6.4,7,8.6} {
\filldraw[fill=black!00!white,draw=black!80!white] (\i,-1.6) circle [radius=0.06cm]; % Circle
}
% Jump letters
\node at (0.8,-2.1) {\small $\hat{t}_1$};
\node at (2.1,-2.1) {\small $\hat{t}_2$};
\node at (3.2,-2.1) {\small $\hat{t}_3$};
\node at (4,-2.1) {\small $\hat{t}_4$};
\node at (4.65,-2.1) {\small $\hat{t}_5$};
\node at (5.7,-2.1) {\small $\hat{t}_6$};
\node at (6.4,-2.1) {\small $\hat{t}_7$};
\node at (7,-2.1) {\small $\hat{t}_8$};
\node at (8.6,-2.1) {\small $\hat{t}_9$};

\end{tikzpicture}}
\vskip 0in
\caption{Toy diagrams with compatible birth-death paths given auxiliary variables at lagged times. Greyed circles represent the reach of the jump process across marginal time points. Left, a partitioning the augmentation task by defining ranges to transition; right, analogue technique using mappings to states.} 
\label{sampleCompatibles}
\end{center}
\vskip -0.1in
\end{figure*}

\noindent \textbf{Sampling end-point conditioned bridges.} To further simplify data-augmentation, the above partition may be defined s.t. $\mathcal{J}=S$ and $\mathcal{T}(x,x')=x'$, for all $x,x'\in\mathcal{S}$. Thus, a new \textit{compatible} sequence $\hat{\boldsymbol{x}}|\hat{\boldsymbol{t}},\boldsymbol{u}$ will be \textit{locked} at times $\{\hat{t}_i\}_{i=l,2l,\dots}$, and forward-backward procedures are independent across subintervals $[\hat{t}_{i-l},\hat{t}_{i})$. The approach is depicted within Figure \ref{sampleCompatibles} (right), where our algorithms will sample bridges across the auxiliary \textit{mapped} states.

In both cases, methods easily generalise to multivariate process settings, and trajectories may straightforwardly be conditioned on data by following previously introduced conventions. Noticeably, the correlation across subsequent trajectory samples for $X$ will be drastically increased; yet, this is compensated by considerably simplified procedures, and we below report on efficiency results with algorithmic implementations for a predator-prey model. C++ repositories to reproduce these results may be found on \href{https://github.com/IkerPerez/scalableSamplingMJPs}{github.com/IkerPerez/scalableSamplingMJPs}.

\subsection{Example 3: An stochastic Lotka-Volterra model}

A \textit{Lotka-Volterra} model \citep{boys2008bayesian} describes predator-prey interactions among two biological species. Here, a non-stationary process $X=(X^1_{t},X^2_{t})_{t\geq 0}$  evolves stochastically according to rates
\begin{align*}
Q_{(x_1,x_2),(x_1+1,x_2)}(t) = \alpha(t) \cdot x_1, &\quad Q_{(x_1,x_2),(x_1-1,x_2)}(t) = \beta(t) \cdot x_1 \cdot x_2, \\
Q_{(x_1,x_2),(x_1,x_2+1)}(t) = \delta(t) \cdot x_1 \cdot x_2, &\quad Q_{(x_1,x_2),(x_1,x_2-1)}(t) = \gamma(t) \cdot x_2.
\end{align*}
Thus, $(X^1_{t})_{t\geq 0}$ refers to the prey population and $(X^2_{t})_{t\geq 0}$ is the predator counterpart. Within the following inferential task, functionals decompose between interaction parameters and a seasonality modifier; so that $\alpha(t)=\alpha \cdot r(t)$, $\beta(t)=\beta\cdot r(t)$ and so on, for some (known) $r(t)\in[1,2], \, t\geq 0$. Additionally, an initial state is (uniformly) randomized between (bounded) populations with capacity $N\in\mathbb{N}_0$.

\noindent\textbf{State measurements and inference.} For simulated datasets (at various population bounds), we produce noisy state observations s.t. $O_r\sim\mathcal{N}(X_{t_r},N/25)$ at equally spaced times $t_r\in[0,T]$, $r\geq 1$. Throughout, parameter choices $\alpha=0.125$, $\beta=\delta=0.005$ and $\gamma = 0.1$ are assigned, with $r(t) = 3/2 + cos(2\pi \cdot t/T)/2$. Similarly to previous examples, backwards inference on the rates proceeds by data augmentation of trajectory densities $f_{X}(\boldsymbol{t},\boldsymbol{x}|\boldsymbol{Q})$ in \eqref{pathProbs}, along with draws from the posterior $f(\alpha,\beta,\delta,\gamma|\boldsymbol{t},\boldsymbol{x})$ (which factors across the individual rates). In Figure \ref{fig:preypred} we find a sample representation of augmented prey and predator population trajectories (red lines), along with observations (dark circles). There, dashed lines represent posterior mean-average paths, and grey areas are for $\%95$ credible intervals.
\begin{figure}[h!]
  \centering
   \includegraphics[width=\linewidth]{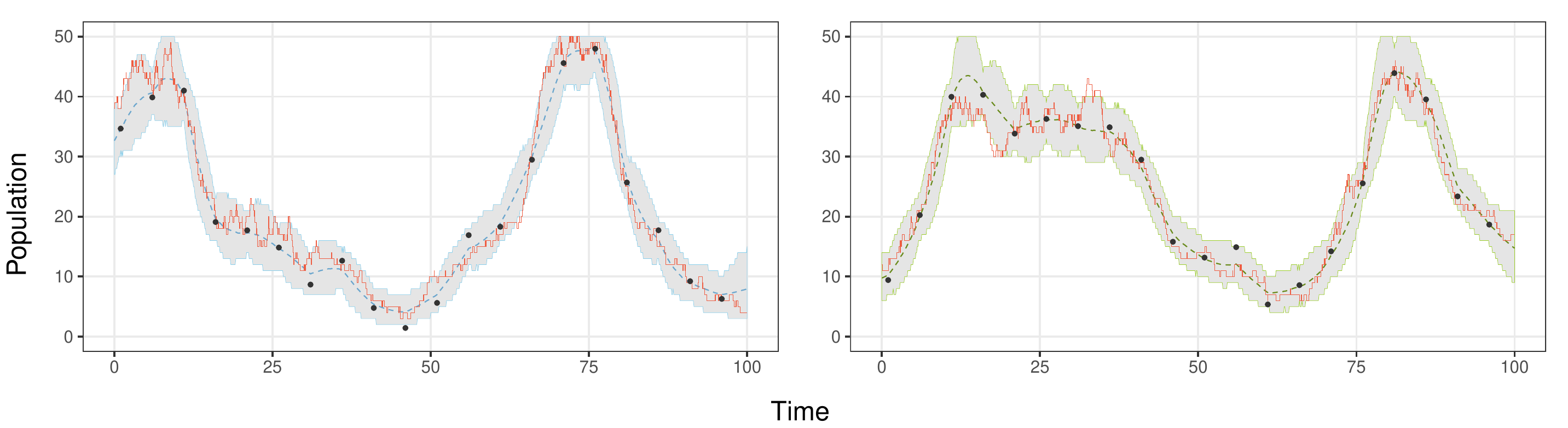}
  \caption{In red, augmented prey (left) and predator (right) population paths. Observations are represented by dark circles. Dashed lines and greyed areas are for posterior mean-average paths and $\%95$ credible intervals.}
  \label{fig:preypred}
\end{figure}

Efficiency results comparing different augmentation methods are shown in Figure \ref{fig:BDratios}. As before, the diagrams display ratios (along with confidence intervals) in effective sample sizes (scaled for computation time). The horizontal axes represent bounds imposed over \textit{each} marginal biological species; thus, the real explorable state space tested increases up to $120^2=14.400$. The left diagram corresponds to \textit{average} effective samples across parameter rates $\alpha, \beta, \delta, \gamma$; instead, the right diagram represents ratios on the \textit{minimum} effective samples across the $4$ parameters. In both instances, the reference line (in red) at level $1$ corresponds to sampling end-point conditioned \textit{bridges} across (randomised) intervals with lag $l=0.5\cdot N$, built on top of Algorithm \ref{algoMainMJP} with operator $\psi(t,x) = 0.5 \cdot |Q_x(t)|$, $(t,x)\in[0,T]\times\mathcal{S}$. The green line is for the same procedure, but using a lag $l=0.75\cdot N$; and the blue line represents a plain implementation of Algorithm \ref{algoMainMJP} without auxiliary variables driving an increase in efficiency. Finally, in black we observe efficiency results for a vanilla uniformization procedure with dominating rate $\Omega = 1.5 \cdot\max_{x\in\mathcal{S}} \sup_{t\in[0,T]}|Q_x(t)|$. 
\begin{figure}[h!]
\centering
\includegraphics[width=\linewidth]{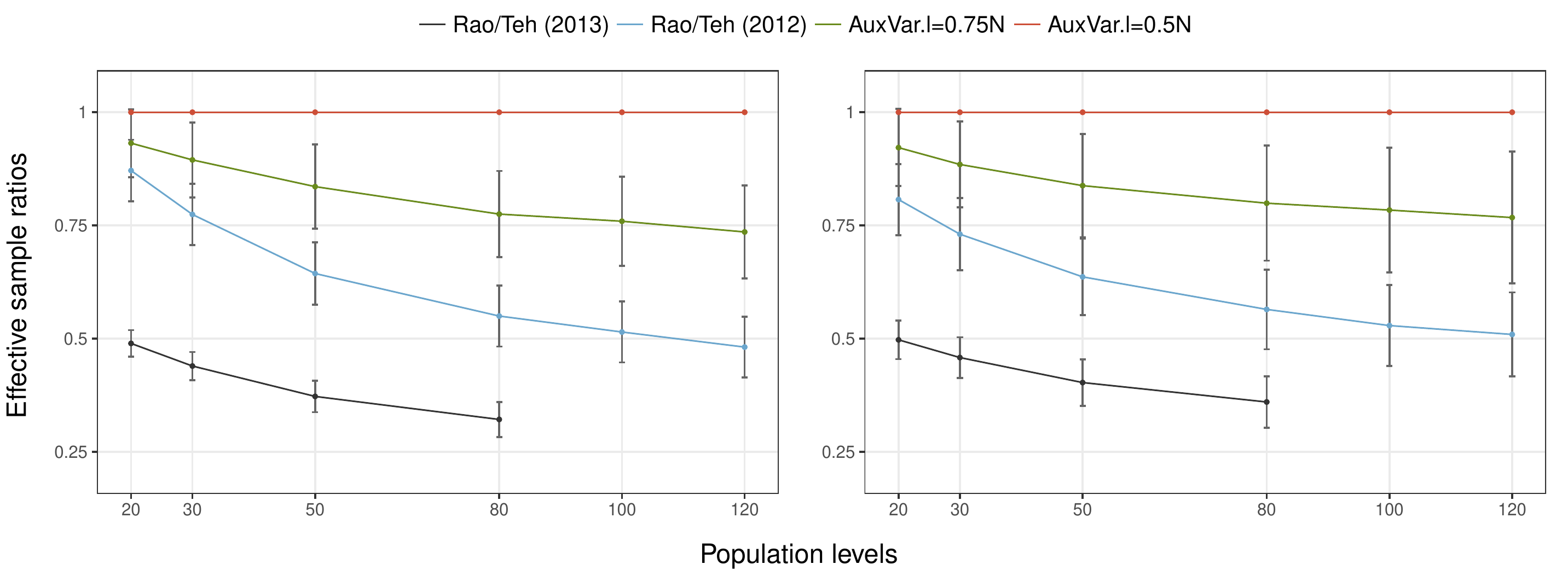}
  \caption{Ratios in effective sample sizes versus a benchmark auxiliary-variable procedure. The left diagram corresponds to mean sample sizes across $\alpha, \beta, \delta, \gamma$; on the right, equivalent ratios for minimum sample sizes.}
  \label{fig:BDratios}
\end{figure}

Results are consistent with auxiliary-variable methods introduced in Section \ref{ODEsection}. In all cases, the various alternatives introduced in this paper can (i) scale inferential uniformization-based inferential frameworks to much larger problems, and (ii) drive significant increases in computational efficiency.

\section{Discussion}

This paper has presented a novel and comprehensive framework for the design of scalable data-augmentation procedures, suitable for use within \textit{exact} Bayesian inferential tasks, and applicable to birth-death, epidemic or predator-prey systems, to name only a few. The need for auxiliary-variable augmentation designs as presented here is justified by the limitations in existing state-of-the-art uniformization-based approaches \citep[see][and references therein]{hobolth2009,rao2012mcmc,rao13a,miasojedow2015particle,georgoulas2017unbiased,zhang2017efficient}, which are inefficient, unadaptable or unusable with mid-sized or large population systems, often associated with multiple types of observational data.
 
We have reported on results that apply multiple MCMC algorithm construction to problems of broad statistical interest, and demonstrated prior claims on efficiency and scalability benefits, by direct comparison to current benchmark methods in the literature. Finally, since the presented framework builds on uniformized representations of non-stationary jump processes, we note that the various techniques introduced in this paper will be only applicably to purely Markovian processes.

\bibliographystyle{apa}
\setlength{\bibsep}{2pt} % or use whatever dimension you want
\renewcommand{\bibfont}{\small}
\bibliography{bibliography}

\end{document}